\newcommand{\cut}[1]{}
\newcommand{\shrink}{\vskip -2ex}
\newtheorem{theorem}{Theorem}
\newtheorem{example}{Example}
\newtheorem{lemma}{Lemma}
\newtheorem{assumption}{Assumption}
\def\inprobHIGH{\,{\buildrel p \over \longrightarrow}\,}
\def\inprob{\,{\inprobHIGH}\,}
\def\as{\,{\buildrel as \over \longrightarrow}\,}
\def\indistr{\,{\buildrel d \over \longrightarrow}\,}
\DeclareMathOperator{\E}{E}
\DeclareMathOperator{\Var}{Var}
\DeclareMathOperator{\Cov}{Cov}
\begin{document}

\title{Uncertainty Aware Query Execution Time Prediction}

\author{
\alignauthor{
Wentao Wu$^{\dagger}$
\hspace{.8cm}
Xi Wu$^\dagger$
\hspace{.8cm}
Hakan Hac{\i}g\"um\"u\c{s}$^\ddagger$
\hspace{.8cm}
Jeffrey F. Naughton$^\dagger$}\\
\vspace{.2cm}
       $^{\dagger}$Department of Computer Sciences, University of Wisconsin-Madison\\
       $^{\ddagger}$NEC Laboratories America\\
       \vspace{.2cm}
       \{wentaowu, xiwu, naughton\}@cs.wisc.edu, hakan@nec-labs.com
}

\maketitle

\begin{abstract}
Predicting query execution time is a fundamental issue underlying many database management tasks. Existing predictors rely on information such as cardinality estimates and system performance constants that are difficult to know exactly. As a result, accurate prediction still remains elusive for many queries. However, existing predictors provide a single, point estimate of the true execution time, but fail to characterize the uncertainty in the prediction. In this paper, we take a first step towards providing uncertainty information along with query execution time predictions. We use the query optimizer's cost model to represent the query execution time as a function of the selectivities of operators in the query plan as well as the constants that describe the cost of CPU and I/O operations in the system. By treating these quantities as random variables rather than constants, we show that with low overhead we can infer the distribution of likely prediction errors. We further show that the estimated prediction errors by our proposed techniques are strongly correlated with the actual prediction errors.
\end{abstract}

\section{Introduction}

The problem of predicting query execution time has received a great deal of recent research attention (e.g.,~\cite{AhmadDAB11-edbt,AkdereCRUZ12-brown-icde,DugganCPU11,Ganapathi-berkeley09,WuCHN13,WuCZTHN13}). Knowledge about query execution time is essential to many important database management issues, including query optimization, admission control~\cite{Tozer-QCop,Xiong-ActiveSLA}, query scheduling~\cite{ChiHHN13}, and system sizing~\cite{Wasserman-dbSizing}. Existing predictors rely on information such as cardinality estimates and system performance constants that are difficult to know exactly. As a result, accurate prediction remains elusive for many queries. However, existing predictors provide a single, point estimate of the true execution time, but fail to characterize the \emph{uncertainty} in the prediction.

It is a general principle that if there is uncertainty in the estimate of a quantity, systems or individuals using the estimate can benefit from information about this uncertainty. (As a simple but ubiquitous example, opinion polls cannot be reliably interpreted without considering the uncertainty bounds on their results.) In view of this, it is somewhat surprising that something as foundational as query running time estimation typically does not provide any information about the uncertainty embedded in the estimates.

There is already some early work indicating that providing this uncertainty information could be useful.  For example, in approximate
query answering~\cite{Hellerstein-online97,Jermaine-dbo07}, approximate query results are accompanied by error bars to indicate the confidence in the estimates. It stands to reason that other user-facing running time estimation tasks, for example, query progress indicators~\cite{ChaudhuriNR04,LuoNEW04}, could also benefit from similar mechanisms regarding uncertainty.  Other examples include robust query processing and optimization techniques (e.g.,~\cite{BabcockC05,ChuHS99,Graefe11,GraefeW89,MarklRSLP04,UnterbrunnerGAFK09}) and distribution-based query schedulers~\cite{ChiHHN13}. We suspect that if uncertainty information were widely available many more applications would emerge.

In this paper, we take a first step towards providing uncertainty information along with query execution time predictions. In particular, rather than just reporting a point estimate, we provide a distribution of likely running times. There is a subtlety in semantics involved here --- the issue is not ``if we run this query 100 times what do we think the distribution of running times will be?'' Rather, we are reporting
``what are the likelihoods that the actual running time of this query would fall into certain confidence intervals?'' As a concrete example, the distribution conveys information such as ``I believe, with probability 70\%, the running time of this query should be between 10s and 20s.''

Building on top of our previous work~\cite{WuCZTHN13}, we use query optimizers' cost models to represent the query execution time as a function of selectivities of operators in the query plan as well as basic system performance parameters such as the unit cost of a single CPU or I/O operation. However, our approach here is different from that in~\cite{WuCZTHN13} --- we treat these quantities as random variables rather than fixed constants. We then use sampling based approaches to estimate the distributions of these random variables. Based on that, we further develop analytic techniques to infer the distribution of likely running times.

In more detail, for specificity consider the cost model used by the query optimizer of PostgreSQL:

\begin{example}[PostgreSQL's Cost Model]
PostgreSQL\\ estimates the execution runtime overhead $t_O$ of an operator $O$ (e.g., scan, sort, join, etc.) as follows:
\begin{equation}\label{eq:cost-model}
t_O=n_s\cdot c_s + n_r\cdot c_r + n_t\cdot c_t + n_i\cdot c_i + n_o\cdot c_o.
\end{equation}
\end{example}
Here the $c$'s are \emph{cost units} described in Table~\ref{tab:cost-units}. Accordingly, the $n$'s are then the number of pages sequentially scanned, the number of pages randomly accessed, and so on, during the execution of $O$.
The total estimated overhead $t_q$ of a query $q$ is simply the sum of the costs of the individual operators in its query plan. Moreover, as illustrated in~\cite{WuCZTHN13}, the $n$'s are actually functions of the input/output cardinalities (or equivalently, selectivities) of the operators. As a result, we can further represent $t_q$ as a function of the cost units $\mathbf{c}$ and the selectivities $\mathbf{X}$, namely,
\begin{equation}\label{eq:tquery}
t_q=\sum_{O\in Plan(q)}t_O=g(\mathbf{c},\mathbf{X}).
\end{equation}

\begin{table}
\centering
\begin{tabular}{|l|l|}
\hline
$c$ & Description \\
\hline
$c_s$ & The I/O cost to \emph{sequentially} access a page\\
$c_r$ & The I/O cost to \emph{randomly} access a page\\
$c_t$ & The CPU cost to process a \emph{tuple} \\
$c_i$ & The CPU cost to process a tuple via \emph{index} access\\
$c_o$ & The CPU cost to perform an \emph{operation} (e.g., hash)\\
\hline
\end{tabular}
\caption{Cost units in PostgreSQL's cost model}
\label{tab:cost-units}
\shrink
\end{table}

Perfect predictions therefore rely on three assumptions: (i) the $c$'s are accurate; (ii) the $X$'s are accurate; and (iii) $g$ is itself accurate. Unfortunately, none of these holds in practice. First, the $c$'s are inherently random. For example, the value of $c_r$ may vary for different disk pages accessed by a query, depending on where the pages are located on disk.
Second, accurate selectivity estimation is often challenging, though significant progress has been made.
Third, the equations and functions modeling query execution make approximations and simplifications so they could make errors. For instance, Equation~(\ref{eq:cost-model}) does not consider the possible interleaving of CPU and I/O operations during runtime.

To quantify the uncertainty in the prediction, we therefore need to consider potential errors in all three parts of the running time estimation formula. It turns out that the errors in the $c$'s, the $X$'s, and $g$ are inherently different. The errors in the $c$'s result from fluctuations in the system state and/or variances in the way the system performs for different parts of different queries. (That is, for example, the cost of a random I/O may differ substantially from operator to operator and from query to query.) We therefore model the $c$'s as random variables and extend our previous calibration framework~\cite{WuCZTHN13} to obtain their distributions. The errors in the $X$'s arise from selectivity estimation errors. We therefore also model these as random variables and consider sampling-based approaches to estimate their variance. The errors in $g$, however, result from simplifications or errors made by the designer of the cost model and are out of the scope of this work. We show in our experiments that even imperfect cost model functions are useful for estimating uncertainty in predictions.

Based on the idea of treating the $c$'s and the $X$'s as random variables rather than constants, the predicted execution time $t_q$ is then also a random variable so that we can estimate its distribution. A couple of challenges arise immediately. First, unlike the case of providing a point estimate of $t_q$, knowing that $t_q$ is ``some'' function of the $c$'s and the $X$'s is insufficient if we want to infer the distribution of $t_q$ --- we need to know the \emph{explicit} form of $g$. By Equation~(\ref{eq:tquery}), $g$ relies on cost functions that map the $X$'s to the $n$'s. As a result, for concreteness we have to choose some specific cost model. Here, for simplicity and generality, we leverage the notion of \emph{logical} cost functions~\cite{Du-calib92} rather than the cost functions of any specific optimizer. The observation is that the costs of an operator can be specified according to its logical execution. For instance, the number of CPU operations of the in-memory \emph{sort} operator could be specified as $n_o=aN\log N$, where $N$ is the input cardinality. Second, while we can show that the distribution of $t_q$ is asymptotically normal based on our current ways of modeling the $c$'s and the $X$'s, determining the parameters of the normal distribution (i.e., the mean and variance) is difficult for non-trivial queries with deep query trees. The challenge arises from correlations between selectivity estimates derived by using shared samples. We present a detailed analysis of the correlations and develop techniques to either directly compute or provide upper bounds for the covariances with respect to the presence of correlations. Finally, providing estimates to distributions of likely running times is desirable only if it can be achieved with low overhead. We show that it is the case for our proposed techniques --- the overhead is almost the same as that of the predictor in~\cite{WuCZTHN13} which only provides point estimates.

Since our approach makes a number of approximations when computing the distribution of running time estimates, an important question is how
accurate the estimated distribution is. An intuitively appealing experiment is the following: run the query multiple times, measure the
distribution of its running times, and see if this matches the estimated distribution.  But this is not a reasonable approach due to
the subtlety we mentioned earlier. The estimated distribution we calculate is not the expected distribution of the actual query running
time, it is the distribution of running times our estimator expects due to uncertainties in its estimation process. To see this another
way, note that cardinality estimation error is a major source of running time estimation error. But when the query is actually run, it does not appear at all --- the query execution of course observes the true cardinalities, which are identical every time it is run.

Speaking informally, what our predicted running time distribution captures is the ``self-awareness'' of our estimator. Suppose that embedded in the estimate is a dependence on what our estimator knows is a very inaccurate estimate. Then the estimator knows that while it gives a specific point estimate for the running time (the mean of a distribution), it is likely that the true running time will be far away from the estimate, and it captures this by indicating a distribution with a large variance.

So our task in evaluating our approach is to answer the following question: how closely does the variance of our estimated distribution
of running times correspond to the observed errors in our estimates (when compared with true running times)? To answer this question, we
estimate the running times for and run a large number of different queries and test the agreement between the observed errors and the
predicted distribution of running times, where ``agreement'' means that larger variations correspond to more inaccurate estimates.

In more detail, we report two metrics over a large number of queries: (M1) the correlation between the standard deviations of the estimated
distributions and the actual prediction errors; and (M2) the proximity between the inferred and observed distributions of prediction
errors. We show that (R1) the correlation is strong; and (R2) the two distributions are close. Intuitively, (R1) is \emph{qualitative}; it
suggests that one can judge if the prediction errors will be small or large based on the standard deviations of the estimated
distributions. (R2) is more \emph{quantitative}; it further suggests that the likelihoods of prediction errors are specified by the distributions as well. We therefore conclude that the estimated distributions do a reasonable job as indicators of prediction errors.

We start by presenting terminology and notation used throughout the paper in Section~\ref{sec:preliminary}. We then present the details of how to estimate the distributions of the $c$'s and the $X$'s (Section~\ref{sec:distribution}), the explicit form of $g$ (Section~\ref{sec:costfunc}), and the distribution of $t_q$ (Section~\ref{sec:uncertainty}). We further present experimental evaluation results in Section~\ref{sec:experiment}, discuss related work in Section~\ref{sec:relatedwork}, and conclude the paper in Section~\ref{sec:conclusion}.

\section{Preliminaries}\label{sec:preliminary}

In most current DBMS implementations, the operators are either unary or binary. Therefore, we can model a query plan with a rooted \emph{binary tree}. Consider an operator $O$ in the query plan. We use $O_l$ and $O_r$ to represent its \emph{left} and \emph{right} child operator, and use $N_l$ and $N_r$ to denote its left and right input cardinality. If $O$ is unary, then $O_r$ does not exist and thus $N_r=0$. We further use $M$ to denote $O$'s output cardinality.

Let $\mathcal{T}$ be the subtree rooted at the operator $O$, and let $\mathcal{R}$ be the (multi)set of relations accessed by the leaf nodes of $\mathcal{T}$. Note that the leaf nodes in a query plan must be \emph{scan} operators that access the underlying tables.\footnote{We use ``\emph{relation}'' and ``\emph{table}'' interchangeably in this paper since our discussion does not depend on the \emph{set}/\emph{bag} semantics.} We call $\mathcal{R}$ the \emph{leaf tables} of $O$. Let $|\mathcal{R}|=\prod_{R\in\mathcal{R}}|R|$. We define the \emph{selectivity} $X$ of $O$ to be:
\begin{equation}\label{eq:selectivity}
X=\frac{M}{|\mathcal{R}|}=\frac{M}{\prod_{R\in\mathcal{R}}|R|}.
\end{equation}

\begin{example}[Selectivity]
Consider the query plan in Figure~\ref{fig:query-plan}. $O_1$, $O_2$, and $O_3$ are scan operators that access three underlying tables $R_1$, $R_2$, and $R_3$, and $O_4$ and $O_5$ are join operators. The selectivity of $O_1$, for instance, is $X_1=\frac{M_1}{|R_1|}$, whereas the selectivity of $O_4$ is $X_4=\frac{M_4}{|R_1|\cdot|R_2|}$.
\end{example}

\begin{figure}
\centering
\includegraphics[width=0.9\columnwidth]{./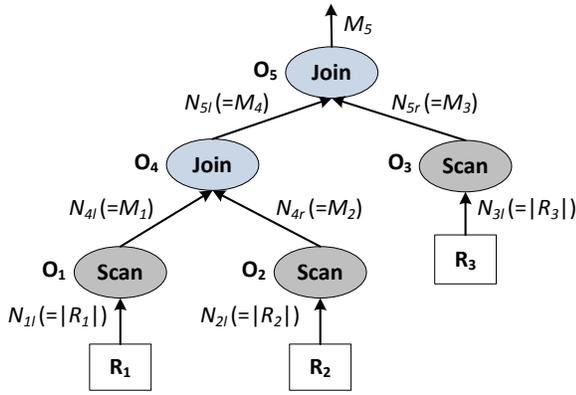}
\caption{Example query plan}
\label{fig:query-plan}
\shrink
\end{figure}

We summarize the above notation in Table~\ref{tab:notations} for convenience of reference. Since the $n$'s in Equation~(\ref{eq:cost-model}) are functions of input/output cardinalities of the operators (we discuss different types of cost functions in Section~\ref{sec:costfunc:types}), it is clear that the $n$'s are also functions of the selectivities (i.e., the $X$'s) defined here. Based on Equation~(\ref{eq:tquery}), $t_q$ is therefore a function of the $c$'s and the $X$'s. We next discuss how to measure the uncertainties in these parameters.

\begin{table}
\centering
\begin{tabular}{|l|l|}
\hline
Notation & Description \\
\hline
$O$ & An operator in the query plan\\
$O_l$ ($O_r$) & The left (right) child operator of $O$\\
$N_l$ ($N_r$) & The left (right) input cardinality of $O$\\
$M$ & The output cardinality of $O$\\
$\mathcal{R}$ & The leaf tables of $O$\\
$X$ & The selectivity of $O$\\
$\mathcal{T}$ & The subtree rooted at $O$\\
$Desc(O)$ & The descendant operators of $O$ in $\mathcal{T}$\\
\hline
\end{tabular}
\caption{Terminology and notation}
\label{tab:notations}
\shrink
\end{table}

\section{Input Distributions}\label{sec:distribution}

To learn the distribution of $t_q$, we first need to know the distributions of the $c$'s and the $X$'s. We do this by extending the framework in our previous work~\cite{WuCZTHN13}.

\subsection{Distributions of the $c$'s}

In~\cite{WuCZTHN13}, we designed dedicated calibration queries for each $c$. Consider the following example:

\begin{example}[Calibration Query]
Suppose that we want to know the value of $c_t$, namely, the CPU cost of processing one tuple. We can use the calibration query \verb|SELECT * FROM R|, where \verb|R| is some table whose size is known and is loaded into memory. Since this query only involves $c_t$, its execution time $\tau$ can be expressed as $\tau=|\verb|R||\cdot c_t$. We can then run the query, record $\tau$, and compute $c_t$ from this equation.
\end{example}

Note that we can use different \verb|R|'s here, and different \verb|R|'s may give us different $c_t$'s. We can think of these observed values as i.i.d. samples from the distribution of $c_t$, and in~\cite{WuCZTHN13} we used the sample mean as our estimate of $c_t$. To quantify the uncertainty in $c_t$, it would make more sense to treat $c_t$ as a random variable rather than a constant. We assume that the distribution of $c_t$ is normal (i.e., Gaussian), for intuitively the CPU speed is likely to be stable and centered around its mean value. Now let $c_t\sim\mathcal{N}(\mu_t, \sigma_t^2)$. It is then a common practice to use the mean and variance of the observed $c_t$'s as estimates for $\mu_t$ and $\sigma_t^2$.

In general, we can apply similar arguments to all the five cost units. Due to space limitations, readers are referred to~\cite{WuCZTHN13} for more details on the calibration procedure. In~\cite{WuCZTHN13} we only calculated the mean, not the variance, but the extension to the variance is straightforward.

\subsection{Distributions of the $X$'s}

The uncertainties in the $X$'s are quite different from those in the $c$'s. The uncertainties in the $c$'s are due to unavoidable fluctuations in hardware execution speeds. In other words, the $c$'s are inherently random. However, the $X$'s are actually fixed numbers --- if we run the query we should always obtain the same ground truths for the $X$'s. The uncertainties in the $X$'s really come from the fact that so far we do not have a perfect selectivity estimator. How to quantify the uncertainties in the $X$'s therefore depends on the nature of the selectivity estimator used. Here we extend the sampling-based approach used in~\cite{WuCZTHN13}, which was first proposed by Haas et al.~\cite{Haas-sample96}. It provides a mathematically rigorous way to quantify potential errors in selectivity estimates. It remains interesting future work to investigate the possibility of extending other alternative estimators such as those based on histograms.

\subsubsection{A Sampling-Based Selectivity Estimator}\label{sec:distribution:sel:sn2}

Suppose that we have a database consisting of $K$ relations $R_1$, ..., $R_K$, where
$R_k$ is partitioned into $m_k$ blocks each with size $N_k$, namely, $|R_k|=m_kN_k$.
Without loss of generality, let $q$ be a selection-join query over $R_1$, ..., $R_K$,
and let $B(k, j)$ be the
$j$-th block of relation $k$ ($1\leq j\leq m_k$, and $1\leq k\leq
K$). Define $$\mathbf{B}(L_{1,i_1}, ..., L_{K,i_K})=B(1,L_{1,i_1})\times\cdots\times B(K,L_{K,i_K}),$$ where
$B(k,L_{k,i_k})$ is the block (with index $L_{k,i_k}$) randomly picked
from the relation $R_k$ in the $i_k$-th sampling step. After $n$ steps, we can obtain $n^K$ such samples (notice that these samples are not independent), and the estimator is defined as
\begin{equation}\label{eq:estimator-cp}
\rho_n=\frac{1}{n^K}\sum_{i_1=1}^{n}\cdots\sum_{i_K=1}^{n}\rho_{\mathbf{B}(L_{1,i_1},\cdots,L_{K,i_K})}.
\end{equation}
Here $\rho_n$ is the estimated selectivity of $q$ (after $n$ sampling steps), and $\rho_{\mathbf{B}}$ is the observed selectivity of $q$ over the sample $\mathbf{B}$. This estimator is shown to be both unbiased and strongly consistent for the actual selectivity $\rho$ of $q$~\cite{Haas-sample96,WuCZTHN13}.\footnote{Strong consistency is also called \emph{almost sure convergence} in probability theory (denoted as ``a.s.''). It means that the more samples we take, the closer $\rho_n$ is to $\rho$.}

By applying the Central Limit Theorem, we can show that
$$\frac{\sqrt{n}}{\sigma}\big(\rho_n-\rho\big)\indistr N(0,1).$$
That is, the distribution of $\rho_n$ is approximately normal after a large number of sampling steps~\cite{Haas-sample96}: $\rho_n\sim \mathcal{N}(\rho, \sigma_n^2)$, where $\sigma_n^2=\sigma^2 / n$ and $\sigma^2=\lim_{n\to\infty}n\Var[\rho_n]$. We present a more detailed study of $\Var[\rho_n]$ in Appendix~\ref{sec:proofs:theorem:variance}.

However, here $\sigma_n^2$ is unknown since $\sigma^2$ is unknown. In~\cite{Haas-sample96}, the authors further proposed the following estimator for $\sigma^2$:
\begin{equation}\label{eq:sn-square}
S_n^2=\sum_{k=1}^{K}\bigg(\frac{1}{n-1}\sum_{j=1}^n(Q_{k,j,n}/n^{K-1}-\rho_n)^2\bigg),
\end{equation}
for $n\geq 2$ (we set $S_1^2=0$). Here
\begin{equation}\label{eq:Q}
Q_{k,j,n}=\sum_{(i_1,...,i_K)\in\Omega_k^{(n)}(j)}\rho_{\mathbf{B}(L_{1,i_1},...,L_{K,i_K})},
\end{equation}
where $\Omega_k^{(n)}(j)=\{(i_1,...,i_K)\in\{1,...,n\}^K:i_k=j\}$. It can be shown that $\lim_{n\to\infty}S_n^2=\sigma^2$ a.s. As a result, it is reasonable to approximate $\sigma^2$ with $S_n^2$ when $n$ is large. So $\sigma_n^2\approx S_n^2 / n$.

\subsubsection{Efficient Computation of $S_n^2$}\label{sec:distribution:Sn2}

Efficiency is crucial for a predictor to be practically useful. We have discussed efficient implementation of $\rho_n$ in~\cite{WuCZTHN13}. Taking samples at runtime might not be acceptable since it will result in too many random disk I/O's. Therefore, we instead take samples off-line and store them as materialized views (i.e., sample tables). In the following presentation, we use $R^s$ to denote the sample table of a relation $R$. In~\cite{WuCZTHN13}, we further showed that, given a selection-join query, we can estimate the selectivities of all the selections and joins by running the original query plan over the sample tables once. The trick is that, since the block size is not specified when partitioning the relations, it could be arbitrary. We can then let a block be a single tuple so that the cross-product of sample blocks is reduced to the cross-product of sample tuples.

\begin{example}[Implementation of $\rho_n$]\label{ex:sel-est}
Let us consider the query plan in Figure~\ref{fig:query-plan} again. Based on the tuple-level partitioning scheme, by Equation~(\ref{eq:estimator-cp}) we can simply estimate $X_4$ and $X_5$ as
$$\widehat{X}_4=\frac{|R_1^s\bowtie R_2^s|}{|R_1^s|\cdot |R_2^s|}\quad \textrm{and} \quad \widehat{X}_5=\frac{|R_1^s\bowtie R_2^s\bowtie R_3^s|}{|R_1^s|\cdot |R_2^s|\cdot |R_3^s|}.$$
Also note that we can compute the two numerators by running the query plan over the sample relations $R_1^s$, $R_2^s$, and $R_3^s$ once. That is, to compute $R_1^s\bowtie R_2^s\bowtie R_3^s$, we reuse the join results from $R_1^s\bowtie R_2^s$ that has been computed when estimating $X_4$.
\end{example}

We now extend the above framework to further compute $S_n^2$. For this sake we need to know how to compute the $Q_{k,j,n}$'s in Equation~(\ref{eq:sn-square}). Let us consider the cases when an operator represents a selection (i.e., a scan), a two-way join, or a multi-way join query.

\paragraph*{Selection}
In this case, $K=1$ and by Equation~(\ref{eq:Q}) $Q_{k,j,n}$ is reduced to $Q_{1,j,n}=\rho_{B}(L_{1,j})$.
Therefore, $S_n^2$ can be simplified as
$$S_n^2=\frac{1}{n-1}\sum_{j=1}^n(\rho_{B}(L_{1,j})-\rho_n)^2.$$
Since a block here is just a tuple, $\rho_{B}(L_{1,j})=0$ or $\rho_{B}(L_{1,j})=1$. We thus have
\begin{eqnarray*}
S_n^2&=&\frac{1}{n-1}\big(\sum_{\rho_{B}(L_{1,j})=0}\rho_n^2+\sum_{\rho_{B}(L_{1,j})=1}(1-\rho_n)^2\big)\\
&=&\frac{1}{n-1}\big((n-M)\rho_n^2 + M(1-\rho_n)^2\big),
\end{eqnarray*}
where $M$ is the number of output tuples from the selection. When $n$ is large, $n\approx n-1$, so we have
$$S_n^2\approx(1-\frac{M}{n})\rho_n^2 + \frac{M}{n}(1-\rho_n)^2=\rho_n(1-\rho_n),$$
by noticing that $\rho_n=\frac{M}{n}$. Hence $S_n^2$ is directly computable for a scan operator once we know its estimated selectivity $\rho_n$.

\paragraph*{Two-way Join}
Consider a join $R_1\bowtie R_2$. In this case, $Q_{k,j,n}$ ($k=1,2$) can be reduced to
$$Q_{1,j,n}=\sum_{i_2=1}^{n}\rho_{\mathbf{B}}(L_{1,j},L_{2,i_2})\textrm{ and }
Q_{2,j,n}=\sum_{i_1=1}^{n}\rho_{\mathbf{B}}(L_{1,i_1},L_{2,j}).$$
Again, since a block here is just a tuple, $\rho_{\mathbf{B}}$ is either 0 or 1. It is then equivalent to computing the following two quantities:
\begin{itemize}
\item $Q_{1,j,n}=|\{t_{1j}\}\bowtie R_2^s|$, where $t_{1j}$ is the $j$th tuple of $R_1^s$;
\item $Q_{2,j,n}=|R_1^s\bowtie\{t_{2j}\}|$, where $t_{2j}$ is the $j$th tuple of $R_2^s$.
\end{itemize}
That is, to compute $Q_{k,j,n}$ ($k=1,2$), conceptually we need to join each sample tuple of one relation with all the sample tuples of the other relation. However, directly performing this is quite expensive, for we need to do $2n$ joins here.

We seek a more efficient solution. Recall that we need to join $R_1^s$ and $R_2^s$ to compute $\rho_n$. Let $R^s=R_1^s\bowtie R_2^s$. Consider any $t\in R^s$. $t$ must satisfy $t=t_{1i}\bowtie t_{2j}$, where $t_{1i}\in R_1^s$ and $t_{2j}\in R_2^s$. Then $t$ contributes 1 to $Q_{1,i,n}$ and 1 to $Q_{2,j,n}$. On the other hand, any $t$ in $R_1^s\times R_2^s$ but not in $R^s$ will contribute nothing to the $Q$'s. Based on this observation, we only need to scan the tuples in $R^s$ and increment the corresponding $Q$'s. The remaining problem is how to know the indexes $i$ and $j$ as in $t=t_{1i}\bowtie t_{2j}$. For this purpose, we assign an \emph{identifier} to each tuple in the sample tables when taking the samples. This is akin to the idea in data provenance research where tuples are annotated to help tracking the lineages of the query results~\cite{GreenKT07}.

\paragraph*{Multi-way Joins}
The approach of processing two-way joins can be easily generalized to handle multi-way joins. Now we have
$$Q_{k,j,n}=|R_1^s\bowtie\cdots\bowtie\{t_{kj}\}\bowtie\cdots\bowtie R_K^s|.$$
As a result, if we let $R^s=R_1^s\bowtie\cdots\bowtie R_K^s$, then any $t\in R^s$ satisfies $t=t_{1i_1}\bowtie\cdots\bowtie t_{Ki_K}$. $t\in R_1^s\times\cdots\times R_K^s$ will contribute 1 to each $Q_{k,i_k,n}$ ($1\leq k\leq K$) if and only if $t\in R^s$.
Therefore, as before, we can just simply scan $R^s$ and increment the corresponding $Q$'s when processing each tuple.

\paragraph*{Putting It Together} Algorithm~\ref{alg:sel-est} summarizes the procedure of computing $\rho_n$ and $S_n^2$ for a single operator $O$. It is straightforward to incorporate it into the previous framework where the selectivities of the operators are refined in a bottom-up fashion (Appendix~\ref{sec:framework}). We discuss some implementation details in the following.

\begin{algorithm}
  \SetAlgoLined
  \KwIn{$O$, an operator; $\mathcal{R}^s=\{R_1^s,...,R_K^s\}$, the sample tables; $Agg$, if some $O'\in Desc(O)$ is an aggregate}
  \KwOut{$\rho_n$, estimated selectivity of $O$; $S_n^2$, sample variance}
  \SetAlgoLined
  $R^s\leftarrow RunOperator(O,\mathcal{R}^s)$\;

  \uIf{$Agg$}{
    $M\leftarrow CardinalityByOptimizer(O)$\;
    $\rho_n\leftarrow\frac{M}{\prod_{k=1}^K|R_k|}$\;
    $S_n^2\leftarrow 0$\;
  }\uElseIf{$O$ is a scan} {
    $\rho_n\leftarrow\frac{|R^s|}{|R_1^s|}$\;
    $S_n^2\leftarrow\rho_n(1-\rho_n)$\;
  }\uElseIf{$O$ is a join} {
    $\rho_n\leftarrow\frac{|R^s|}{\prod_{k=1}^K|R_k^s|}$\;
    \ForEach{$t=t_{1i_1}\bowtie\cdots\bowtie t_{Ki_K}\in R^s$} {
        $Q_{k,i_k,n}\leftarrow Q_{k,i_k,n}+1$, for $1\leq k\leq K$\;
    }
    $S_n^2\leftarrow\sum_{k=1}^{K}\bigg(\frac{1}{n-1}\sum_{j=1}^n(Q_{k,j,n}/n^{K-1}-\rho_n)^2\bigg)$;
  }\Else {
    $\rho_n\leftarrow \hat{\mu}_l$, $S_n^2\leftarrow \hat{\sigma}_l^2$; // Let $X_l\sim \mathcal{N}(\hat{\mu}_l, \hat{\sigma}_l^2)$.
  }

  \Return{$\rho_n$ \emph{and} $S_n^2$\;}
  \caption{Computation of $\rho_n$ and $S_n^2$}
\label{alg:sel-est}
\end{algorithm}

First, the selectivity estimator cannot work for operators such as \emph{aggregates}. Our current strategy is to use the original cardinality estimates from the optimizer to compute $\rho_n$, and we simply set $S_n^2$ to be 0 for these operators (lines 3 to 5). This may cause inaccuracy in the prediction as well as our estimate of its uncertainty, if the optimizer does a poor job in estimating the cardinalities. However, we find that it works reasonably well in our experiments. Nonetheless, we are working to incorporate sampling-based estimators for aggregates (e.g., the GEE estimator~\cite{Charikar-sample00}) into our current framework.

Second, to compute the $Q_{k,i_k,n}$'s, we maintain a hash map $H_k$ for each $k$ with $i_k$'s the keys and $Q_{k,i_k,n}$'s the values. The size of $H_k$ is upper bounded by $|R_k^s|$ and usually is much smaller.

Third, for simplicity of exposition, in Algorithm~\ref{alg:sel-est} we first compute the whole $R^s$ and then scan it. In practice we actually do not need to do this. Typical join operators, such as \emph{merge join}, \emph{hash join}, and \emph{nested-loop join}, usually compute join results on the fly. Once a join tuple is computed, we can immediately postprocess it by increasing the corresponding $Q_{k,i_k,n}$'s. Therefore, we can avoid the additional memory overhead of caching intermediate join results, which might be large even if the sample tables are small.

\section{Cost Functions}\label{sec:costfunc}

By Equation~(\ref{eq:tquery}), to infer the distribution of $t_q$ for a query $q$, we also need to know the explicit form of $g$. According to Equation~(\ref{eq:cost-model}), $g$ relies on the cost functions of operators that map the selectivities to the $n$'s. As mentioned in the introduction, we use logical cost functions in our work. While different DBMS may differ in their implementations of a particular operator, e.g., nested-loop join, they follow the same execution logic and therefore have the same logical cost function. In the following, we first present a detailed study of representative cost functions. We then formulate the computation of cost functions as an optimization problem that seeks the best fit for the unknown coefficients, and we use standard quadratic programming techniques to solve this problem.

\subsection{Types of Functions}\label{sec:costfunc:types}

We consider the following types of cost functions in this paper:

\begin{enumerate}[(C1)]
\item $f=a_0$: The cost function is a constant. For instance, since a sequential scan has no random disk reads, $n_r=0$.
\item $f=a_0M + a_1$: The cost function is linear with respect to the \emph{output} cardinality. For example, the number of random reads of an index-based table scan falls into this category, which is proportional to the number of qualified tuples that pass the selection predicate.
\item $f=a_0N_{l} + a_1$: The cost function is linear with respect to the \emph{input} cardinality. This happens for unary operators that process each input tuple once. For example, \emph{materialization} is such an operator that creates a buffer to cache the intermediate results.
\item $f=a_0N_{l}^2 + a_1N_{l} + a_2$: The cost function is nonlinear with respect to the \emph{input} cardinality. For instance, the number of CPU operations (i.e., $c_o$) performed by a \emph{sort} operator is proportional to $N_{l}\log N_{l}$. While different nonlinear unary operators may have specific cost functions, we choose to only use quadratic polynomials based on the following observations:
    \begin{itemize}
    \item It is quite general to approximate the nonlinear cost functions used by current relational operators. First, as long as a function is smooth (i.e., it has continuous derivatives up to some desired order), it can be approximated by using the well-known Taylor series, which is basically a polynomial of the input variable. Second, for efficiency reasons, the overhead of an operator usually does not go beyond quadratic of its input cardinality --- we are not aware of any operator implementation whose time complexity is $\omega(N^2)$. Similar observations have been made in~\cite{DDH08}.
    \item Compared with functions such as logarithmic ones, polynomials are mathematically much easier to manipulate. Since we need to further infer the distribution of the predicted query execution time based on the cost functions, this greatly simplifies the derivations.
    \end{itemize}
\item $f=a_0N_{l} + a_1N_{r}+a_2$: This cost function is linear with respect to the \emph{input} cardinalities when the operator is binary. An interesting observation here is that the cost functions in the case of binary operators are not necessarily nonlinear. For example, the number of I/O's involved in a hash join is only proportional to the number of input tuples.
\item $f=a_0N_{l}N_{r} + a_1N_{l} + a_2N_{r}+a_3$: The cost function here also involves the product of the left and right input cardinalities of a binary operator. This happens typically in a nested-loop join, which iterates over the inner (i.e., the right) input table multiple times with respect to the number of rows in the outer (i.e., the left) input table.
\end{enumerate}

It is straightforward to translate these cost functions in terms of selectivities. Specifically, we have $N_{l}=|\mathcal{R}_l|X_l$, $N_{r}=|\mathcal{R}_r|X_r$, and $M=|\mathcal{R}|X$. The above six cost functions can be rewritten as
\begin{enumerate}[(C1')]
\item $f=b_0$, where $b_0=a_0$.
\item $f=b_0X + b_1$, where $b_0=a_0|\mathcal{R}|$ and $b_1=a_1$.
\item $f=b_0X_l + b_1$, where $b_0=a_0|\mathcal{R}_l|$ and $b_1=a_1$.
\item $f=b_0X_l^2 + b_1X_l + b_2$, where $b_0=a_0|\mathcal{R}_l|^2$, $b_1=a_1|\mathcal{R}_l|$, and $b_2=a_2$.
\item $f=b_0X_l + b_1X_r + b_2$, where $b_0=a_0|\mathcal{R}_l|$, $b_1=a_1|\mathcal{R}_r|$, and $b_2=a_2$.
\item $f=b_0X_lX_r + b_1X_l + b_2X_r + b_3$, where $b_0=a_0|\mathcal{R}_l|\cdot|\mathcal{R}_r|$, $b_1=a_1|\mathcal{R}_l|$, $b_2=a_2|\mathcal{R}_r|$, and $b_3=a_3$.
\end{enumerate}

\subsection{Computation of Cost Functions}

To compute the cost functions, we use an approach that is similar to the one proposed in~\cite{DDH08}. Regarding the types of cost functions we considered, the only unknowns given the selectivity estimates are the coefficients in the functions (i.e., the $b$'s). Moreover, notice that $f$ is a \emph{linear} function of the $b$'s once the selectivities are given. We can then collect a number of $f$ values by feeding in the cost model with different $X$'s and find the best fit for the $b$'s.

As an example, consider (C4'). Suppose that we invoke the cost model $m$ times and obtain $m$ points: $$\{(X_{l1},f_{1}),...,(X_{lm},f_{m})\}.$$
Let $\mathbf{y}=(f_{1},...,f_{m})$, $\mathbf{b}=(b_0, b_1, b_2)$, and
\begin{displaymath}
\mathbf{A} =
\left( \begin{array}{ccc}
X_{l1}^2 & X_{l1} & 1 \\
\vdots & \vdots & \vdots \\
X_{lm}^2 & X_{lm} & 1
\end{array} \right).
\end{displaymath}
The optimization problem we are concerned with is:

\begin{equation*}
\begin{aligned}
& \underset{\mathbf{b}}{\text{minimize}}
& & \|\mathbf{A}\mathbf{b}-\mathbf{y}\| \\
& \text{subject to}
& & b_i\geq 0, \; i = 0, 1.
\end{aligned}
\end{equation*}
Note that we require $b_0$ and $b_1$ be nonnegative since they have the natural semantics in the cost functions as the amount of work with respect to the corresponding terms. For example, $b_1X_l=a_1N_l$ is the amount of work that is proportional to the input cardinality. To solve this quadratic programming problem, we use the \verb|qpsolve| function of Scilab~\cite{scilab}. Other equivalent solvers could also be used.

The remaining problem is how to pick these $(X, f)$'s. In theory, one could arbitrarily pick the $X$'s from $[0,1]$ to obtain the corresponding $f$'s as long as we have more points than unknowns. Although more points usually mean we can have better fittings, in practice we cannot afford too many points due to the efficiency requirements when making the prediction. On the other hand, given that the $X$'s here follow normal distributions and the variances are usually small when the sample size is large, the likely selectivity estimates are usually concentrated in a much shorter interval than $[0,1]$. Intuitively, we should take more points within this interval, for we can then have a more accurate view of the shape of the cost function restricted to this interval. Therefore, in our current implementation, we adopt the following strategy.

Let $X\sim \mathcal{N}(\mu,\sigma^2)$. Consider the interval $\mathcal{I}=[\mu-3\sigma, \mu+3\sigma]$. It is well known that $\Pr(X\in \mathcal{I})\approx 0.997$, which means the probability that $X$ falls out of $\mathcal{I}$ is less than 0.3\%. We then proceed by partitioning $\mathcal{I}$ into $W$ subintervals of equal width, and pick the $W+1$ boundary $X$'s to invoke the cost model. Generalizing this idea to binary cost functions is straightforward. Suppose $X_l\sim \mathcal{N}(\mu_l,\sigma_l^2)$ and $X_r\sim \mathcal{N}(\mu_r,\sigma_r^2)$. Let $\mathcal{I}_l=[\mu_l-3\sigma_l, \mu_l+3\sigma_l]$ and $\mathcal{I}_r=[\mu_r-3\sigma_r, \mu_r+3\sigma_r]$. We then partition $\mathcal{I}_l\times\mathcal{I}_r$ into a $W\times W$ grid and obtain $(W+1)\times (W+1)$ points $(X_l, X_r)$ to invoke the cost model.

\section{Distribution of Running Times}\label{sec:uncertainty}

We have discussed how to estimate the distributions of input parameters (i.e., the $c$'s and the $X$'s) and how to estimate the cost functions of each operator. In this section, we discuss how to combine these two to further infer the distribution of $t_q$ for a query $q$.

Since $t_q=g(\mathbf{c},\mathbf{X})$, the distribution of $t_q$ relies on the \emph{joint} distribution of $(\mathbf{c},\mathbf{X})$.\footnote{Note that the distributions of the $c$'s and $X$'s that we obtained in Section~\ref{sec:distribution} are \emph{marginal} rather than joint.} We therefore first present a detailed analysis of the correlations between the $c$'s and the $X$'s. Based on that, we then show that the distribution of $t_q$ is asymptotically normal and thus reduce the problem to estimating the two parameters of normal distributions, i.e., the mean and variance of $t_q$. We further address the nontrivial problem of computing $\Var[t_q]$ due to correlations between selectivity estimates.

\subsection{Correlations of Input Variables}

In our current setting, it is reasonable to assume that the $c$'s and the $X$'s are independent. In the following, we analyze the correlations within the $c$'s and the $X$'s.

\subsubsection{Correlations Between Cost Units}

Since the randomness within the $c$'s comes from the variations in hardware execution speeds, by using our current framework we have no way to observe the true values of the $c$'s and thus it is impossible to obtain the exact joint distribution of the $c$'s. Nonetheless, it might be reasonable to assume the independence of the $c$'s. First, since the CPU and I/O cost units measure the speeds of different hardware devices, their values do not depend on each other. Second, within each group (i.e., CPU or I/O cost units), we used independent calibration queries for each individual cost unit.
\begin{assumption}
The $c$'s are independent of each other.
\end{assumption}

We further note here that the independence of the $c$'s depends on the cost model as well as the hardware configurations.
For instance, if certain devices are connected via the same infrastructure (e.g., a bus), then they might influence each other's communication patterns. Our current framework for calibrating the $c$'s cannot capture the correlations of the $c$'s. However, perhaps low-level tools for monitoring hardware execution status could be used for this purpose. We leave it as interesting future work to investigate such possibilities and study the effectiveness of incorporating correlation information of the $c$'s into our current framework.

\subsubsection{Correlations Between Selectivity Estimates}

The $X$'s are clearly not independent, because the same samples are used to estimate the selectivities of different operators. We next study the correlations between the $X$'s in detail.

Let $O$ and $O'$ be two operators, and $\mathcal{R}$ and $\mathcal{R'}$ be the corresponding leaf tables. Consider the two corresponding selectivity estimates $\rho_n$ and $\rho'_n$ as defined by Equation~(\ref{eq:estimator-cp}). Since the samples from each table are drawn independently, we first have:

\begin{lemma}\label{lemma:ind}
If $\mathcal{R}\cap\mathcal{R}'=\emptyset$, then $\rho_n\bot\rho'_n$.\footnote{We use $Y\bot Z$ to denote that $Y$ and $Z$ are independent.}
\end{lemma}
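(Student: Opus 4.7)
The plan is to exploit the fact that when $\mathcal{R} \cap \mathcal{R}' = \emptyset$, the random variables appearing in the definition of $\rho_n$ are drawn from an entirely different pool than those appearing in $\rho'_n$, so the conclusion should reduce to the elementary principle that measurable functions of independent random variables are independent.

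First I would fix notation: let $\mathcal{R} = \{R_{k_1},\ldots,R_{k_K}\}$ and $\mathcal{R}' = \{R_{k'_1},\ldots,R_{k'_{K'}}\}$, and let $\mathbf{L} = \{L_{k_j,i}:\, 1\le j\le K,\, 1\le i\le n\}$ and $\mathbf{L}' = \{L_{k'_j,i}:\, 1\le j\le K',\, 1\le i\le n\}$ denote the two collections of block indices drawn during the sampling. By Equation~(\ref{eq:estimator-cp}), $\rho_n$ is a deterministic (in fact, polynomial) function of $\mathbf{L}$ alone, and likewise $\rho'_n$ is a deterministic function of $\mathbf{L}'$ alone, because only the blocks of tables appearing in the respective leaf-table sets enter the sum $\sum \rho_{\mathbf{B}(\cdots)}$.

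Next I would invoke the sampling model: blocks are drawn independently across distinct relations (this is part of the estimator's construction in~\cite{Haas-sample96}, and is implicit in the preceding subsection). Under the hypothesis $\mathcal{R}\cap\mathcal{R}'=\emptyset$, the index sets indexing $\mathbf{L}$ and $\mathbf{L}'$ are disjoint, so $\mathbf{L}$ and $\mathbf{L}'$ are drawn from entirely separate, independent families. Hence the joint collection $\mathbf{L}$ is independent of the joint collection $\mathbf{L}'$.

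Finally I would conclude by the standard fact that if $Y = f(\mathbf{L})$ and $Z = g(\mathbf{L}')$ are measurable functions of independent random vectors, then $Y \bot Z$. Applying this with $f$ and $g$ equal to the finite sums in Equation~(\ref{eq:estimator-cp}) yields $\rho_n \bot \rho'_n$. The only subtlety worth flagging is the correct setup of the probability space so that $\mathbf{L}$ and $\mathbf{L}'$ really are independent rather than merely identically distributed; once that is nailed down, no real obstacle remains, and the proof is essentially a bookkeeping exercise tracking which random variables feed into which estimator.
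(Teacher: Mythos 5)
Your argument is correct and is essentially the same as the paper's: the paper offers no formal proof beyond the preceding sentence (``Since the samples from each table are drawn independently, we first have:''), and your write-up is just the careful formalization of that remark --- $\rho_n$ and $\rho'_n$ are measurable functions of the block indices drawn from $\mathcal{R}$ and $\mathcal{R}'$ respectively, and disjointness of the leaf-table sets makes those two index families independent. Nothing further is needed.
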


For binary operators, it follows from Lemma~\ref{lemma:ind} immediately that:

\begin{lemma}\label{lemma:ind-bin}
Let $O$ be binary. If $\mathcal{R}_l\cap\mathcal{R}_r=\emptyset$, then $X_l\bot X_r$.
\end{lemma}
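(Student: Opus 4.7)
The plan is to derive this as an immediate specialization of Lemma~\ref{lemma:ind}, since $X_l$ and $X_r$ are themselves instances of the sampling estimator $\rho_n$ applied to two different subtrees of the plan.

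First I would unpack the definitions. Since $O$ is a binary operator with children $O_l$ and $O_r$, the selectivities $X_l$ and $X_r$ are the selectivities of these child operators in the sense of Equation~(\ref{eq:selectivity}), and are estimated by the sampling-based estimator of Section~\ref{sec:distribution:sel:sn2}. Writing the subtrees rooted at $O_l$ and $O_r$ as $\mathcal{T}_l$ and $\mathcal{T}_r$, their leaf tables are exactly $\mathcal{R}_l$ and $\mathcal{R}_r$ respectively. Thus $X_l$ is a function only of the random samples drawn from the relations in $\mathcal{R}_l$, and $X_r$ is a function only of samples drawn from the relations in $\mathcal{R}_r$, by the construction of $\rho_n$ in Equation~(\ref{eq:estimator-cp}).

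Next I would apply Lemma~\ref{lemma:ind} with the identifications $\mathcal{R} = \mathcal{R}_l$ and $\mathcal{R}' = \mathcal{R}_r$. The hypothesis $\mathcal{R}_l \cap \mathcal{R}_r = \emptyset$ is precisely the hypothesis required by Lemma~\ref{lemma:ind}, so the two corresponding sampling estimators $\rho_n$ (for $O_l$) and $\rho'_n$ (for $O_r$) are independent. Since $X_l$ and $X_r$ are exactly these two estimators, this gives $X_l \bot X_r$.

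The only mild subtlety worth flagging is ensuring that the statement of Lemma~\ref{lemma:ind} is applied to \emph{the specific} random variables $X_l$ and $X_r$ as used downstream (rather than to some other sample-based quantity); this is covered by the observation that the bottom-up framework of Algorithm~\ref{alg:sel-est} computes $X_l$ strictly from samples of $\mathcal{R}_l$ and $X_r$ strictly from samples of $\mathcal{R}_r$. There is no real obstacle here beyond making that sample-disjointness explicit, so the proof should fit in two or three lines.
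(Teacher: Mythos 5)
Your proposal is correct and matches the paper's own treatment: the paper derives Lemma~\ref{lemma:ind-bin} as an immediate consequence of Lemma~\ref{lemma:ind} by identifying $X_l$ and $X_r$ with the sampling estimators for the child subtrees, whose leaf tables are $\mathcal{R}_l$ and $\mathcal{R}_r$. Your added remark about tracing the sample-disjointness through Algorithm~\ref{alg:sel-est} is a reasonable bit of extra care but changes nothing substantive.
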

That is, $X_l$ and $X_r$ will only be correlated if $\mathcal{R}_l$ and $\mathcal{R}_r$ share \emph{common} relations. However, in practice, we can maintain more than one sample table for each relation. When the database is large, this is affordable since the number of samples is very small compared to the database size~\cite{WuCZTHN13}. Since the samples from each relation are drawn independently, $X_l$ and $X_r$ are still independent if we use a different sample table for each appearance of a shared relation. We thus assume $X_l\bot X_r$ in the rest of the paper.

More generally, $X$ and $X'$ are independent as long as neither $O\in Desc(O')$ nor $O'\in Desc(O)$. However, the above discussion cannot be applied if $O\in Desc(O')$ (or vice versa). This is because we pass the join results from downstream joins to upstream joins when estimating the selectivities (recall Example~\ref{ex:sel-est}). So $\mathcal{R}$ and $\mathcal{R'}$ are naturally not disjoint. In fact, $\mathcal{R}\subseteq\mathcal{R}'$. To make $\rho_n$ and $\rho'_n$ independent, we need to replace each of the sample tables used in computing $\rho'_n$ with another sample table from the same relation, which basically is the same as run the query plan again on a different set of sample tables. The number of runs is then in proportion to the number of selective operators (i.e., selections and joins) in the query plan, and the runtime overhead might be prohibitive in practice. We summarize this observation as follows:

\begin{lemma}\label{lemma:correlated}
Given that multiple sample tables of the same relation can be used, $\rho_n$ and $\rho'_n$ are correlated if and only if either $O\in Desc(O')$ or vice versa.
\end{lemma}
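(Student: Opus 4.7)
The plan is to prove the biconditional by proving each direction separately. The ``only if'' direction (no descendant relation $\Rightarrow$ uncorrelated) reduces to Lemma~\ref{lemma:ind} after appropriately relabelling sample tables, whereas the ``if'' direction (descendant relation $\Rightarrow$ correlated) requires showing that the shared randomness inherited along the descendant chain yields genuine nonzero covariance.

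For the ``only if'' direction I would argue by contrapositive. Suppose that neither $O\in Desc(O')$ nor $O'\in Desc(O)$. Then the subtrees $\mathcal{T}$ and $\mathcal{T}'$ of the query plan rooted at $O$ and $O'$ are vertex-disjoint. Hence the scan operators that appear as leaves of $\mathcal{T}$ are distinct plan-occurrences from those that appear as leaves of $\mathcal{T}'$, even when they refer to the same underlying relation. By the hypothesis of the lemma we may assign an independently drawn sample table to each distinct occurrence of a relation in the plan. After doing so, the effective leaf-tables of $O$ and those of $O'$, taken as multisets of sample tables, are disjoint. Applying Lemma~\ref{lemma:ind} gives $\rho_n\bot\rho_n'$, which in particular implies $\Cov[\rho_n,\rho_n']=0$.

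For the ``if'' direction, assume without loss of generality that $O\in Desc(O')$, so $\mathcal{T}\subseteq\mathcal{T}'$ and $\mathcal{R}\subseteq\mathcal{R}'$. The reuse of downstream join results described in Section~\ref{sec:distribution:Sn2} and illustrated in Example~\ref{ex:sel-est} forces the computation of $\rho_n'$ to draw from exactly the same sample tables (and hence the same tuple-level sample indicators) as $\rho_n$ on the common relations; unlike the previous case we cannot swap in fresh samples without re-running the plan. Expanding $\Cov[\rho_n,\rho_n']$ via Equation~(\ref{eq:estimator-cp}), each cross-term is the covariance between two products of $\{0,1\}$-valued sample indicators that share at least one common factor. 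A tuple combination that survives the predicates at $O$ is strictly more likely to extend to a surviving combination for $O'$ than a freshly drawn combination would be, so every such cross-term is nonnegative and at least one is strictly positive whenever the intermediate selectivities along the path from $O$ to $O'$ are nondegenerate. Summing these contributions yields $\Cov[\rho_n,\rho_n']>0$, so $\rho_n$ and $\rho_n'$ are correlated.

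The main obstacle is the second direction: distinguishing ``functionally dependent on the same inputs'' from ``has nonzero covariance,'' since in principle two statistics computed from the same samples can be uncorrelated through cancellation. The clean way around this is the monotone-indicator structure of the estimator --- each $\rho_n$ is a nonnegative linear combination of products of $\{0,1\}$-valued sample-survival indicators, and monotonicity (together with the fact that $O$'s survival indicator literally appears as a factor in $O'$'s) rules out the cancellations that would drive the covariance to zero. This structural observation is, I expect, the one real idea in the proof; once it is in hand the rest is bookkeeping.
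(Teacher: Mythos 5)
Your ``only if'' direction is exactly the paper's argument: when neither operator is a descendant of the other, the two subtrees are vertex-disjoint, a fresh sample table can be assigned to each leaf occurrence of a repeated relation, and Lemma~\ref{lemma:ind} then gives independence. That part is fine. Note, however, that the paper never formally proves the ``if'' direction at all --- Lemma~\ref{lemma:correlated} is presented as the summary of an informal observation (the two estimators are computed from the same samples, hence cannot be assumed independent), and downstream it is only invoked in the weaker form ``this is the only case where the covariances \emph{might} not be zero.'' You correctly notice that ``computed from the same samples'' does not by itself imply ``nonzero covariance'' and try to close that gap, which is more than the paper attempts.

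The closing argument, however, has a genuine flaw: the claim that \emph{every} cross-term is nonnegative is false. By the exact decomposition proved later as Lemma~\ref{lemma:dn},
$$\Cov(\rho_n,\rho'_n)=\sum_{r=1}^m\frac{(n-1)^{m-r}}{n^m}\sum_{S\in\mathcal{S}_r}\Cov_S(\rho,\rho'),$$
and while the full-overlap term ($S$ ranging over all of $\mathcal{R}$) does equal $\rho'(1-\rho)>0$ by precisely the monotone-indicator argument you describe, the partial-overlap terms need not be nonnegative. Take $O=R_1\bowtie R_2$ and $O'=O\bowtie R_3$, with an $R_1$-tuple $a$ that joins many $R_2$-tuples none of which extend to $R_3$, and an $R_1$-tuple $b$ that joins few $R_2$-tuples all of which extend to $R_3$; then for $S=\{1\}$ one gets $\rho_S(a)$ large but $\rho'_S(a)=0$, and $\rho_S(b)$ small but $\rho'_S(b)$ large, so $\Cov_S(\rho,\rho')<0$. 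Since these partial-overlap terms carry the \emph{larger} weights $(n-1)^{m-r}/n^m$ as $n$ grows, the signed sum is not controlled by monotonicity alone: it can be negative, and for adversarial data and particular values of $n$ it can vanish exactly even though $O\in Desc(O')$. So strict non-vanishing of the covariance is only a generic statement (your own ``nondegenerate'' caveat already concedes a version of this), and the proposed mechanism does not establish it. A defensible write-up would keep your ``only if'' argument, and for the ``if'' direction either settle for what the paper actually uses (independence cannot be concluded, so the covariance must be computed or bounded rather than dropped), or isolate the one quantity that monotonicity genuinely controls, namely $\Cov_{\mathcal{R}}(\rho,\rho')=\rho'(1-\rho)$, and state the positivity conclusion under explicit non-degeneracy and genericity assumptions.
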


\subsection{Asymptotic Distributions}

Now for specificity suppose that the query plan of $q$ contains $m$ operators $O_1$, ..., $O_m$. Since $t_q$ is the sum of the predicted execution time spent on each operator, it can be expressed as $t_q=\sum_{k=1}^m t_k$, where $t_k$ is the predicted execution time of $O_k$ and is itself a random variable.

We next show that $t_k$ is asymptotically normal, and then by using very similar arguments, we can show that $t_q$ is asymptotically normal as well. Since $t_k$ can be further expressed in terms of Equation~(\ref{eq:cost-model}), to learn its distribution we need to know the distributions of cost functions that map the selectivities to the $n$'s. We therefore start by discussing the distributions of the typical cost functions as presented in Section~\ref{sec:costfunc:types}.

\subsubsection{Asymptotic Distributions of Cost Functions}

In the following discussion, we assume that $X\sim \mathcal{N}(\mu, \sigma^2)$, $X_l\sim \mathcal{N}(\mu_l, \sigma_l^2)$, and $X_r\sim \mathcal{N}(\mu_r, \sigma_r^2)$. The distributions of the six types of cost functions previously discussed are as follows:
\begin{enumerate}[(C1')]
\item $f=b_0$: $f\sim \mathcal{N}(b_0, 0)$.
\item $f=b_0X + b_1$: $f\sim \mathcal{N}(b_0\mu+b_1, b_0^2\sigma^2)$.
\item $f=b_0X_l + b_1$: $f\sim \mathcal{N}(b_0\mu_l+b_1, b_0^2\sigma_l^2)$.
\item $f=b_0X_l^2 + b_1X_l + b_2$: In this case $\Pr(f)$ is not normal. Although it is possible to derive the \emph{exact} distribution of $f$ based on the distribution of $X_l$, the derivation would be very messy. Instead, we consider $f^{\mathcal{N}}\sim \mathcal{N}(\E[f], \Var[f])$ and use this to approximate $\Pr(f)$. We present the formula of $\Var[f]$ in Lemma~\ref{lemma:c4var} (proof in Appendix~\ref{sec:proofs:lemma:c4var}). Obviously, $f^{\mathcal{N}}$ and $f$ have the same expected value and variance. Moreover, we can actually show that $f^{\mathcal{N}}$ and $f$ (and therefore their corresponding distributions) are very close to each other when the number of samples is large (see Theorem~\ref{theorem:c4approx} below; the proof is in Appendix~\ref{sec:proofs:theorem:c4approx}).
\item $f=b_0X_l + b_1X_r + b_2$: Since $X_l\bot X_r$ by Lemma~\ref{lemma:ind-bin}, $f\sim \mathcal{N}(b_0\mu_l+b_1\mu_r+b_2,b_0^2\sigma_l^2+b_1^2\sigma_r^2)$.
\item $f=b_0X_lX_r + b_1X_l + b_2X_r + b_3$: Again, $\Pr(f)$ is not normal. Since $X_l\bot X_r$, $X_lX_r$ follows the so called \emph{normal product distribution}~\cite{Aroian47}, whose exact form is again complicated. We thus use the same strategy as in (C4') (see Appendix~\ref{sec:proofs:theory:c6}).
\end{enumerate}

\begin{lemma}\label{lemma:c4var}
If $X_l\sim \mathcal{N}(\mu_l, \sigma_l^2)$ and $f=b_0X_l^2 + b_1X_l + b_2$, then
$$\Var[f]=\sigma_l^2 [(b_1 + 2 b_0 \mu_l)^2 + 2 b_0^2 \sigma_l^2].$$
\end{lemma}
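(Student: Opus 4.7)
The plan is to compute $\Var[f]$ directly by reducing to the variance and covariance of simple polynomial transformations of $X_l$, then use the well-known moments of a normal random variable.

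First, since the constant $b_2$ does not affect variance, I would write
\[
\Var[f] = \Var[b_0 X_l^2 + b_1 X_l] = b_0^2 \Var[X_l^2] + b_1^2 \Var[X_l] + 2 b_0 b_1 \Cov[X_l^2, X_l],
\]
using the standard formula for the variance of a linear combination. This reduces the problem to evaluating $\Var[X_l^2]$ and $\Cov[X_l^2, X_l]$ for the normal random variable $X_l$.

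Next, I would invoke the central-moment formulas for $X_l \sim \mathcal{N}(\mu_l, \sigma_l^2)$: namely $\E[X_l] = \mu_l$, $\E[X_l^2] = \mu_l^2 + \sigma_l^2$, $\E[X_l^3] = \mu_l^3 + 3\mu_l\sigma_l^2$, and $\E[X_l^4] = \mu_l^4 + 6\mu_l^2\sigma_l^2 + 3\sigma_l^4$. From these, a short calculation yields
\[
\Var[X_l^2] = \E[X_l^4] - (\E[X_l^2])^2 = 4\mu_l^2 \sigma_l^2 + 2\sigma_l^4,
\]
and
\[
\Cov[X_l^2, X_l] = \E[X_l^3] - \E[X_l^2]\,\E[X_l] = 2\mu_l \sigma_l^2.
\]

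Finally, I would substitute back, obtaining
\[
\Var[f] = b_0^2 (4\mu_l^2 \sigma_l^2 + 2\sigma_l^4) + b_1^2 \sigma_l^2 + 4 b_0 b_1 \mu_l \sigma_l^2,
\]
factor out $\sigma_l^2$, and recognize the bracketed expression as $(b_1 + 2 b_0 \mu_l)^2 + 2 b_0^2 \sigma_l^2$, which is exactly the claimed formula. No step is an obstacle per se; the only thing requiring care is tracking the fourth moment of a normal and spotting the final algebraic factorization, both of which are routine.
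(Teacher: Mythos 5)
Your proposal is correct and follows essentially the same route as the paper's proof: expand the variance of the linear combination $b_0X_l^2+b_1X_l$, evaluate $\Var[X_l^2]=4\mu_l^2\sigma_l^2+2\sigma_l^4$ and $\Cov(X_l^2,X_l)=2\mu_l\sigma_l^2$ from the non-central moments of a normal variable, and factor. The only difference is cosmetic---the paper reads the moments off a table rather than writing them inline---so there is nothing further to add.
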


\begin{theorem}\label{theorem:c4approx}
Suppose that $X_l\sim \mathcal{N}(\mu_l, \sigma_l^2)$ and $f=b_0X_l^2 + b_1X_l + b_2$. Let $f^{\mathcal{N}}\sim \mathcal{N}(\E[f], \Var[f])$, where $\Var[f]$ is shown in Lemma~\ref{lemma:c4var}. Then $f^\mathcal{N}\inprob f$.\footnote{$f^\mathcal{N}\inprob f$ means $f^\mathcal{N}$ converges in probability to $f$.}
\end{theorem}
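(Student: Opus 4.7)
The plan is to interpret the statement in its natural asymptotic setting: the variance $\sigma_l^2$ of the selectivity estimator $X_l$ is really $\sigma_l^2(n) = O(1/n)$, where $n$ is the sample size used in the estimator of Section~\ref{sec:distribution:sel:sn2}. Both $f$ and $f^\mathcal{N}$ therefore constitute sequences indexed by $n$, and "$f^\mathcal{N}\inprob f$" should be read as $f^\mathcal{N}(n) - f(n) \inprob 0$. My approach is to show that both sequences collapse in probability to the same deterministic constant $c := b_0\mu_l^2 + b_1\mu_l + b_2$, after which the triangle inequality finishes the argument.

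For the first convergence $f \inprob c$, I would observe that Chebyshev's inequality applied to $X_l \sim \mathcal{N}(\mu_l,\sigma_l^2)$ gives $X_l \inprob \mu_l$ as $\sigma_l \to 0$, and then invoke the continuous mapping theorem on the polynomial $p(x) = b_0 x^2 + b_1 x + b_2$ to conclude $f = p(X_l) \inprob p(\mu_l) = c$. For the second convergence $f^\mathcal{N} \inprob c$, I would compute directly that
\begin{equation*}
\E[f^\mathcal{N}] = \E[f] = b_0(\mu_l^2 + \sigma_l^2) + b_1\mu_l + b_2 \longrightarrow c,
\end{equation*}
while Lemma~\ref{lemma:c4var} gives $\Var[f^\mathcal{N}] = \sigma_l^2\bigl[(b_1 + 2b_0\mu_l)^2 + 2b_0^2\sigma_l^2\bigr] \to 0$. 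One more application of Chebyshev's inequality then yields $f^\mathcal{N} \inprob c$.

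To conclude, for any $\varepsilon > 0$ I would use the bound $\Pr(|f - f^\mathcal{N}| > \varepsilon) \le \Pr(|f - c| > \varepsilon/2) + \Pr(|f^\mathcal{N} - c| > \varepsilon/2)$, each term of which tends to $0$ by the two previous steps, giving $f^\mathcal{N} \inprob f$ in the sense of the footnote.

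The main subtlety, more than an obstacle, is the interpretive one: $f$ and $f^\mathcal{N}$ live on unrelated probability spaces (in particular $f^\mathcal{N}$ is not even a function of $X_l$), so the convergence statement only has content asymptotically, where both distributions concentrate at the same deterministic point. All the probabilistic machinery needed beyond that observation is elementary (Chebyshev plus continuous mapping); the value of the theorem is that it certifies that approximating the non-normal quadratic-in-Gaussian $f$ by the matched normal $f^\mathcal{N}$ introduces vanishing error in the regime where sample sizes are large, justifying the modeling choice adopted in case (C4').
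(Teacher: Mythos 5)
Your proposal is correct and follows essentially the same route as the paper's proof: both arguments show that $f$ and $f^\mathcal{N}$ each concentrate at the same deterministic constant as the sample size grows (the paper via strong consistency of the selectivity estimator together with the vanishing variance guaranteed by Theorem~\ref{theorem:varbound}, you via Chebyshev applied to the shrinking $\sigma_l^2$), and then conclude $f^\mathcal{N}-f\inprob 0$. The only cosmetic differences are that the paper combines the two limits with Slutsky's theorem where you use a union bound, and that the paper lets the center $\mu_l=\rho_n$ itself converge almost surely to the true selectivity $\rho$ rather than holding it fixed.
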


\subsubsection{Asymptotic Distribution of $t_k$}\label{sec:uncertainty:distr:tk}

Based on the previous analysis, the cost functions (or equivalently, the $n$'s in Equation~(\ref{eq:cost-model})) are asymptotically normal. Since the $c$'s are normal and independent of the $X$'s (and hence the $n$'s as well), by Equation~(\ref{eq:cost-model}) again $t_k$ is asymptotically the sum of products of two independent normal random variables. Specifically, let $\mathcal{C}=\{c_s, c_r, c_t, c_i, c_o\}$, and for $c\in\mathcal{C}$, let $f_{kc}$ be the cost function indexed by $c$. Defining $t_{kc}=f_{kc}^{\mathcal{N}}c$, we have
$$t_k\approx\sum_{c\in\mathcal{C}}t_{kc}=\sum_{c\in\mathcal{C}} f_{kc}^{\mathcal{N}}c,$$

Again, each $t_{kc}$ is not normal. But we can apply techniques similar to that in Theorem~\ref{theorem:c4approx} here by using the normal random variable
$$t_{kc}^{\mathcal{N}}\sim \mathcal{N}(\E[f_{kc}^\mathcal{N}c], \Var[f_{kc}^\mathcal{N}c])=\mathcal{N}(\E[f_{kc}c], \Var[f_{kc}c])$$
as an approximation of $t_{kc}$. Defining $Z=\E[f_{kc}]c$, we have

\begin{theorem}\label{theorem:tc}
$t_{kc}\indistr Z$, and $t_{kc}^{\mathcal{N}}\indistr Z$.
\end{theorem}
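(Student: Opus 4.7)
Both convergences are to be understood as $n\to\infty$, where $n$ is the number of sampling steps in the selectivity estimator; recall that $X_l\sim\mathcal{N}(\mu_l,\sigma_l^2)$ with $\sigma_l^2=S_n^2/n\to 0$ and $\mu_l$ tending to the true selectivity by strong consistency of $\rho_n$. Inspecting the variance of each cost function type (C1')--(C6'), e.g.\ Lemma~\ref{lemma:c4var} for (C4') and the analogous expression for (C6') in the appendix, one sees that $\Var[f_{kc}]=O(\sigma_l^2+\sigma_r^2)\to 0$. Consequently the surrogate $f_{kc}^{\mathcal{N}}\sim\mathcal{N}(\E[f_{kc}],\Var[f_{kc}])$ satisfies $f_{kc}^{\mathcal{N}}\inprob \E[f_{kc}]$ by Chebyshev, while $\E[f_{kc}]$ itself stabilizes at the deterministic value obtained by evaluating the cost function at the true selectivities. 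This concentration step is the workhorse for both claims.

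For the first claim $t_{kc}\indistr Z$, I would write $t_{kc}=f_{kc}^{\mathcal{N}}\cdot c$. Since $c$ is assumed independent of the selectivity estimator (and hence of $f_{kc}^{\mathcal{N}}$) and the law of $c$ is fixed, Slutsky's theorem applied to $f_{kc}^{\mathcal{N}}\inprob \E[f_{kc}]$ together with $c\indistr c$ immediately yields $f_{kc}^{\mathcal{N}}\cdot c\indistr \E[f_{kc}]\cdot c=Z$.

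For the second claim $t_{kc}^{\mathcal{N}}\indistr Z$, note that $t_{kc}^{\mathcal{N}}$ is Gaussian by construction and $Z$ is Gaussian as a constant multiple of the normal $c$. Hence it suffices to prove convergence of the first two moments. Using independence of $f_{kc}$ and $c$,
\[
\E[t_{kc}^{\mathcal{N}}]=\E[f_{kc}]\E[c],\qquad \Var[t_{kc}^{\mathcal{N}}]=\Var[f_{kc}]\bigl(\Var[c]+\E[c]^2\bigr)+\E[f_{kc}]^2\Var[c].
\]
As $n\to\infty$, $\Var[f_{kc}]\to 0$, so the mean converges to $\E[f_{kc}]\E[c]=\E[Z]$ and the variance to $\E[f_{kc}]^2\Var[c]=\Var[Z]$. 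Since a Gaussian distribution is determined by its first two moments (via characteristic functions), convergence of parameters gives weak convergence.

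The main obstacle is twofold. First, in the nonlinear cases (C4') and (C6'), $f_{kc}$ is not itself Gaussian, so the passage from $f_{kc}$ to the surrogate $f_{kc}^{\mathcal{N}}$ must be justified through concentration rather than by directly matching a known distributional form --- this is essentially the same device used to prove Theorem~\ref{theorem:c4approx}, and I would reuse it verbatim. Second, the centering constant $\E[f_{kc}]$ depends on the sample through $\mu_l$ and $\mu_r$; one must invoke strong consistency of $\rho_n$ to pin down a deterministic limit so that Slutsky's theorem yields a well-defined limiting law $Z$. Both technicalities are routine but easy to gloss over, so the write-up should be explicit about the sense in which the limit is taken.
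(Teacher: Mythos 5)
Your proposal is correct and follows essentially the same route as the paper's proof: the paper likewise establishes $t_{kc}^{\mathcal{N}}\indistr Z$ by computing $\E[t_{kc}]=\E[f_{kc}]\E[c]$ and $\Var[t_{kc}]=\E^2[f_{kc}]\Var[c]+\E^2[c]\Var[f_{kc}]+\Var[c]\Var[f_{kc}]$ and letting $\Var[f_{kc}]\to 0$, and establishes $t_{kc}\indistr Z$ from $f_{kc}^{\mathcal{N}}\inprob\E[f_{kc}]$ via the continuous mapping theorem applied to the pair $(f_{kc}^{\mathcal{N}},c)$, which is the same device as your Slutsky step. Your explicit remark that $\E[f_{kc}]$ must be pinned down via strong consistency of $\rho_n$ is a point the paper handles in Theorem~\ref{theorem:c4approx} rather than restating here, but it is not a substantive difference.
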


\noindent Theorem~\ref{theorem:tc} (proof in Appendix~\ref{sec:proofs:theorem:tc}) implies that $t_{kc}$ and $t_{kc}^\mathcal{N}$ tend to follow the same distribution as the sample size grows. 
Since $c$ is normal, $Z$ is normal as well. Furthermore, the independence of the $c$'s also implies the independence of the $Z$'s. So $t_k$ is approximately the sum of the independent normal random variables $t_{kc}^{\mathcal{N}}$. Hence $t_k$ is itself approximately normal with large sample size.

\subsubsection{Asymptotic Distribution of $t_q$}\label{sec:uncertainty:distr:tq}

Finally, let us consider the distribution of $t_q$. Since $t_q$ is merely the sum of the $t_k$'s, we have exactly the same situation as when we analyze each $t_k$. Specifically, we can express $t_q$ as
$$t_q=\sum_{k=1}^m t_k \approx\sum_{c\in\mathcal{C}}g_c c,$$
where $g_c=\sum_{k=1}^m f_{kc}^\mathcal{N}$ is the sum of the cost functions of the operators with respect to the particular $c$. However, since the $f_{kc}^\mathcal{N}$'s are not independent, $g_c$ is not normal. We can again use the normal random variable
$$g_c^\mathcal{N}\sim\mathcal{N}(\E[g_c], \Var[g_c])$$
as an approximation of $g_c$. We show $g_c^\mathcal{N}\inprob g_c$ in Appendix~\ref{sec:proofs:theorem:gc}. With exactly the same argument used in Section~\ref{sec:uncertainty:distr:tk} we can then see that $t_q$ is approximately normal when the sample size is large.

\subsubsection{Discussion}

The analysis that $t_q$ is asymptotically normal relies on three facts: (1) the selectivity estimates are unbiased and strongly consistent; (2) the cost model is additive; and (3) the cost units are independently normally distributed. While the first fact is a property of the sampling-based selectivity estimator and thus always holds, the latter two are specific merits of the cost model of PostgreSQL, though we believe that cost models of other database systems share more or less similar features. (As far as we know, MySQL~\cite{mysql-qo}, IBM DB2~\cite{db2-qo}, Oracle~\cite{oracle-qo}, and Microsoft SQL Server~\cite{sqlserver-qo} use similar cost models.) Therefore, we need new techniques when either (2) or (3) does not hold. For instance, if the cost model is still additive and the $c$'s are independent but cannot be modeled as normal variables, then by the analysis in Section~\ref{sec:uncertainty:distr:tq} we can still see that $t_q$ is asymptotically a linear combination of the $c$'s and thus the distribution of $t_q$ can be expressed in terms of the \emph{convolution} of the distributions of the $c$'s. We may then find this distribution by using generating functions or characteristic functions~\cite{ross}. We leave the investigation of other types of cost models as future work.

\subsection{Computing Distribution Parameters}

As discussed, we can approximate the distribution of $t_q$ with a normal distribution $\mathcal{N}(\E[t_q], \Var[t_q])$. We are then left with the problem of estimating the two parameters $\E[t_q]$ and $\Var[t_q]$. While $\E[t_q]$ is trivial to compute --- it is merely the original prediction from our predictor, estimating $\Var[t_q]$ is a challenging problem due to the correlations presented in selectivity estimates.

In more detail, so far we have observed the additive nature of $t_q$, that is, $t_q=\sum_{k=1}^m t_k$ and $t_k=\sum_{c\in\mathcal{C}}t_{kc}$ (Section~\ref{sec:uncertainty:distr:tk}). Recall the fact that for sum of random variables $Y = \sum_{1\leq i\leq m}Y_i$,
$$\Var[Y] = \sum\nolimits_{1\leq i,j\leq m}\Cov(Y_i,Y_j).$$
Applying this to $t_q$, our task is then to compute each $\Cov(t_i,t_j)$. Note that $\Cov(t_i,t_i) = \Var[t_i]$ which is easy to compute, so it is left to compute $\Cov(t_i,t_j)$ for $i \neq j$. By linearity of covariance,
\begin{align*}
  \label{eq:var-tq}
  \Cov(t_i,t_j) = \Cov\Big(\sum_{c \in {\cal C}}t_{ic}, \sum_{c \in {\cal C}}t_{jc}\Big)
  = \sum_{c,c' \in {\cal C}}\Cov(t_{ic}, t_{jc'}).
\end{align*}

In the following, we first specify the cases where direct computation of $\Cov(t_{ic}, t_{jc'})$ can be done. We then develop upper bounds for those covariances that cannot be directly computed.

\subsubsection{Direct Computation of Covariances}

Any $\Cov(t_{ic}, t_{jc'})$ can fall into the following two cases:
\begin{itemize}
\item $i=j$, then it is the covariance between different cost functions from the same operator.
\item $i\neq j$, then it is the covariance between cost functions from different operators.
\end{itemize}

Consider the case $i=j$ first. If the operator is unary, regarding the cost functions we are concerned with, we only need to consider $\Cov(X, X)$, $\Cov(X, X^2)$, and $\Cov(X^2, X^2)$, where $X\sim \mathcal{N}(\mu,\sigma^2)$. Since $X$ is normal, the non-central moments of $X$ can be expressed in terms of $\mu$ and $\sigma^2$. Hence it is straightforward to compute these covariances~\cite{winkelbauer2012moments}. If the operator is binary, then we need to consider $\Cov(X_l, X_l)$, $\Cov(X_r, X_r)$, $\Cov(X_l, X_r)$, $\Cov(X_lX_r, X_l)$, $\Cov(X_lX_r, X_r)$, and $\Cov(X_lX_r, X_lX_r)$. By Lemma~\ref{lemma:ind-bin}, $X_l\bot X_r$. So we are able to directly compute these covariances as well.

When $i\neq j$, while the types of covariances that we need to consider are similar as before, it is more complicated since the selectivities are no longer independent. Without loss of generality, we consider two operators $O$ and $O'$ such that $O\in Desc(O')$. By Lemma~\ref{lemma:correlated}, this is the only case where the covariances might not be zero. Based on the cost functions considered in this paper, we need to consider the covariances $\Cov(Z,Z')$, where $Z\in\{X_l, X_l^2, X_r, X_lX_r\}$ and $Z'\in\{X'_l, (X'_l)^2, X'_r, X'_lX'_r\}$. Some of them can be directly computed by applying Lemma~\ref{lemma:correlated}, while the others can only be bounded as discussed in the next section.

\begin{example}[Covariances between selectivities]
To illustrate, consider the two join operators $O_4$ and $O_5$ in Figure~\ref{fig:query-plan}. Assume that the cost functions of $O_4$ and $O_5$ are all linear, i.e., they are of type (C5'). Based on Lemma~\ref{lemma:ind-bin}, $\Cov(X_1, X_2)=0$ and $\Cov(X_4, X_3)=0$. Also, based on Lemma~\ref{lemma:correlated}, $\Cov(X_1, X_3)=0$ and $\Cov(X_2, X_3)=0$. However, we are not able to compute $\Cov(X_1, X_4)$ and $\Cov(X_2, X_4)$. Instead, we provide upper bounds for them.
\end{example}

\subsubsection{Upper Bounds of Covariances}\label{sec:uncertainty:covar:bounds}

Based on the fact that the covariance between two random variables is bounded by the geometric mean of their variances~\cite{ross}, we can establish an upper bound for $Z$ and $Z'$ in the previous section:
$$|\Cov(Z,Z')|\leq\sqrt{\Var[Z]\Var[Z']}.$$
Note that the variances are directly computable based on the independence assumptions (Lemma~\ref{lemma:ind-bin} and~\ref{lemma:correlated}).

By analyzing the correlation of the samples used in selectivity estimation, we can develop tighter bounds (details in Appendix~\ref{sec:proofs:theorem:tighter-bound}). The key observation here is that the correlations are caused by the samples from the shared relations. Consider two operators $O$ and $O'$ such that $O\in Desc(O')$. Suppose that $|\mathcal{R}\cap\mathcal{R}'|=m$ ($m\geq 1$), namely, $O$ and $O'$ share $m$ common leaf tables. Let the estimators for $O$ and $O'$ be $\rho_n$ and $\rho'_n$, where $n$ is the number of sample steps. We define $S_{\rho}^2(m,n)$ to be the variance of samples restricted to the $m$ common relations. This is actually a generalization of $\Var[\rho_n]$. To see this, let $\mathcal{R}'=\mathcal{R}$. Then $\rho_n=\rho'_n$ and hence
$$\Var[\rho_n]=\Cov(\rho_n,\rho_n)=\Cov(\rho_n,\rho'_n)=S_{\rho}^2(K,n),$$
where $K=|\mathcal{R}|$. We can show that $S_{\rho}^2(m,n)$ is a monotonically increasing function of $m$ (see Appendix~\ref{sec:proofs:theorem:tighter-bound}). As a result, $S_{\rho}^2(m,n)\leq \Var[\rho_n]$ given that $m\leq K$. Hence, we have the following refined upper bound for $\Cov(\rho_n,\rho'_n)$:
$$|\Cov(\rho_n,\rho'_n)|\leq\sqrt{S_{\rho}^2(m,n)S_{\rho'}^2(m,n)}\leq \sqrt{\Var[\rho_n]\Var[\rho'_n]}.$$
To compute $S_{\rho}^2(m,n)$, we use an estimator akin to the estimator $\sigma_n^2=S_n^2 / n$ that we used to estimate $\Var[\rho_n]$. Specifically, define
$$S_{n,m}^2=\sum_{r=1}^{m}\bigg(\frac{1}{n-1}\sum_{j=1}^n(Q_{r,j,n}/n^{m-1}-\rho_n)^2\bigg),$$
for $n\geq 2$ (we set $S_{1,m}^2=0$). 
Very similarly, we can show that $\lim_{n\to\infty}S_{n,m}^2=n S_{\rho}^2(m,n)$. As a result, it is reasonable to approximate $S_{\rho}^2(m,n)$ with $S_{\rho}^2(m,n)\approx S_{n,m}^2 / n$. Moreover, by comparing the expressions of $S_{n,m}^2$ and $S_n^2$ (ref. Equation~(\ref{eq:sn-square})), we can see that $S_n^2=S_{n,K}^2$. Therefore it is straightforward to adapt the implementation framework in Section~\ref{sec:distribution:Sn2} to compute $S_{n,m}^2$. More discussions on bounding covariances are in Appendix~\ref{sec:more-cov-bounds}.

\section{Experimental Evaluation} \label{sec:experiment}

We present experimental evaluation results in this section. There are two key respects that could impact the utility of a predictor: its prediction accuracy and runtime overhead. However, for the particular purpose of this paper, we do not care much about the \emph{absolute} accuracy of the prediction. Rather, we care if the distribution of likely running times reflects the uncertainty in the prediction. Specifically, we measure if the estimated prediction errors are correlated with the actual errors. To measure the accuracy of the predicted distribution, we also compare the estimated likelihoods that the actual running times will fall into certain confidence intervals with the actual likelihoods. On the other hand, we measure the runtime overhead of the sampling-based approach in terms of its relative overhead with respect to the original query running time without sampling. We start by presenting the experimental settings and the benchmark queries we used.

\subsection{Experimental Settings}

We implemented our proposed framework in PostgreSQL 9.0.4. We ran PostgreSQL under Linux 3.2.0-26, and we evaluated our approaches with both the TPC-H 1GB and 10 GB databases. Since the original TPC-H database generator uses uniform distributions, to test the effectiveness of the approach under different data distributions, we used a skewed TPC-H database generator~\cite{skewed-gen}. It produces TPC-H databases with a Zipf distribution and uses a parameter $z$ to control the degree of skewness. $z$ = 0 generates a uniform distribution, and the data becomes more skewed as $z$ increases. We created skewed databases using $z$ = 1. All experiments were conducted on two machines with the following configurations:
\begin{itemize}
\item PC1: Dual Intel 1.86 GHz CPU and 4GB of memory;
\item PC2: 8-core 2.40GHz Intel CPU and 16GB of memory.
\end{itemize}

\subsection{Benchmark Queries}

We created three benchmarks \textbf{MICRO}, \textbf{SELJOIN}, and \textbf{TPCH}:
\begin{itemize}
\item \textbf{MICRO} consists of pure selection queries (i.e., scans) and two-way join queries. It is a micro-benchmark with the purpose of exploring the strength and weakness of our proposed approach at different points in the selectivity space. We generated the queries with the similar ideas used in the Picasso database query optimizer visualizer~\cite{ReddyH05}. Since the queries have either one (for scans) or two predicates (for joins), the selectivity space is either one or two dimensional. We generated SQL queries that were evenly across the selectivity space, by using the statistics information (e.g., histograms) stored in the database catalogs to compute the selectivities.

\item \textbf{SELJOIN} consists of selection-join queries with multi-way joins. We generated the queries in the following way. We analyzed each TPC-H query template, and identified the ``maximal'' sub-query without aggregates. We then randomly generated instance queries from these \emph{reduced} templates. The purpose is to test the particular type of queries to which our proposed approach is tailored --- the selection-join queries.

\item \textbf{TPCH} consists of instance queries from the TPC-H templates. These queries also contain aggregates, and our current strategy is simply ignoring the uncertainty there (recall Section~\ref{sec:distribution:Sn2}). The purpose of this benchmark is to see how this simple work-around works in practice. We used 14 TPC-H templates: 1, 3, 4, 5, 6, 7, 8, 9, 10, 12, 13, 14, 18, and 19. We did not use the other templates since their query plans contain structures that cannot be handled by our current framework (e.g., sub-query plans or views).
\end{itemize}

We ran each query 5 times and took the average as the actual running time of a query. We cleared both the filesystem cache and the database buffer pool between each run of each query.

\subsection{Usefulness of Predicted Distributions}\label{sec:experiment:accuracy}

Since our goal is to quantify the uncertainty in the prediction and our output is a distribution of likely running times, the question is then how we can know that we have something useful. A reasonable metric here could be the correlation between the standard deviation of the predicted (normal) distribution and the actual prediction error. Intuitively, the standard deviation indicates the confidence of the prediction. A larger standard deviation indicates lower confidence and hence larger potential prediction error. With this in mind, if our approach is effective, we would expect to see positive correlations between the standard deviations and the real prediction errors when a large number of queries are tested.

A common metric used to measure the correlation between two random variables is the Pearson correlation coefficient $r_p$. Suppose that we have $n$ queries $q_1$, ..., $q_n$. Let $\sigma_i$ be the standard deviation of the distribution predicted for $q_i$, $\mu_i$ and $t_i$ be the predicted (mean) and actual running time of $q_i$, and $e_i=|\mu_i-t_i|$ be the prediction error. $r_p$ is then defined as
\begin{equation}\label{eq:rp}
r_p=\frac{\sum_{i=1}^n(\sigma_i-\bar{\sigma})(e_i-\bar{e})}{\sqrt{\sum_{i=1}^n(\sigma_i-\bar{\sigma})^2}{\sqrt{\sum_{i=1}^n(e_i-\bar{e})^2}}},
\end{equation}
where $\bar{\sigma}=\frac{1}{n}\sum_{i=1}^n \sigma_i$ and $\bar{e}=\frac{1}{n}\sum_{i=1}^n e_i$.

\begin{figure*}
\centering
\subfigure[\textbf{MICRO}, Uniform 1GB, PC2]{ \label{fig:cc:micro}
\includegraphics[width=0.68\columnwidth]{./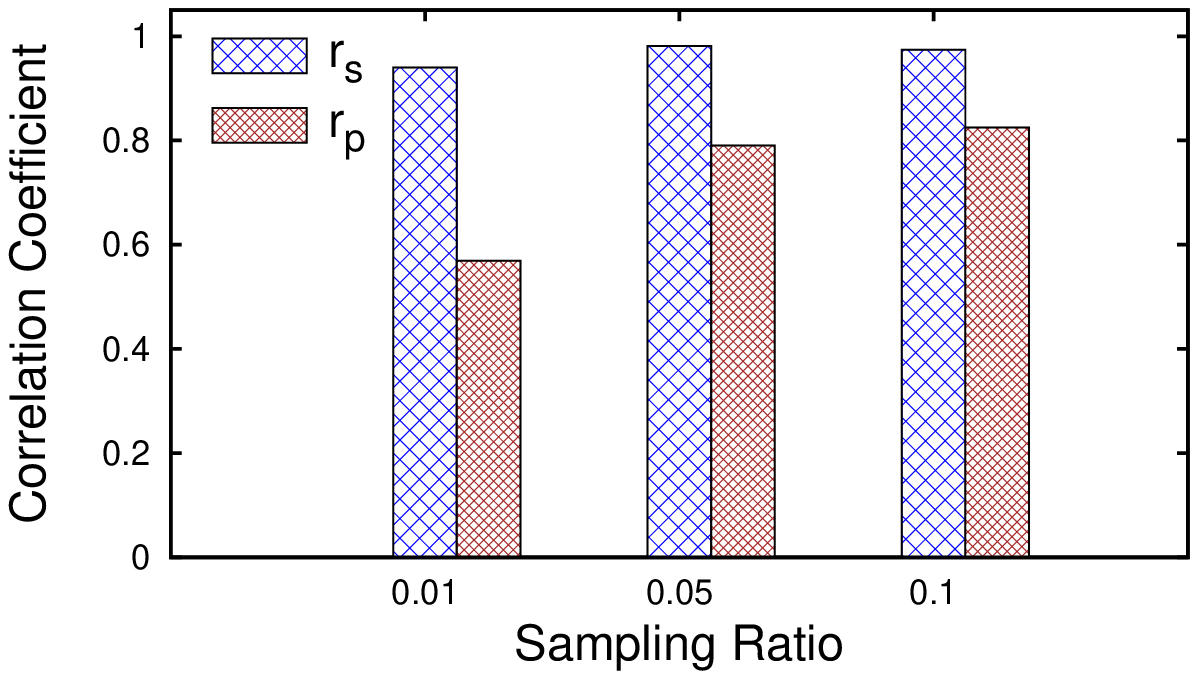}}
\subfigure[\textbf{SELJOIN}, Uniform 1GB, PC1]{ \label{fig:cc:sj}
\includegraphics[width=0.68\columnwidth]{./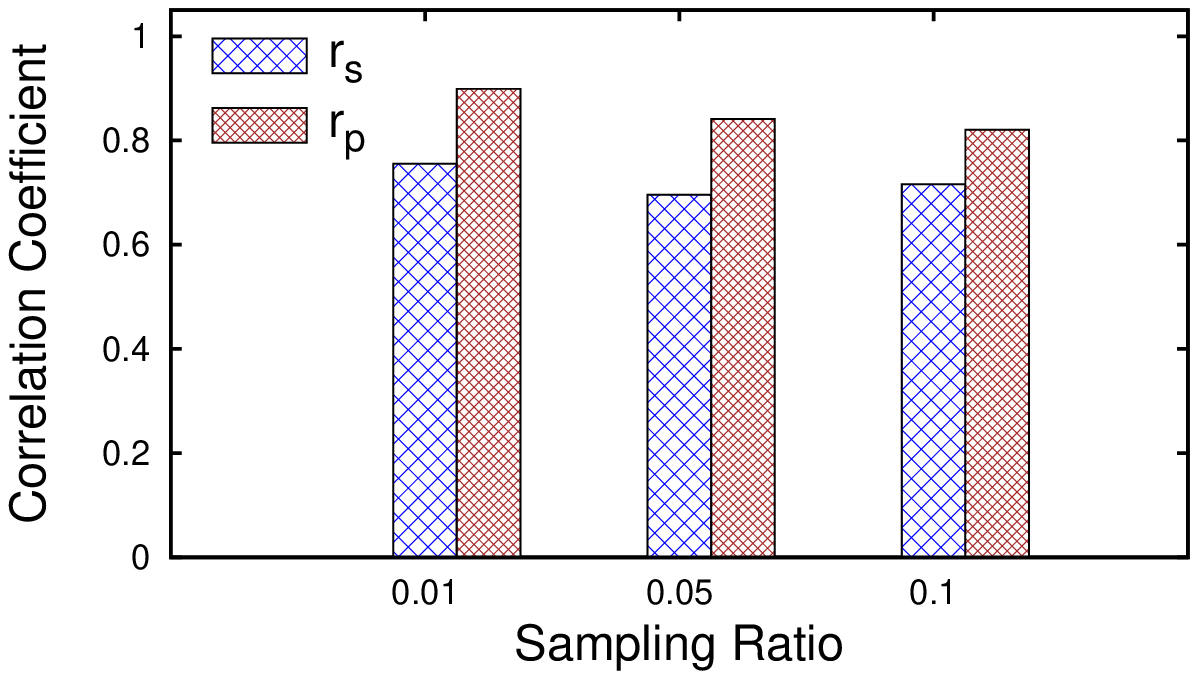}}
\subfigure[\textbf{TPCH}, Skewed 10GB, PC1]{ \label{fig:cc:tpch}
\includegraphics[width=0.68\columnwidth]{./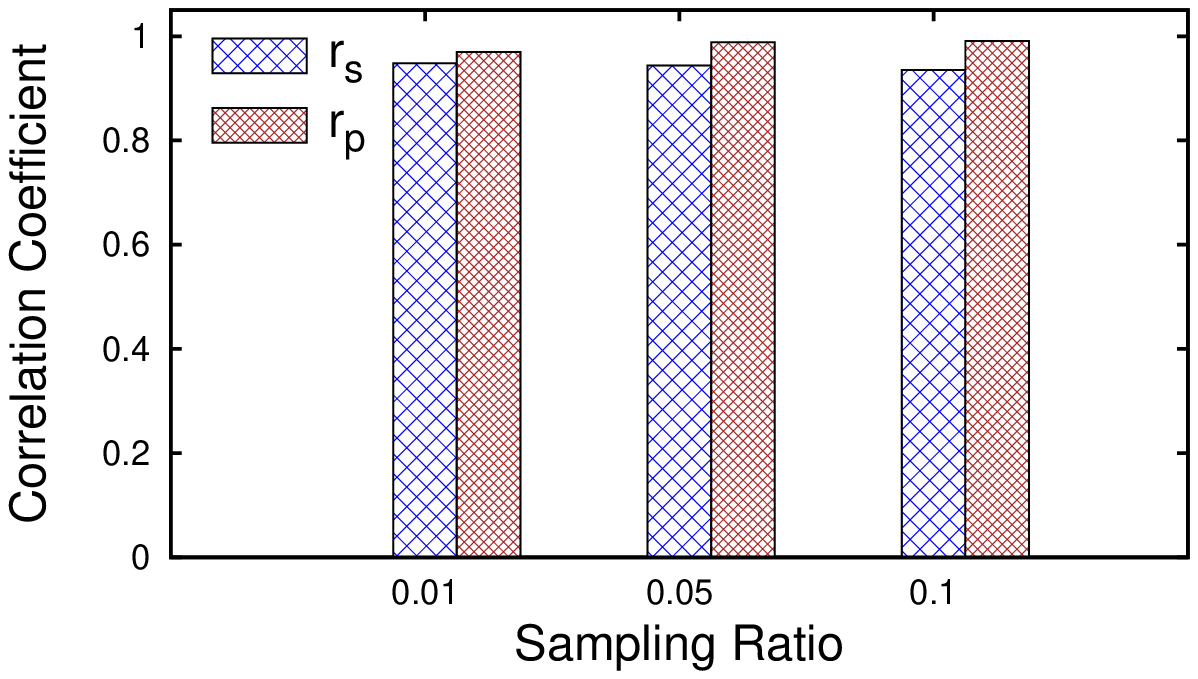}}
\caption{$r_s$ and $r_p$ of the benchmark queries over different hardware and database settings}
\label{fig:cc}
\shrink
\end{figure*}

\begin{figure*}
\centering
\subfigure[Case (1)]{ \label{fig:robust:case1}
\includegraphics[width=0.68\columnwidth]{./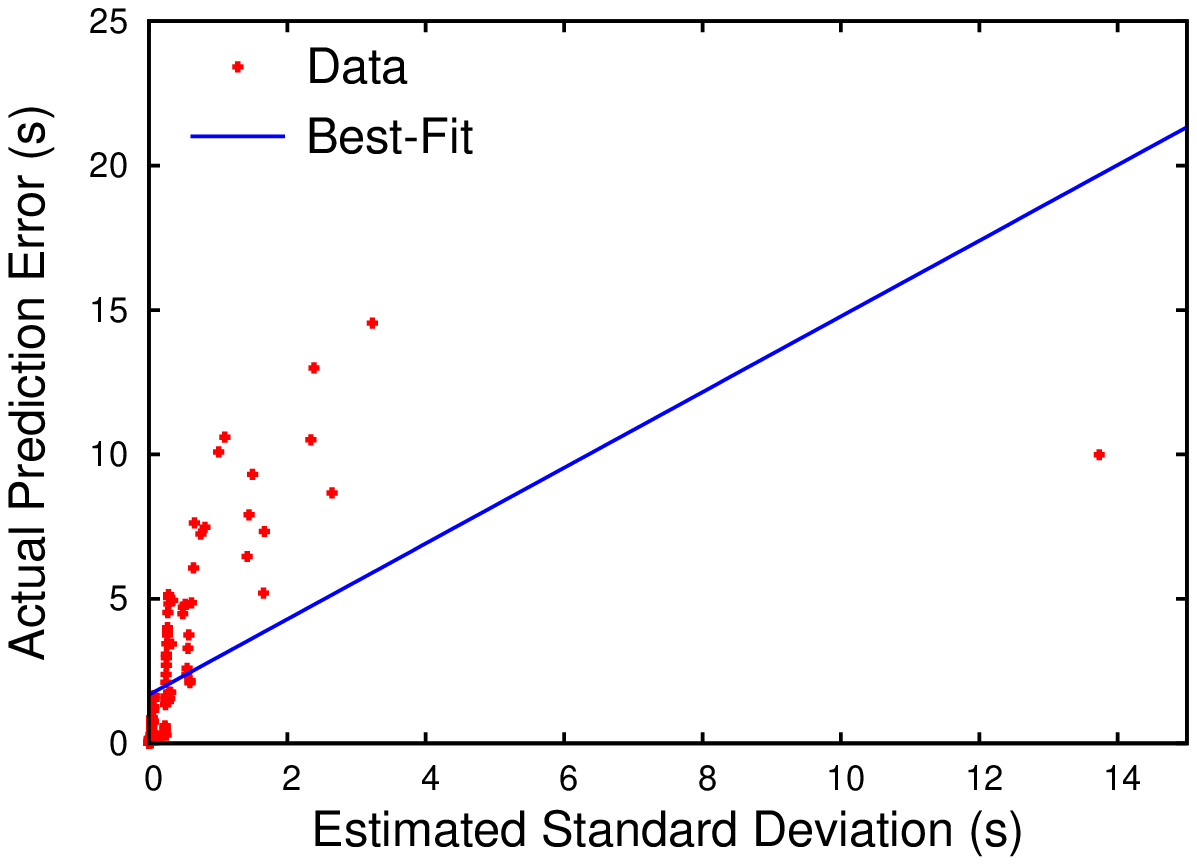}}
\subfigure[Case (1) after one outlier is removed]{ \label{fig:robust:case1-outlier-removed}
\includegraphics[width=0.68\columnwidth]{./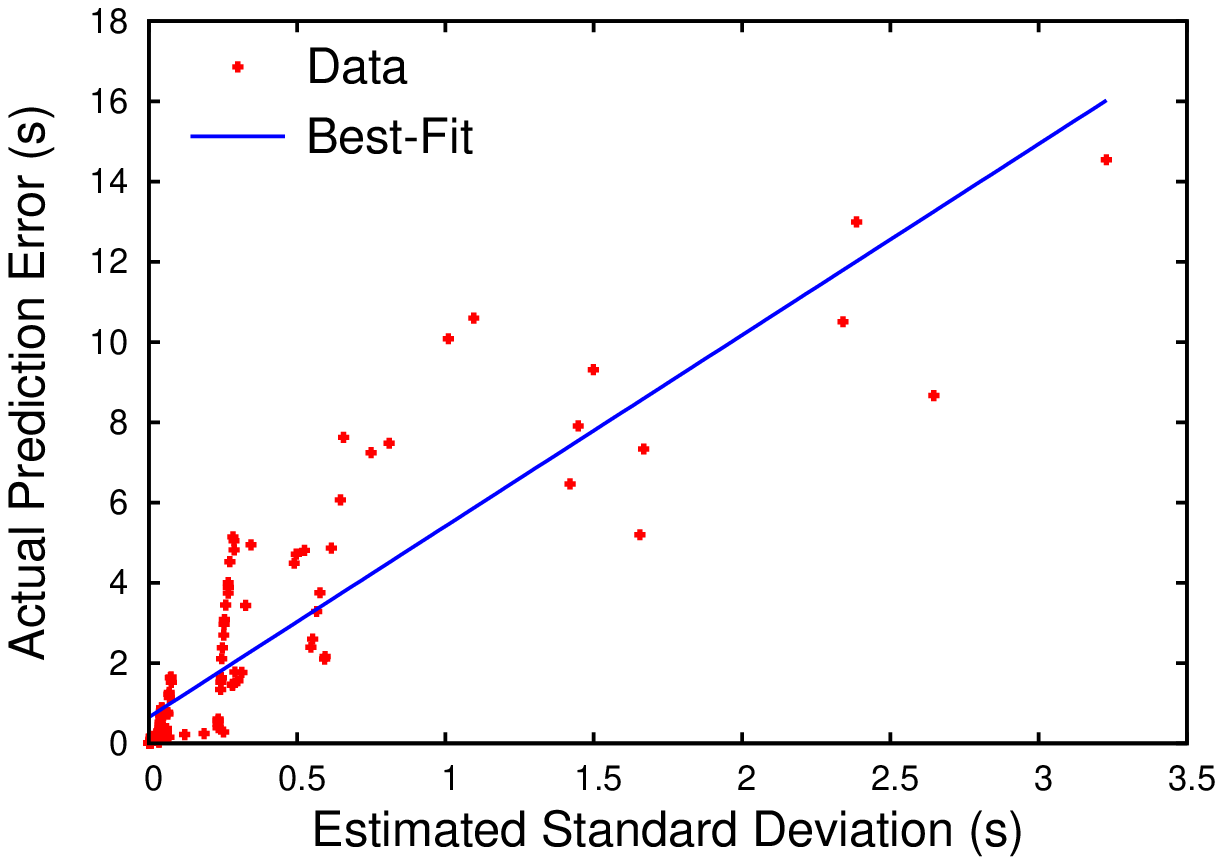}}
\subfigure[Case (2)]{ \label{fig:robust:case2}
\includegraphics[width=0.68\columnwidth]{./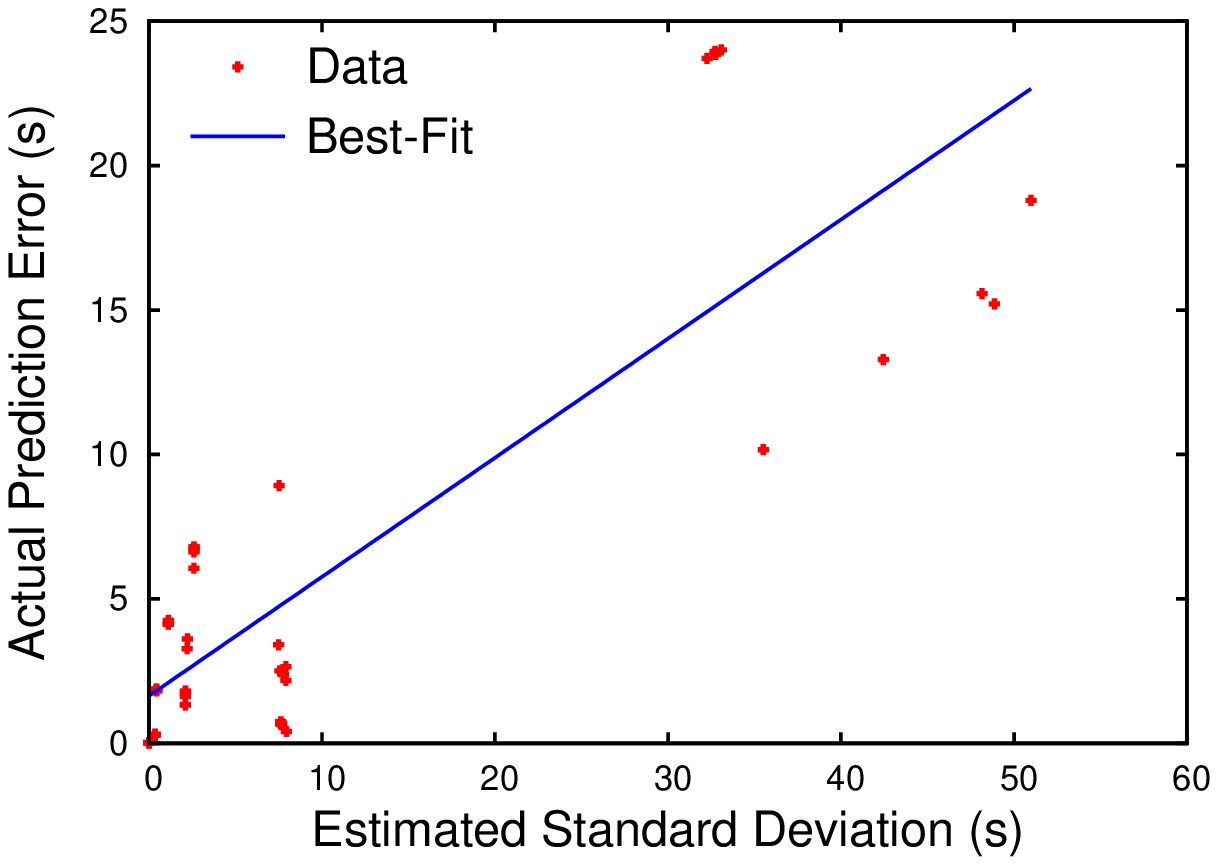}}
\caption{Robustness of $r_s$ and $r_p$ with respect to outliers}
\label{fig:robust}
\shrink
\end{figure*}

Basically, $r_p$ measures the \emph{linear} correlation between the $\sigma$'s and the $e$'s.
The closer $r_p$ is to $1$, the better the correlation is. However, there are two issues here. First, even if the $\sigma$'s and the $e$'s are positively correlated, the correlation may not be linear. Second, $r_p$ is not robust and its value can be misleading if outliers are present~\cite{Dev75-robustness}.
Therefore, we also measure the correlations by using another well known metric called the Spearman's rank correlation coefficient $r_s$~\cite{myers2003-spearman}. The formula of $r_s$ is the same as Equation~(\ref{eq:rp}) except for that the $\sigma$'s and $e$'s are replaced with their \emph{ranks} in the ascending order of the values. For instance, given three $\sigma$'s $\sigma_1=4$, $\sigma_2=7$, and $\sigma_3=5$, their ranks are $1$, $3$, and $2$ respectively. Intuitively, $r_s$ indicates the linear correlation between the ranks of the values, which is more robust than $r_p$ since the mapping from the values to their ranks can be thought of as some \emph{normalization} procedure that reduces the impact of outliers. In fact, $r_s$ assesses how well the correlation can be characterized by using a \emph{monotonic} function and $r_s=1$ means the correlation is perfect.

In Figure~\ref{fig:cc}, we report the $r_s$'s (and the corresponding $r_p$'s) for the benchmark queries over different hardware and database settings (see Table~\ref{tab:cc:pred-time} of Appendix~\ref{sec:more-exp-results:correlations} for the complete results). Here, sampling ratio (SR) stands for the fraction of the sample size with respect to the database size. For instance, SR = 0.01 means that 1\% of the data is taken as samples. We have several observations.

First, for most of the cases we tested, both $r_s$ and $r_p$ are above 0.7 (in fact above 0.9), which implies strong positive (linear) correlation between the standard deviations of the predicted distributions and the actual prediction errors.\footnote{It is generally believed that two variables are strongly correlated if their correlation coefficient is above 0.7.}
Second, in~\cite{WuCZTHN13} we showed that as expected, prediction errors can be reduced by using larger number of samples. Interestingly, it is not necessarily the case that more samples improves the correlation between the predicted and actual errors. This is because taking more samples
simultaneously reduces the errors in selectivity estimates and the uncertainty in the predicted running times. So it might improve
the estimate but not the correlation with the true errors. Third, reporting both $r_s$ and $r_p$ is necessary since they sometimes disagree with each other. For instance, consider the following two cases in Figure~\ref{fig:cc:micro} and~\ref{fig:cc:sj}:
\begin{enumerate}[(1)]
\item On PC2, the \textbf{MICRO} queries over the uniform TPC-H 1GB database give $r_s$ = 0.9400 but $r_p$ = 0.5691 when SR = 0.01;
\item On PC1, the \textbf{SELJOIN} queries over the uniform TPC-H 1GB database give $r_s$ = 0.6958 but $r_p$ = 0.8414 when SR = 0.05.
\end{enumerate}
In Figure~\ref{fig:robust:case1} and~\ref{fig:robust:case2}, we present the scatter plots of these two cases. Figure~\ref{fig:robust:case1-outlier-removed} further shows the scatter plot after the rightmost point is removed from Figure~\ref{fig:robust:case1}. We find that now $r_s=0.9386$ but $r_p=0.8868$. So $r_p$ is much more sensitive to outliers in the population. Since in our context there is no good criterion to remove outliers, $r_s$ is thus more trustworthy. On the other hand, although the $r_p$ of (2) is better than that of (1), by comparing Figure~\ref{fig:robust:case1-outlier-removed} with Figure~\ref{fig:robust:case2} we would instead conclude that the correlation of (2) seems to be worse. This is again implied by the worse $r_s$ of (2). More results and analysis can be found in Appendix~\ref{sec:more-exp-results:selectivity}.

\begin{figure*}[!htb]
\centering
\subfigure[\textbf{MICRO}]{ \label{fig:dist:micro}
\includegraphics[width=0.68\columnwidth]{./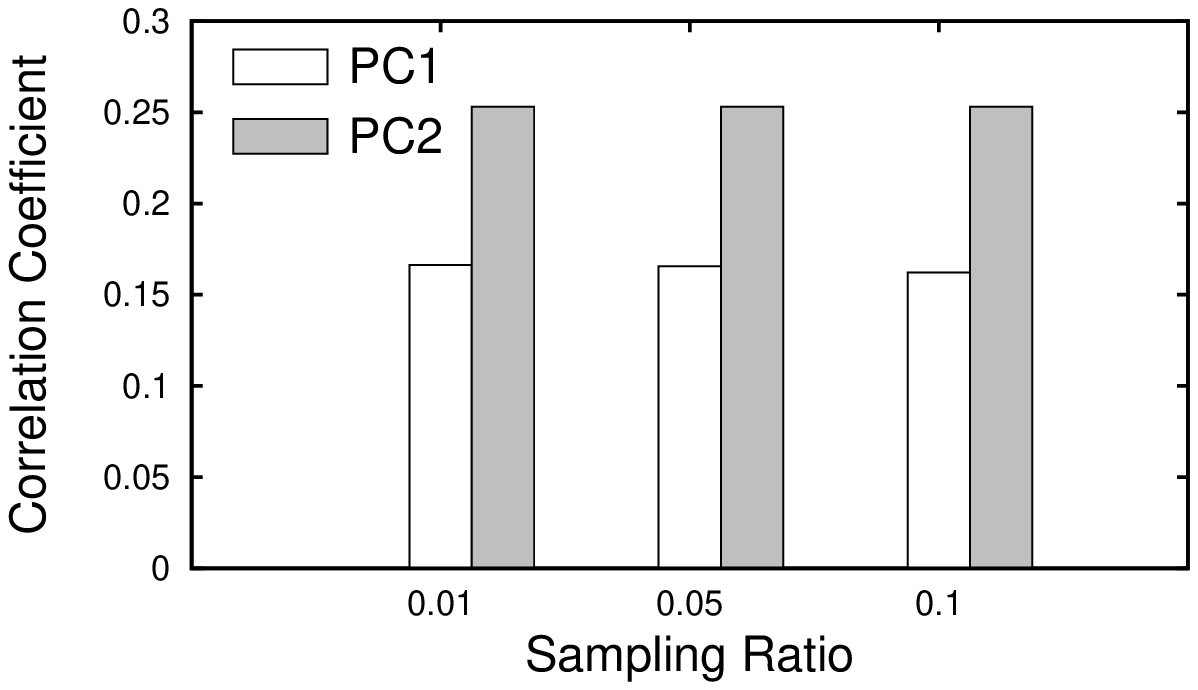}}
\subfigure[\textbf{SELJOIN}]{ \label{fig:dist:sj}
\includegraphics[width=0.68\columnwidth]{./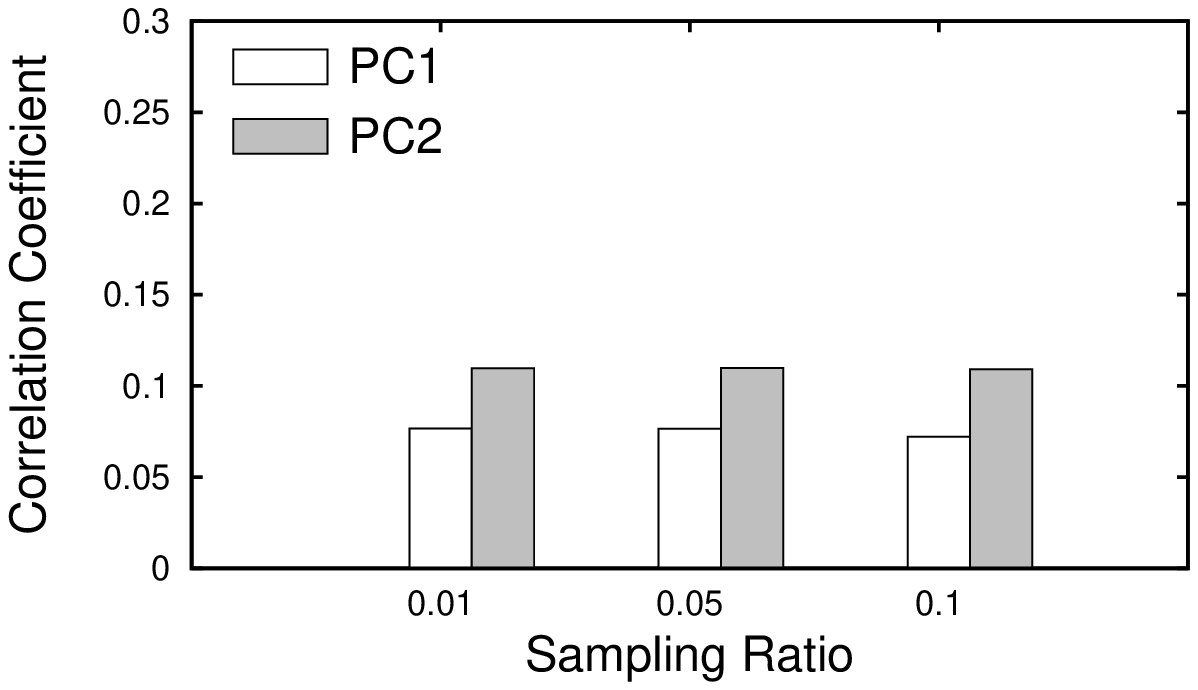}}
\subfigure[\textbf{TPCH}]{ \label{fig:dist:tpch}
\includegraphics[width=0.68\columnwidth]{./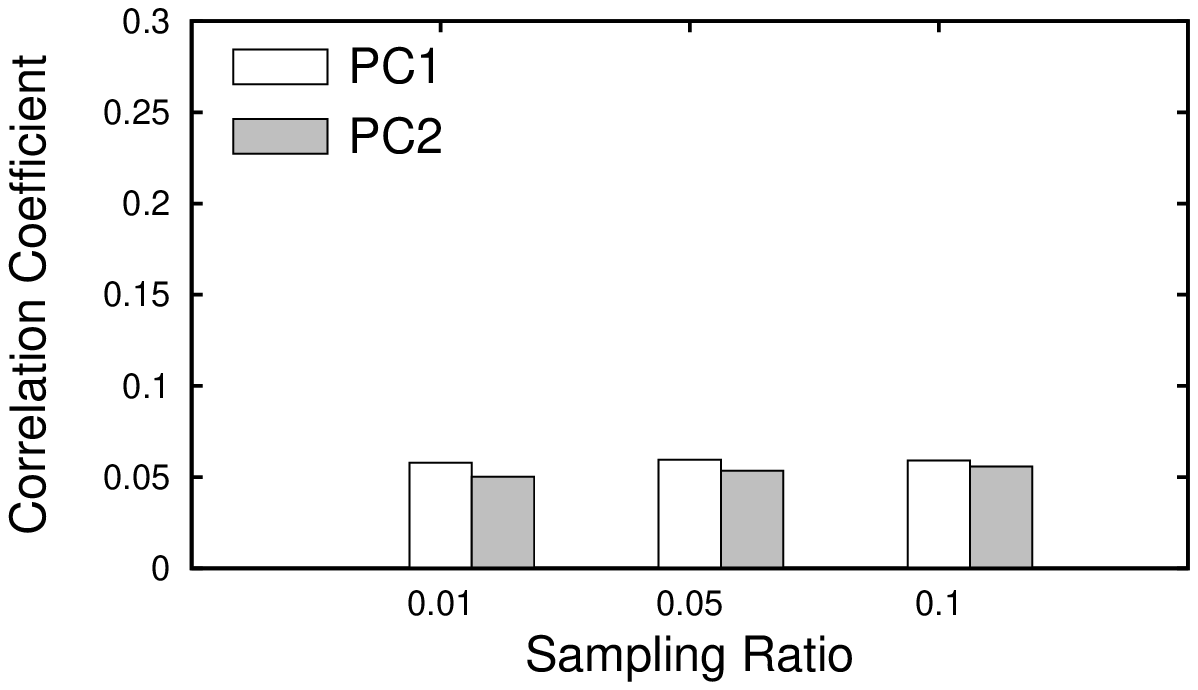}}
\caption{$\overline{D}_n$ of the benchmark queries over uniform TPC-H 10GB databases}
\label{fig:dist}
\shrink
\end{figure*}

\begin{figure*}
\centering
\subfigure[Case (1), $\overline{D}_n$ = 0.2532]{ \label{fig:ks:case1}
\includegraphics[width=0.68\columnwidth]{./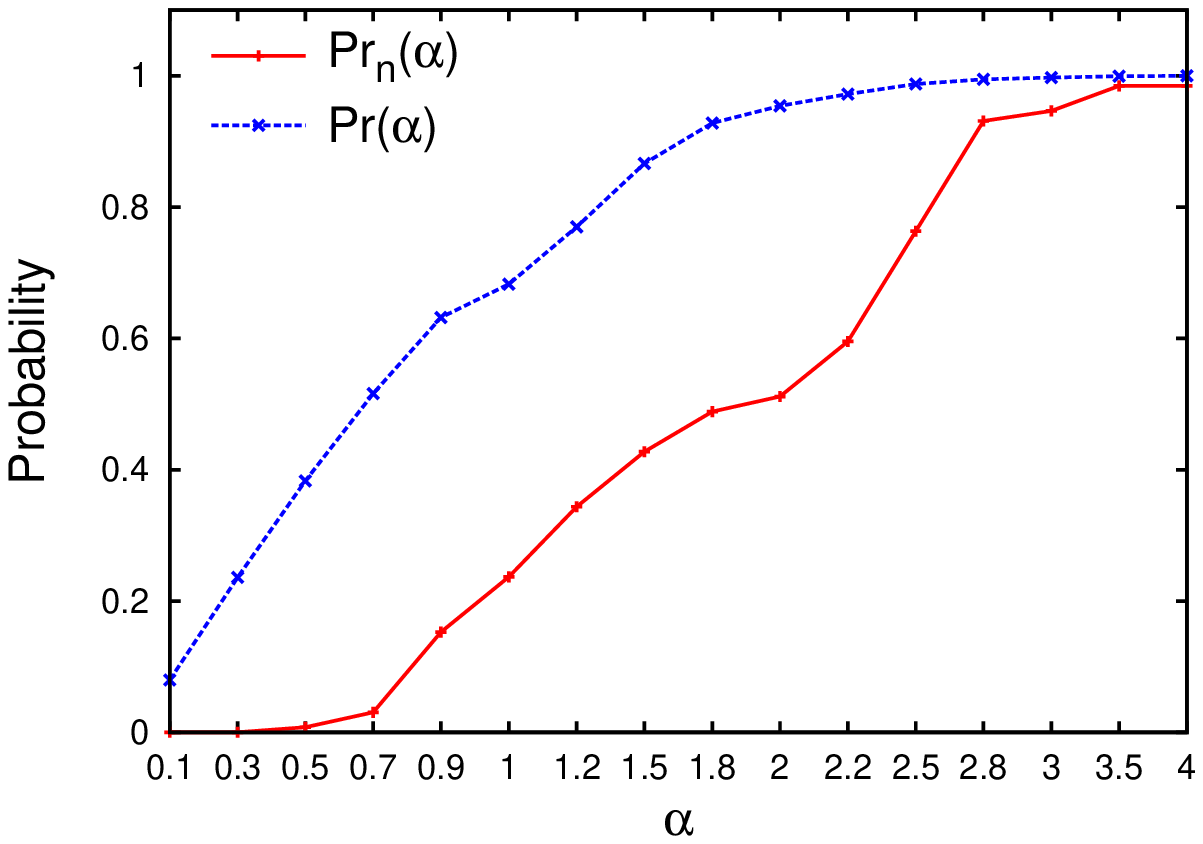}}
\subfigure[Case (2), $\overline{D}_n$ = 0.1098]{ \label{fig:ks:case2}
\includegraphics[width=0.68\columnwidth]{./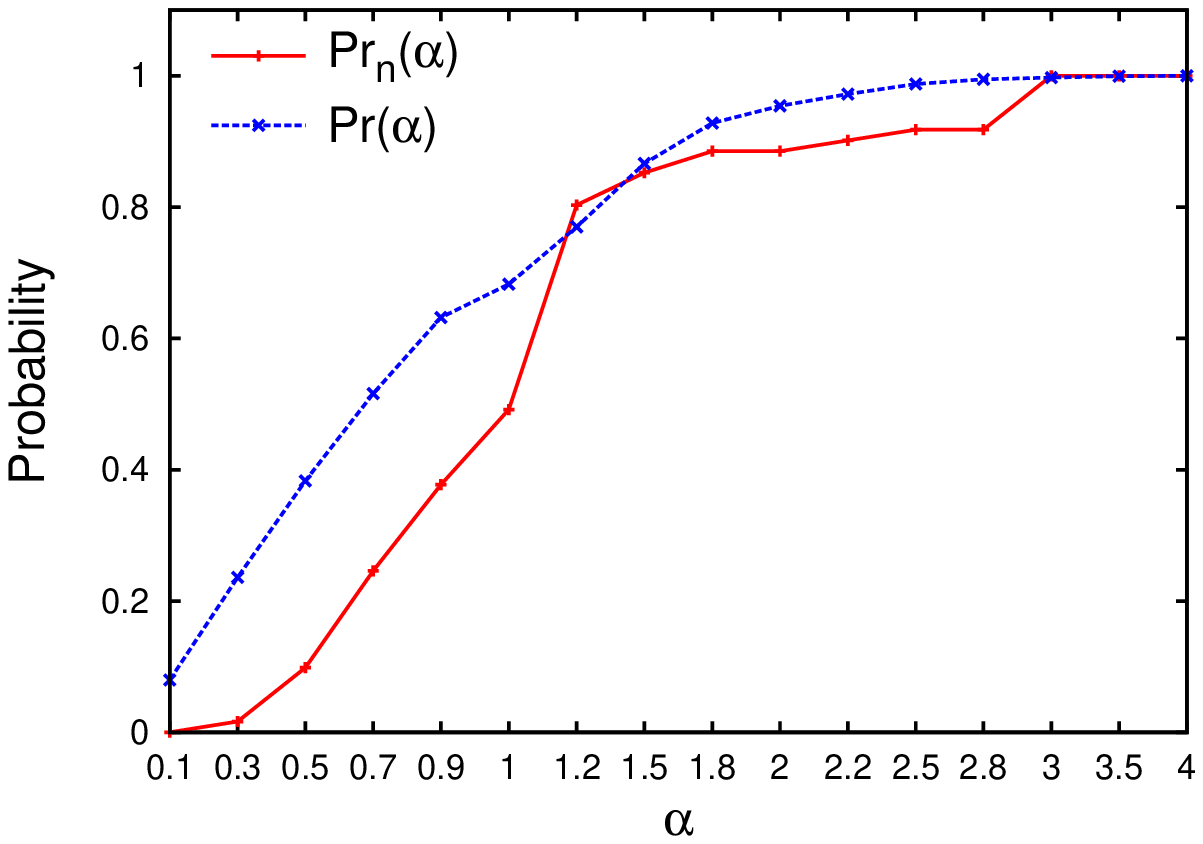}}
\subfigure[Case (3), $\overline{D}_n$ = 0.0535]{ \label{fig:ks:case3}
\includegraphics[width=0.68\columnwidth]{./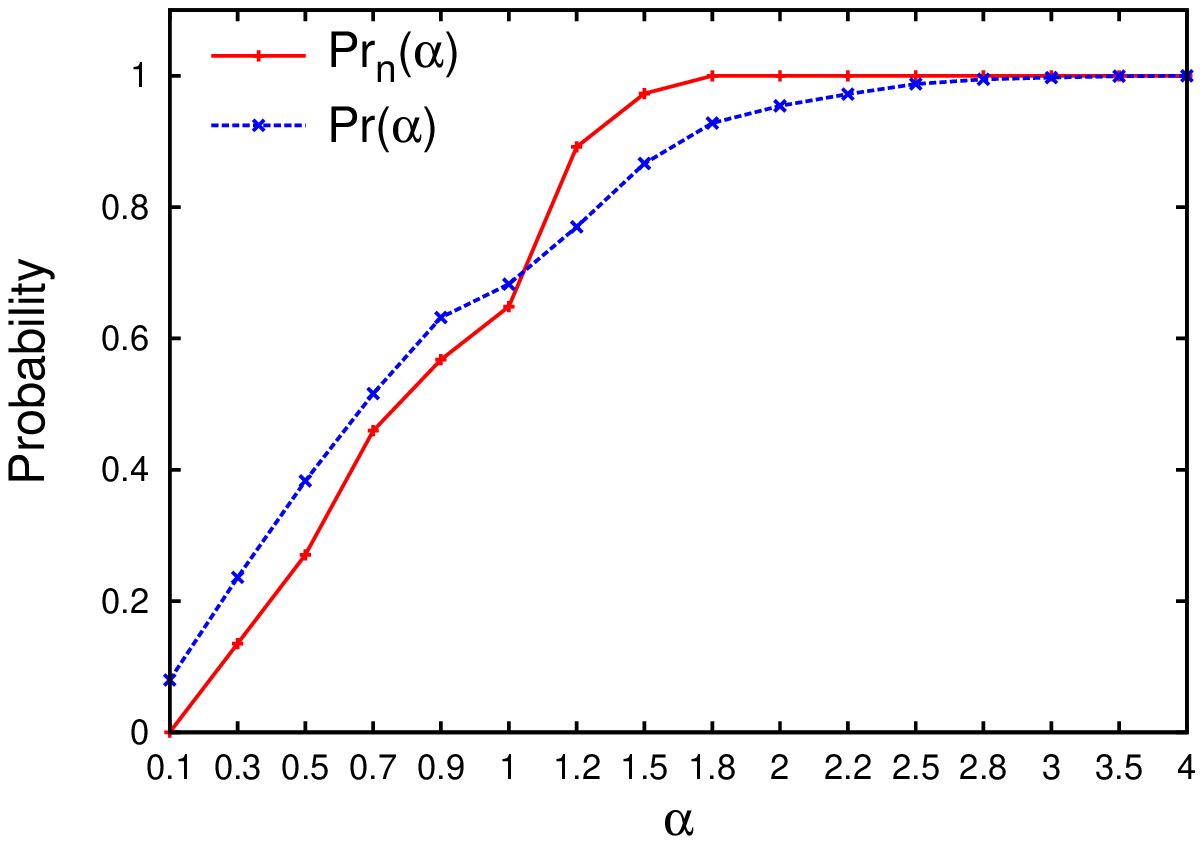}}
\caption{The proximity of $\Pr\nolimits_n(\alpha)$ and $\Pr(\alpha)$ with respect to different $\overline{D}_n$'s}
\label{fig:ks}
\shrink
\end{figure*}

Nonetheless, the strong positive correlations between the estimated standard deviations and the actual prediction errors may not be sufficient to conclude that the distributions of likely running times are useful. For our purpose of informing the consumer of the running time estimates of the potential prediction errors, it might be worth to further consider what information regarding the errors the predicted distributions really carry. Formally, consider the $n$ queries $q_1$, ..., $q_n$ as before. Since the estimated distributions are normal, with the previous notation the distribution for the likely running times $T_i$ of $q_i$ is $T_i\sim\mathcal{N}(\mu_i, \sigma_i^2)$. As a result, assuming $\alpha > 0$, without loss of generality the estimated prediction error $E_i=|T_i-\mu_i|$ follows the distribution
$$\Pr(E_i\leq \alpha\sigma_i)=\Pr(-\alpha\leq \frac{T_i-\mu_i}{\sigma_i}\leq\alpha)=2\Phi(\alpha)-1,$$
where $\Phi$ is the cumulative distribution function of the standard normal distribution $\mathcal{N}(0,1)$. Therefore, if we define the statistic $E'_i=\frac{E_i}{\sigma_i}=|\frac{T_i-\mu_i}{\sigma_i}|$, then $\Pr(E'_i\leq\alpha)=\Pr(E_i\leq \alpha\sigma_i)$. Note that $\Pr(E'_i\leq \alpha)$ is determined by $\alpha$ but not $i$. We thus simply use $\Pr(\alpha)$ to denote $\Pr(E'_i\leq\alpha)$. On the other hand, we can estimate the actual likelihood of $E'_i\leq\alpha$ by using
$$\Pr\nolimits_n(\alpha)=\frac{1}{n}\sum_{i=1}^n I(e'_i\leq \alpha),\textrm{ where }e'_i=\frac{e_i}{\sigma_i}=|\frac{t_i-\mu_i}{\sigma_i}|.$$
Here $I$ is the indicator function. To measure the proximity of $\Pr\nolimits_n(\alpha)$ and $\Pr(\alpha)$, we define
$$D_n(\alpha)=|\Pr\nolimits_n(\alpha)-\Pr(\alpha)|.$$
Clearly, a smaller $D_n(\alpha)$ means $\Pr(\alpha)$ is closer to $\Pr\nolimits_n(\alpha)$, which implies better quality of the distributions. We further generated $\alpha$'s from the interval $(0,6)$ which is sufficiently wide for normal distributions and computed the average of the $D_n(\alpha)$'s (denoted as $\overline{D}_n$). Figure~\ref{fig:dist} reports the results for the benchmark queries over uniform TPC-H 10GB databases (see Table~\ref{tab:ks} of Appendix~\ref{sec:more-exp-results:distances} for the complete results).

We observe that in most cases the $\overline{D}_n$'s are below 0.3 with the majority below 0.2, which suggests that the estimated $\Pr(\alpha)$'s are reasonably close to the observed $\Pr\nolimits_n(\alpha)$'s. To shed some light on what is going on here, in Figure~\ref{fig:ks} we further plot the $\Pr(\alpha)$ and $\Pr\nolimits_n(\alpha)$ for
the (1) \textbf{MICRO}, (2) \textbf{SELJOIN}, and (3) \textbf{TPCH} queries over the uniform TPC-H 10GB database on PC2 when SR = 0.05,
which give $\overline{D}_n$ = 0.2532, 0.1098, and 0.0535 respectively.
We can see that we overestimated the $\Pr(\alpha)$'s for small $\alpha$'s. In other words, we underestimated the prediction errors by presenting smaller than actual variances in the distributions. Moreover, we find that overestimate is more significant for the \textbf{MICRO} queries (Figure~\ref{fig:ks:case1}). One possible reason is that since these queries are really simple the predictor tends to be over-confident by underestimating the variances even more. When handling \textbf{SELJOIN} and \textbf{TPCH} queries, the confidence of the predictor drops and underestimate tends to be alleviated (Figure~\ref{fig:ks:case2} and~\ref{fig:ks:case3}).

\subsubsection{More Discussion on Correlation}

While using ordinal ranks instead of values can help in smoothing the data reducing the impact of outliers, it is still imperfect. The best way of presenting correlations between two quantities might be a scatter plot (as shown in Figure~\ref{fig:robust}).
There are four possible cases: (1) $r_s$ is better than $r_p$; (2) $r_p$ is better than $r_s$; (3) $r_p$ and $r_s$ are both good; and (4) $r_p$ and $r_s$ are both not so good.
We have presented scatter plots for (1) and (2) in Figure~\ref{fig:robust:case1} and~\ref{fig:robust:case2}, respectively.
To gain more insight, in Figure~\ref{fig:cc-casestudy:3} and~\ref{fig:cc-casestudy:4} we present two typical scatter plots for (3) and (4):

\begin{enumerate}[(1)]
\setcounter{enumi}{2}
\item On PC1, the \textbf{TPCH} queries over the skewed TPC-H 10GB database give $r_s$ = 0.9439 and $r_p$ = 0.9887 when SR = 0.05;
\item On PC1, the \textbf{TPCH} queries over the uniform TPC-H 1GB database give $r_s$ = 0.7209 and $r_p$ = 0.7571 when SR = 0.01.
\end{enumerate}

\begin{figure}[!htb]
\centering
\subfigure[Case (3)]{ \label{fig:cc-casestudy:3}
\includegraphics[width=0.68\columnwidth]{./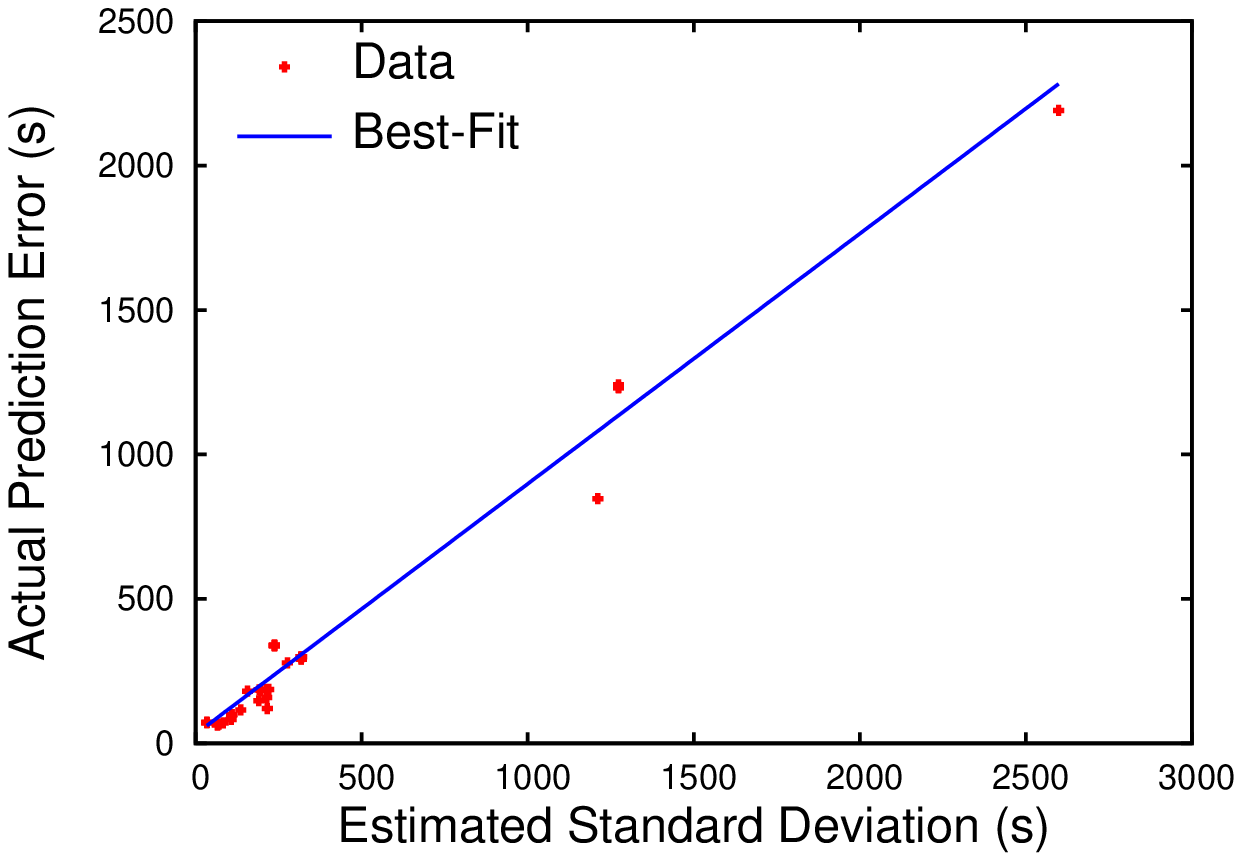}}
\subfigure[Case (4)]{ \label{fig:cc-casestudy:4}
\includegraphics[width=0.68\columnwidth]{./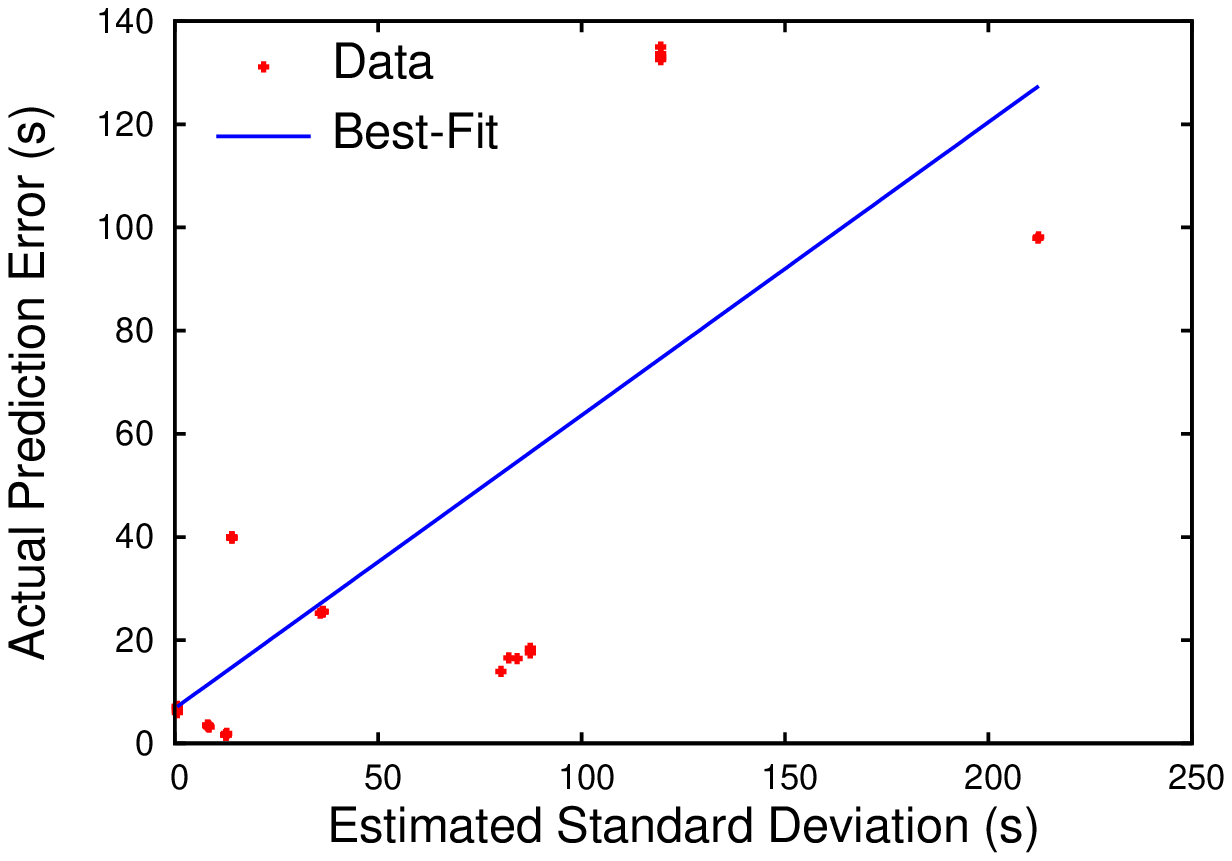}}
\caption{More case studies on correlations}
\label{fig:cc-casestudy}
\shrink
\end{figure}

As we can see, when both $r_s$ and $r_p$ are good, the correlation is close to positive linear. On the other hand, the correlation is not so good when both $r_s$ and $r_p$ are not so good.

\subsubsection{A Note on a Baseline Experiment}

Shrewd readers might have wondered a different, intuitively simpler experiment: fix one query, generate many different samples, and make a prediction based on each sample; then test if and how well the distribution of the predicted estimates matches the distribution computed by using our proposed framework.

The question raised here is if the distribution of running times predicted by using different samples would match the one computed by our model.
But note that ``the distribution in the model'' actually depends on samples, that is, the model will output a different distribution if it uses a different sample.

To put things in context, let us consider a query $q$ where we used two samples $S_1$ and $S_2$ to make predictions for its running time. Suppose that the two point estimates by using $S_1$ and $S_2$ are $\mu_1$ and $\mu_2$, respectively. We would then expect to see a picture as shown in Figure~\ref{fig:distr}(a), where the likelihoods of $\mu_1$ and $\mu_2$ match ``the'' distribution $D$ computed by our model. However, our model would actually compute a distribution $D_1$ describing its uncertainty about $\mu_1$, and compute a different distribution $D_2$ describing its uncertainty about $\mu_2$. We illustrate this in Figure~\ref{fig:distr}(b). Therefore, the expected $D$ is not unique. Rather, using different samples will result in different $D$'s.

\begin{figure}[!htb]
\centering
\includegraphics[width=0.7\columnwidth]{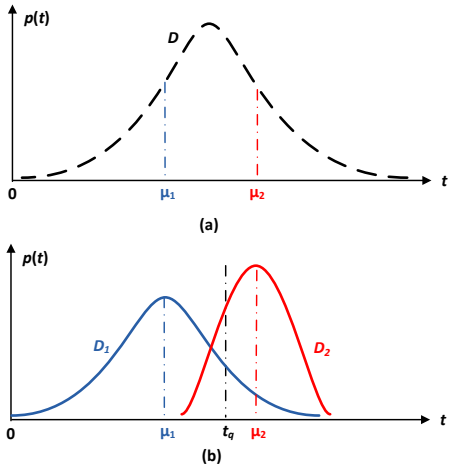}
\caption{Example point estimates and distributions}
\label{fig:distr}
\shrink
\end{figure}

Why should we derive different distributions if different samples are used? This is somehow not surprising. Different samples derived from the base tables may vary tremendously. As a result, the uncertainties in the selectivity estimates based on different samples may differ and hence the uncertainties in the running time estimates may differ as well.

\subsubsection{Comparison with Simplified Versions}

Another interesting question is if we can simplify some steps in our framework.
In the following we consider four alternatives (the complete version as well as three simplified versions):
\begin{enumerate}[(V1)]
\item \emph{All}: the complete version of our proposed framework;
\item \emph{No $\Var[c]$}: ignore the uncertainties in the cost units by setting $\Var[c]=0$ for each $c$;
\item \emph{No $\Var[X]$}: ignore the uncertainties in the selectivity estimates by setting $\Var[X]=0$ for each $X$;
\item \emph{No Cov}: ignore the covariances in the selectivity estimates.
\end{enumerate}

We compared these four alternatives for the \textbf{TPCH} queries.
Figure~\ref{fig:simplified:cc:uniform} presents typical results on uniformed databases in terms of the correlation coefficient $r_s$ (see Appendix~\ref{sec:more-exp-results:simplified} for more results on skewed databases).
We have several observations.
First, ignoring uncertainties in the $c$'s is not a good idea.
For all the cases we tested, this would lead to a drop of at least 0.25 (typically 0.4 to 0.5) in correlation.
Second, the impact of ignoring uncertainties in the $X$'s depends on the sample size.
Intuitively, as we increase the sample size, the uncertainties in the $X$'s diminish due to the strong consistency of the estimator.
When the uncertainties are small enough, ignoring them is safe.
As shown in Appendix~\ref{sec:more-exp-results:selectivity}, a sampling ratio of 1\% is already sufficient for accurate selectivity estimates for most of the queries we tested.
To observe the impact of ignoring the uncertainties in the $X$'s, we therefore used even lower sampling ratios.
As we can observe from Figure~\ref{fig:simplified:cc:uniform}, typically the correlation can drop by 0.2 to 0.3 when the sampling ratios are below 1\%, while it remains almost unaffected when 1\% samples are taken.\footnote{Note that the \emph{absolute} sample size is still not small when the sampling ratio is 1\%. Even for the 1GB TPC-H database, the largest \emph{lineitem} table contains 6,000,000 tuples and hence 60,000 sample tuples, which might be sufficient for most cases we tested.}
Third, while the impact of covariances in the $X$'s is often insignificant, sometimes ignoring the covariances causes problems.
For instance, as shown in Figure~\ref{fig:simplified:cc:uniform:b}, the correlations drop by 0.35 and 0.17 when the sampling ratios are 0.05\% and 0.1\%.
Although we cannot directly compute the covariances, our theoretic study in Appendix~\ref{sec:proofs} suggests that the upper bounds for the covariances become smaller as we increase the sample size.
Nevertheless, in general we have no idea how large the sample size needs to be so that we can safely ignore the uncertainties in the selectivity estimates and their covariances.
It depends on several factors such as the skewness of the data and the complexity of the queries in the workload.
Finally, the complete version is the most robust and effective one among the four alternatives: $r_s$ is consistently above 0.7 (most of the time above 0.8) for all the cases we tested.

\begin{figure}[!htb]
\centering
\subfigure[Uniform 1GB database, PC2]{ \label{fig:simplified:cc:uniform:a}
\includegraphics[width=\columnwidth]{./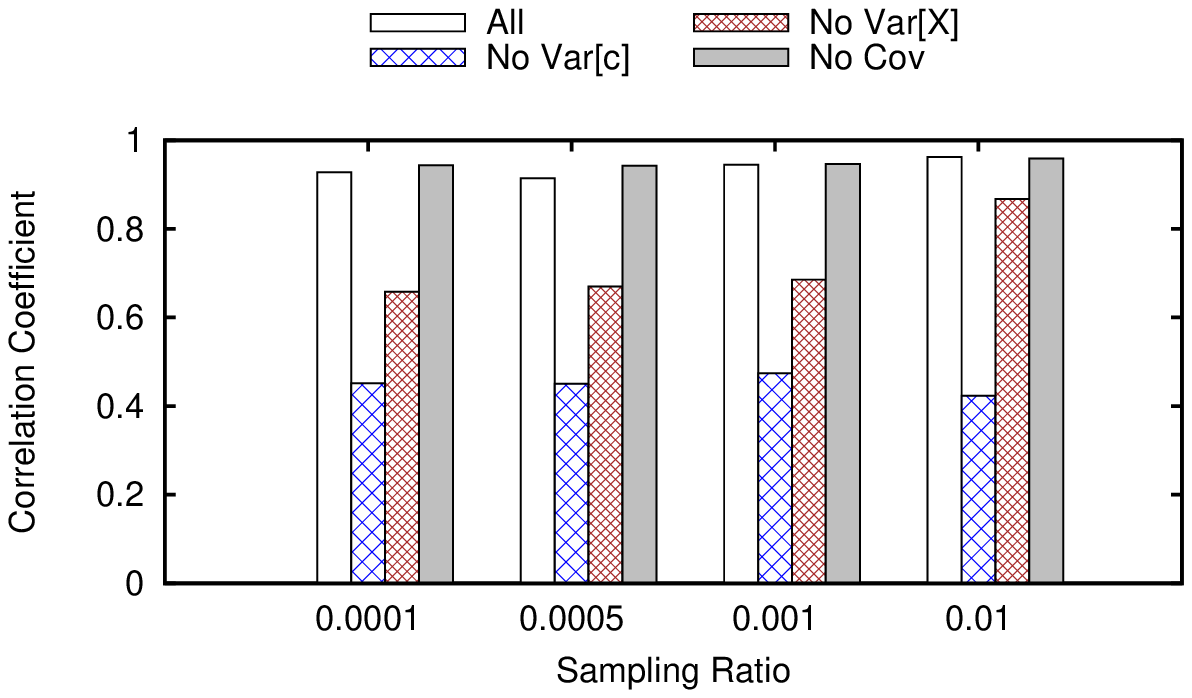}}
\subfigure[Uniform 10GB database, PC1]{ \label{fig:simplified:cc:uniform:b}
\includegraphics[width=\columnwidth]{./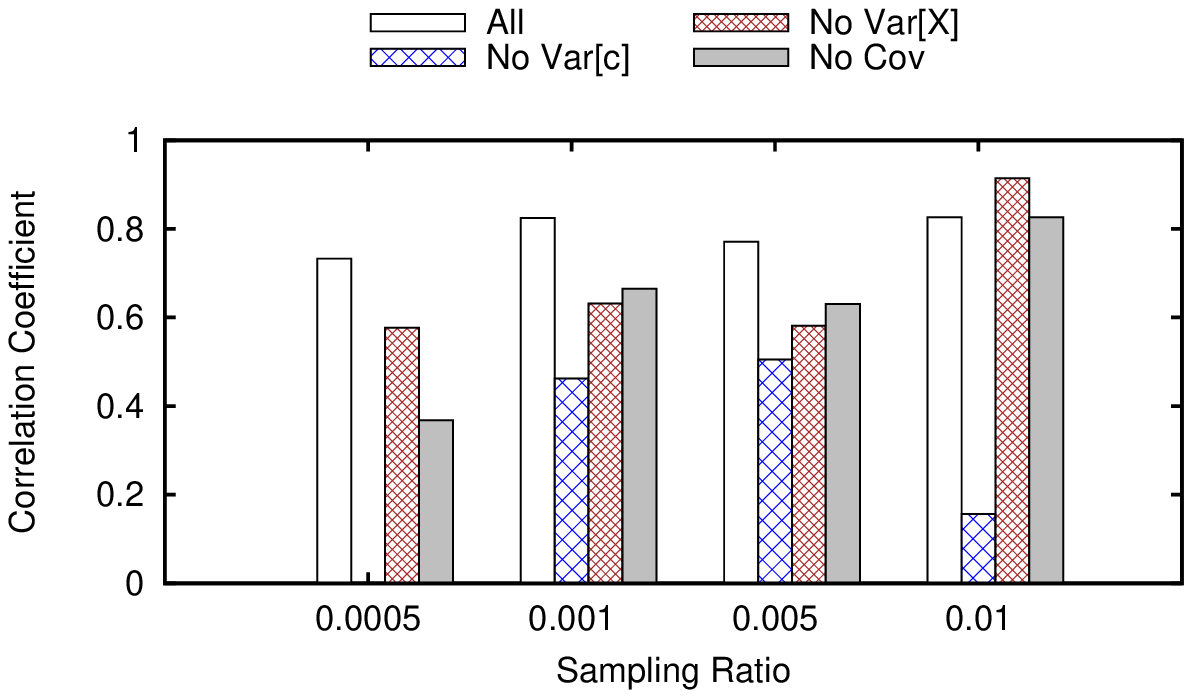}}
\caption{Comparison of four alternatives in terms of $r_s$}
\label{fig:simplified:cc:uniform}
\shrink
\end{figure}

\subsection{Runtime Overhead of Sampling}\label{sec:exp:sample-overhead}

We also measured the relative overhead of running the queries over the sample tables compared with that of running them over the original tables. Figure~\ref{fig:overhead:tpch:pc1} presents the results of the \textbf{TPCH} queries on PC1. Since the other results are very similar, the readers are referred to Appendix~\ref{sec:more-exp-results:overhead} for the complete details. We observe that the relative overhead is comparable to that reported in~\cite{WuCZTHN13}. For instance, for the TPC-H 10GB database, the relative overhead is around 0.04 to 0.06 when the sampling ratio is 0.05. Note that, here we computed the estimated selectivities as well as their variances by only increasing the relative overhead a little. Also note that, here we measured the relative overhead based on disk-resident samples. The relative overhead can be dramatically reduced by using the common practice of caching the samples in memory~\cite{Ramamurthy-bufferQO}.

\begin{figure}
\centering
\includegraphics[width=1.0\columnwidth]{./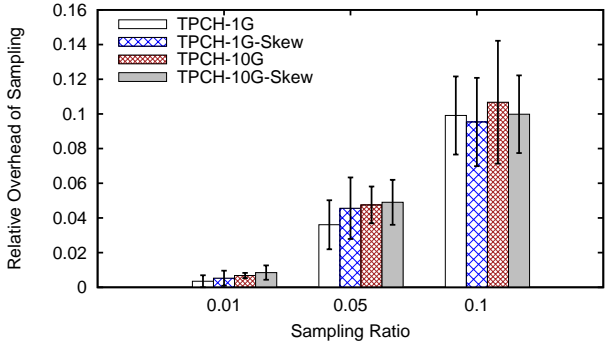}
\caption{Relative overhead of \textbf{TPCH} queries on PC1}
\label{fig:overhead:tpch:pc1}
\shrink
\end{figure}

On the other hand, though the Central Limit Theorem guarantees that the selectivity estimates are Gaussian-distributed for sufficiently large samples, it does not say anything about how large the samples should be.
In fact, there is no exact answer to this question.
As a rule of thumb, statisticians have agreed that the sample size should be larger than or equal to 30 in general, and the larger the better~\cite{Brase09}.

\subsection{Applications}
We discuss some potential applications that could take advantage of the distributional information of query running times. The list of applications here is by no means exhaustive, and it is our hope that our study in this paper could stimulate further research in this direction and more applications could emerge in the future.

\subsubsection{Query Optimization}
Although significant progress has been made in the past several decades, query optimization remains challenging for many queries due to the difficulty in accurately estimating query running times. Rather than betting on the optimality of the plan generated based on (perhaps erroneous) point estimates for parameters such as selectivities and cost units, it makes sense to also consider the uncertainties of these parameters. In fact, there has been some theoretical work investigating optimization based on least {\em expected} cost (LEC) based upon distributions of the parameters of the cost model~\cite{ChuHS99}. However, that work did not address the problem of how to obtain the distributions. It would be interesting to see the effectiveness of LEC plans by incorporating our techniques into query optimizers.

\subsubsection{Query Progress Monitoring}
State-of-the-art query progress indicators~\cite{ChaudhuriNR04,Christian-progIndVLDB12,Li-jessie,LuoNEW04} provide estimates of the percentage of the work that has been completed by a query at regular intervals during the query's execution. However, it has been shown that in the worst case no progress indictor can outperform a naive indicator simply saying the progress is between 0\% and 100\%~\cite{ChaudhuriKR05}. Hence, information about uncertainty in the estimate of progress is desirable. Our work provides a natural building block that could be used to develop an uncertainty-aware query progress indicator: the progress indicator could call our predictor to make a prediction for the remaining query running time as well as its uncertainty.

\subsubsection{Database as a Service}
The problem of predicting query running time is revitalized by the recent move towards providing database as a service (DaaS). Many important decision-making procedures, including admission control~\cite{Tozer-QCop,Xiong-ActiveSLA}, query scheduling~\cite{ChiHHN13}, and system sizing~\cite{Wasserman-dbSizing}, rely on estimation of query running time. Distributional information enables more robust decision procedures in contrast to point estimates. Recent work~\cite{ChiHHN13} has shown the benefits in query scheduling by leveraging distributional information. Similar ideas have also been raised in~\cite{Xiong-ActiveSLA} for admission control. Again, these work did not address the fundamental issue of obtaining the distributions without running the queries. It would be interesting to see the effectiveness of our proposed techniques in these DaaS applications.

\section{Related Work}\label{sec:relatedwork}

The problem of predicting query execution time has been extensively studied quite recently~\cite{AhmadDAB11-edbt,AkdereCRUZ12-brown-icde,DugganCPU11,Ganapathi-berkeley09,Li12Robust,WuCHN13,WuCZTHN13}. Ganapathi et al.~\cite{Ganapathi-berkeley09} first raised the question of predicting the \emph{actual} running time of a query rather than a \emph{rough} estimate of runtime overhead that is usually provided by most, if not all, query optimizers. They further proposed a predictive approach based on Kernel Canonical Correlation Analysis (KCCA). Follow-up approaches improved the prediction accuracy by using different machine learning models such as Support Vector Machines (SVM)~\cite{AkdereCRUZ12-brown-icde} or Multiple Additive Regression-Trees (MART)~\cite{Li12Robust}. Unlike these data-driven machine learning approaches that treated the underlying database system as a black box, we proposed a predictor based on calibrating the query optimizer's cost models and showed that it could often outperform the machine learning based approaches in terms of prediction accuracy~\cite{WuCZTHN13}. While most of this line of work focused on the single-query prediction problem, some of them have considered the more general prediction problem when multiple queries are concurrently running. Ahmad et al. addressed the problem by using Gaussian processes~\cite{AhmadDAB11-edbt}, while Duggan et al. adopted similar ideas but instead used multivariate linear regression~\cite{DugganCPU11}. Both of them, however, assumed \emph{static} database workloads, in the sense that all queries running in the system should be known beforehand. To overcome this limitation, we proposed a conceptually different approach by extending our single-query predictor~\cite{WuCHN13}. We first used query optimizer's cost models to estimate the CPU and I/O requirements for each query, and then used a combination queueing model and buffer pool model to merge these quantities from concurrent queries to predict running times. Nonetheless, none of these work ever considered the problem of measuring the degree of uncertainty in the prediction.
We have reused some techniques developed in~\cite{WuCZTHN13} for computing the means of selectivities and cost units when viewed as random variables.
Nonetheless,~\cite{WuCZTHN13} focused on point estimates rather than distributional information, and hence these techniques were insufficient.
We have substantially extended~\cite{WuCZTHN13} by developing new techniques for computing variances (and hence distributions) of selectivity and cost-unit estimates (Section~\ref{sec:distribution}), cost functions (Section~\ref{sec:costfunc}), and, based on that, distributions of likely running times (Section~\ref{sec:uncertainty}).

The idea of using samples to estimate selectivity goes back more than two decades ago (e.g.,~\cite{BabcockC05,Charikar-sample00,Haas-distinct95,Haas-sample96,Haas-sample92,Hou91,HouOT88,Lipton-sample90}). While we focused on estimators for selection and join queries~\cite{Haas-sample96}, some estimators that estimate the number of distinct values might be further used to refine selectivity estimates of aggregate queries~\cite{Charikar-sample00,Haas-distinct95}. However, not only do we need an estimate of selectivity, we need an estimated distribution as well. So far, we are not aware of any previous study towards this direction for aggregate queries. Regarding the problem of estimating selectivity distributions for selection and join queries, there are options other than the one used in this paper. For example, Babcock and Chaudhuri~\cite{BabcockC05} proposed a framework to learn the posterior distributions of the selectivities based on \emph{join synopses}~\cite{AcharyaGPR99}. Unfortunately, this solution is restricted to SPJ expressions with foreign-key joins, due to the overhead of computing and maintaining join synopses over a large database.

The framework proposed in this paper also relies on accurate approximation of the cost models used by the optimizer. Du et al.~\cite{Du-calib92} first proposed the idea of using logical cost functions in the context of heterogenous database systems. Similar ideas were later on used in developing generic cost models for main memory based database systems~\cite{ManegoldBK02} and identifying robust plans in the plan diagram generated by the optimizer~\cite{DDH08}. Our idea of further using optimization techniques to find the best coefficients in the logical cost functions is motivated by the approach used in~\cite{DDH08}. 

\section{Conclusion} \label{sec:conclusion}

In this paper, we take a first step towards the problem of measuring the uncertainty within query execution time prediction. We quantify prediction uncertainty using the distribution of likely running times. Our experimental results show that the standard deviations of the distributions estimated by our proposed approaches are strongly correlated with the actual prediction errors.

The idea of leveraging cost models to quantify prediction uncertainty need not be restricted to single standalone queries. As shown in~\cite{WuCHN13}, Equation~(\ref{eq:tquery}) can also be used to provide point estimates for multiple concurrently-running queries. The key observation is that the selectivities of the operators in a query are independent of whether or not it is running with other queries. Hence it is promising to consider applying the techniques proposed in this paper to multi-query workloads by viewing the interference between queries as changing the distribution of the $c$'s. We regard this as a compelling area for future work. 

\normalsize

{\renewcommand{\baselinestretch}{1.05}
\small

\bibliographystyle{abbrv}
\bibliography{querytime}
}

\newpage
\appendix

\section{Theoretic Results}\label{sec:proofs}

This section presents the proofs of the lemmas and theorems as well as other related results mentioned in the paper.

\subsection{Variance of The Estimator $\rho_n$}\label{sec:proofs:theorem:variance} 

The variance of the selectivity estimator $\rho_n$, unfortunately, is nontrivial when writing it mathematically:
\begin{theorem}\label{theorem:variance}
The variance of $\rho_n$ is~\cite{Haas-sample96}:
\begin{eqnarray}\label{eq:var-estimator-cp}
\Var[\rho_n]&=&\sum_{r=1}^K\frac{(n-1)^{K-r}}{n^K}\\\nonumber
&\times&\sum_{S\in\mathcal{S}_r}\bigg(\frac{1}{|\Lambda(S)|}\sum_{\mathbf{l}\in\Lambda(S)}(\rho_S(\mathbf{l})-\rho)^2\bigg).
\end{eqnarray}
\end{theorem}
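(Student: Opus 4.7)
\medskip

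\noindent\textbf{Proof proposal for Theorem~\ref{theorem:variance}.}
The plan is to expand $\Var[\rho_n]$ as a double sum of pairwise covariances of the summands in Equation~(\ref{eq:estimator-cp}), then group the pairs by the subset of relations on which they share a sample index, and finally evaluate the common covariance inside each group by conditioning on the shared blocks. The only facts I will use are (i) for each $k$ the block indices $L_{k,1},\ldots,L_{k,n}$ are i.i.d.\ uniform over the $m_k$ blocks of $R_k$ and independent across $k$, and (ii) $\E[\rho_{\mathbf{B}(L_{1,i_1},\ldots,L_{K,i_K})}]=\rho$ by unbiasedness of $\rho_n$.

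First I would write
\[
\Var[\rho_n]\;=\;\frac{1}{n^{2K}}\sum_{\mathbf{i}}\sum_{\mathbf{j}}\Cov\bigl(\rho_{\mathbf{B}(\mathbf{L}_{\mathbf{i}})},\,\rho_{\mathbf{B}(\mathbf{L}_{\mathbf{j}})}\bigr),
\]
where $\mathbf{i}=(i_1,\ldots,i_K)$, $\mathbf{j}=(j_1,\ldots,j_K)$ range over $\{1,\ldots,n\}^K$, and $\mathbf{L}_{\mathbf{i}}=(L_{1,i_1},\ldots,L_{K,i_K})$. Because the $L_{k,\cdot}$ are independent across $k$ and i.i.d.\ within each $k$, the joint distribution of the pair $(\mathbf{L}_{\mathbf{i}},\mathbf{L}_{\mathbf{j}})$ depends on $(\mathbf{i},\mathbf{j})$ only through the set $S=\{k: i_k=j_k\}$. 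Hence the covariance is a function $\sigma^2_S$ of $S$ alone, and in particular $\sigma^2_\emptyset=0$ because in that case the two sample points are independent.

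Next I would count, for each $S\subseteq\{1,\ldots,K\}$ with $|S|=r$, the number of pairs $(\mathbf{i},\mathbf{j})$ with $\{k:i_k=j_k\}=S$: on coordinates in $S$ there are $n$ equal-pair choices, and on each of the $K-r$ coordinates outside $S$ there are $n(n-1)$ ordered distinct pairs, giving $n^r(n(n-1))^{K-r}=n^K(n-1)^{K-r}$ pairs in total. Dividing by $n^{2K}$ collapses the double sum to
\[
\Var[\rho_n]\;=\;\sum_{r=1}^{K}\frac{(n-1)^{K-r}}{n^K}\sum_{S:\,|S|=r}\sigma^2_S,
\]
which matches the outer structure of Equation~(\ref{eq:var-estimator-cp}).

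It remains to identify $\sigma^2_S$ with the inner quantity in Equation~(\ref{eq:var-estimator-cp}). Fix $S$ with $|S|=r$ and a pair $(\mathbf{i},\mathbf{j})$ realizing it. Conditioning on the common blocks $\mathbf{l}=(l_k)_{k\in S}\in\Lambda(S)$, the two sample points $\mathbf{L}_{\mathbf{i}}$ and $\mathbf{L}_{\mathbf{j}}$ are conditionally independent because their coordinates outside $S$ come from disjoint independent samples. Writing $\rho_S(\mathbf{l})=\E[\rho_{\mathbf{B}(\mathbf{L}_{\mathbf{i}})}\mid L_{k,i_k}=l_k,\,k\in S]$ and using conditional independence followed by the tower property,
\[
\E\bigl[\rho_{\mathbf{B}(\mathbf{L}_{\mathbf{i}})}\rho_{\mathbf{B}(\mathbf{L}_{\mathbf{j}})}\bigr]
\;=\;\frac{1}{|\Lambda(S)|}\sum_{\mathbf{l}\in\Lambda(S)}\rho_S(\mathbf{l})^2,
\]
since $\mathbf{l}$ is uniform on $\Lambda(S)=\prod_{k\in S}\{1,\ldots,m_k\}$. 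Averaging $\rho_S(\mathbf{l})$ over $\mathbf{l}$ yields $\rho$ by unbiasedness, so subtracting $\rho^2=\E[\rho_{\mathbf{B}(\mathbf{L}_{\mathbf{i}})}]\E[\rho_{\mathbf{B}(\mathbf{L}_{\mathbf{j}})}]$ turns the raw second moment into the average squared deviation $|\Lambda(S)|^{-1}\sum_{\mathbf{l}}(\rho_S(\mathbf{l})-\rho)^2$. Plugging this in completes the proof.

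The main obstacle is step three: carefully justifying that conditioning on the shared block indices makes the two sample points conditionally independent, and then correctly identifying the marginal of $\mathbf{l}$ as uniform on $\Lambda(S)$ so that the two expressions with and without the conditioning match. Once this conditional decomposition is in place, the counting of pairs and the identification of $\sigma^2_S$ with the average squared deviation are routine.
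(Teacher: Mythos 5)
Your proposal is correct and follows essentially the same route as the paper's proof: expand the variance as a double sum of covariances, group pairs by the set $S$ of shared sampling coordinates (with the same count $n^K(n-1)^{K-r}$ per $S$), and evaluate the per-group covariance by conditioning on the shared blocks so that the two sample points factor, yielding $|\Lambda(S)|^{-1}\sum_{\mathbf{l}\in\Lambda(S)}(\rho_S(\mathbf{l})-\rho)^2$. The only cosmetic difference is that you compute $\E[XX']-\rho^2$ via the tower property, whereas the paper expands $\E[(X-\rho)(X'-\rho)]$ and factors the double sum over $\Lambda(S^c)$ directly; these are the same argument.
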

Here $S=\{k_1,...,k_r\}\subseteq\{1,2,...,K\}$ such that $k_1<k_2<\cdots<k_r$, $\mathcal{S}_r$ is the collection of all subsets of $\{1,2,...,K\}$ with size $r$ (for $1\leq r\leq K$), and $\Lambda(S)$ is defined to be
$$\Lambda(S)=\{1,2,...,m_{k_1}\}\times\cdots\times \{1,2,...,m_{k_r}\}.$$
Furthermore, for $\mathbf{l}=(l_1,...,l_r)\in\Lambda(S)$, $\rho_S(\mathbf{l})$ is the average selectivity over $\mathbf{B}(L_1, ..., L_K)$ such that $L_{k_j}=l_j$ ($1\leq j\leq r$). For example, if $K=4$, $S=\{2,3\}$ and $\mathbf{l}=(8,9)$, then $$\rho_S(\mathbf{l})=(m_1m_4)^{-1}\sum_{L_1=1}^{m_1}\sum_{L_4=1}^{m_4}\rho_{\mathbf{B}(L_1,8,9,L_4)}.$$

We next prove Theorem~\ref{theorem:variance}. Roughly speaking, the idea of the proof is to first partition the samples into groups based on how many blocks they share, then compute the variance of each group, and finally sum them up. We start with the following standard result from probability theory:
\begin{lemma}\label{lemma:variance}
Let $X_1$, ..., $X_n$ be $n$ random variables, then
$$\Var[\sum_{i=1}^n X_i]=\sum_{i=1}^n\sum_{j=1}^n \Cov(X_i, X_j),$$
where $\Cov(X_i, X_j)$ is the covariance of $X_i$ and $X_j$:
$$\Cov(X_i, X_j)=\E[(X_i-\E[X_i])(X_j-\E[X_j])].$$
\end{lemma}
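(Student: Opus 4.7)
The plan is to prove the identity by direct expansion starting from the definitional form of variance. First I would set $Y = \sum_{i=1}^n X_i$ and use $\Var[Y] = \E[(Y - \E[Y])^2]$. By linearity of expectation, $\E[Y] = \sum_{i=1}^n \E[X_i]$, so the centered version satisfies $Y - \E[Y] = \sum_{i=1}^n (X_i - \E[X_i])$.

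Next I would square the centered sum algebraically, writing
$$\left(\sum_{i=1}^n (X_i - \E[X_i])\right)^2 = \sum_{i=1}^n \sum_{j=1}^n (X_i - \E[X_i])(X_j - \E[X_j]).$$
Taking expectations on both sides and again invoking linearity, the double sum pulls outside the expectation and each resulting term equals $\Cov(X_i, X_j)$ by definition, yielding the claimed identity.

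There is no genuine obstacle here, since this is a standard textbook identity. The only subtlety worth flagging is that the argument implicitly requires that the means $\E[X_i]$ exist and the cross moments $\E[X_i X_j]$ be finite, so that all variances and covariances are well defined and linearity of expectation applies term by term. These integrability conditions are standard in the probability settings used elsewhere in the paper (normal cost units, sample-based selectivity estimators, etc.), so they pose no real difficulty. The one presentational point I would be careful about is making clear that the double sum ranges over all ordered pairs $(i,j)$ with $1 \leq i,j \leq n$, including the diagonal terms $i = j$ which contribute $\Cov(X_i, X_i) = \Var[X_i]$, so that off-diagonal pairs are counted twice — once as $(i,j)$ and once as $(j,i)$ — as is required to obtain the correct variance of the sum.
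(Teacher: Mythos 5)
Your proof is correct and is the standard expansion argument; the paper itself states this lemma without proof, citing it as a standard result from probability theory, so your derivation simply supplies the textbook argument the paper implicitly relies on. The points you flag (existence of second moments, the double sum ranging over all ordered pairs including the diagonal) are exactly the right caveats and nothing further is needed.
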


\noindent Now define $\mathcal{X}$ to be the set of all sample blocks, namely,
$$\mathcal{X}=\{\rho_{\mathbf{B}(L_{1,i_1},\cdots,L_{K,i_K})}|1\leq i_k\leq n, 1\leq k\leq K\}.$$
Based on Lemma~\ref{lemma:variance} and Equation~(\ref{eq:estimator-cp}), we have
\begin{equation}\label{eq:var-estimator-cp2}
\Var[\rho_n]=\frac{1}{(n^K)^2}\sum_{X\in\mathcal{X}}\sum_{X'\in\mathcal{X}}\Cov(X, X').
\end{equation}
Consider any $X=\rho_{\mathbf{B}(j_1,...,j_K)}$ and $X'=\rho_{\mathbf{B}(j'_1,...,j'_K)}$ in the summands of Equation~(\ref{eq:var-estimator-cp2}). If $j_k\neq j'_k$ for $1\leq k\leq K$, then $X$ and $X'$ are independent and $\Cov(X,X')=0$. Hence, only $X$ and $X'$ that share at least one common coordinate will contribute a non-zero summand to Equation~(\ref{eq:var-estimator-cp2}). We thereby partition the pairs $(X,X')$ according to the number of coordinates they share. Specifically, for $S=\{k_1,...,k_r\}\subseteq\{1,2,...,K\}$, we denote $X\sim_S X'$ if $j_{k_m}=j'_{k_m}$ for $1\leq m\leq r$. This gives us the following equivalent expression for $\Var[\rho_n]$:
\begin{equation}\label{eq:var-estimator-cp3}
\Var[\rho_n]=\frac{1}{(n^K)^2}\sum_{r=1}^{K}\sum_{S\in\mathcal{S}_r}\sum_{X\sim_S X'}\Cov(X, X').
\end{equation}

\begin{lemma}\label{lemma:npairs}
For a fixed $S\in\mathcal{S}_r$, the number of pairs $(X,X')$ such that $X\sim_S X'$ is $(n-1)^{K-r}n^K$. As a result, $\Var[\rho_n]$ can be further expressed as:
\begin{equation}\label{eq:var-estimator-cp4}
\Var[\rho_n]=\sum_{r=1}^{K}\frac{(n-1)^{K-r}}{n^K}\times\sum_{S\in\mathcal{S}_r}\Cov(X, X').
\end{equation}
\end{lemma}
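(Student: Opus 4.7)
The plan is to split the argument into a combinatorial counting step and an exchangeability step, then combine them. First, I would fix the convention that $X\sim_S X'$ means the coordinates indexed by $S$ agree \emph{and} the remaining coordinates differ, so that as $S$ ranges over all nonempty subsets of $\{1,\ldots,K\}$ the relations $\sim_S$ partition the set of ordered pairs $(X,X')\in\mathcal{X}\times\mathcal{X}$ with at least one shared coordinate, with no double counting. With this convention the count is a routine product rule: the $r$ shared coordinates $j_{k_1},\ldots,j_{k_r}$ each range freely over $\{1,\ldots,n\}$, contributing a factor $n^r$; for each of the remaining $K-r$ coordinates $i\notin S$ I pick an ordered pair $(j_i,j'_i)$ with $j_i\neq j'_i$, contributing $n(n-1)$. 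Multiplying gives $n^r\cdot(n(n-1))^{K-r}=n^K(n-1)^{K-r}$, which is the claimed count.

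Next I would argue that for all pairs $(X,X')$ with $X\sim_S X'$ the covariance $\Cov(X,X')$ takes a single common value $\gamma_S$ depending only on $S$. This is pure exchangeability: the block indices $L_{k,1},\ldots,L_{k,n}$ drawn from relation $R_k$ are i.i.d., so the joint law of any two sample products $\rho_{\mathbf{B}(j_1,\ldots,j_K)}$ and $\rho_{\mathbf{B}(j'_1,\ldots,j'_K)}$ is invariant under arbitrary permutations of the sampling-step indices within each relation. Hence it depends only on the coincidence pattern $\{i:j_i=j'_i\}=S$, and so does the covariance. This is the step that justifies reading the symbol $\Cov(X,X')$ in Equation~(\ref{eq:var-estimator-cp4}) as the single value $\gamma_S$ rather than as a genuine sum over representatives.

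Combining, I would pull $\gamma_S$ out of the inner sum in Equation~(\ref{eq:var-estimator-cp3}), multiply by the count $n^K(n-1)^{K-r}$, and absorb the prefactor $1/(n^K)^2$ to obtain the factor $(n-1)^{K-r}/n^K$, yielding Equation~(\ref{eq:var-estimator-cp4}). The main obstacle will be articulating the exchangeability justification cleanly, since without it the abuse of notation in Equation~(\ref{eq:var-estimator-cp4}) looks mysterious. A secondary pitfall is the definition of $\sim_S$: if one instead adopts the ``at least the coordinates in $S$ agree'' reading, the $S$-classes are no longer disjoint and an inclusion--exclusion correction appears. Using the strict equivalence and noting that covariance is zero exactly when $S=\emptyset$ (independent samples) avoids this and matches the count $n^K(n-1)^{K-r}$ precisely.
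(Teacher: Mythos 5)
Your proposal is correct and matches the paper's intended argument: the paper states Lemma~\ref{lemma:npairs} without proof, and your two steps --- the product-rule count $n^r\cdot(n(n-1))^{K-r}=n^K(n-1)^{K-r}$ under the ``exactly the coordinates in $S$ agree'' reading of $\sim_S$, plus the exchangeability observation that $\Cov(X,X')$ depends only on the coincidence pattern $S$ --- are precisely what the surrounding derivation (Equations~(\ref{eq:var-estimator-cp2})--(\ref{eq:var-estimator-cp3}) and Lemma~\ref{lemma:covariance}) implicitly relies on. Your remark that the strict reading of $\sim_S$ is needed both for the count and for the partition in Equation~(\ref{eq:var-estimator-cp3}) to avoid double counting is a genuine clarification of a point the paper glosses over.
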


Our next goal is to give an expression for $\Cov(X,X')$ when $X\sim_S X'$, as shown in Lemma~\ref{lemma:covariance}. Equation~(\ref{eq:var-estimator-cp}) in Theorem~\ref{theorem:variance} then follows by combining Lemma~\ref{lemma:npairs} and~\ref{lemma:covariance}.
\begin{lemma}\label{lemma:covariance}
If $X\sim_S X'$, then
$$\Cov(X,X')=\frac{1}{|\Lambda(S)|}\sum_{\mathbf{l}\in\Lambda(S)}(\rho_S(\mathbf{l})-\rho)^2.$$
\end{lemma}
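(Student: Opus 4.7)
\medskip

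\noindent\textbf{Proof proposal for Lemma~\ref{lemma:covariance}.}
The plan is to compute $\Cov(X,X')$ directly from its definition by conditioning on the sample blocks indexed by the shared coordinates $S$, and then using the fact that the estimator is unbiased ($\E[X]=\E[X']=\rho$). Writing $X=\rho_{\mathbf{B}(j_1,\ldots,j_K)}$ and $X'=\rho_{\mathbf{B}(j'_1,\ldots,j'_K)}$ with $j_{k_m}=j'_{k_m}=l_m$ for the shared positions $k_1,\ldots,k_r$, I observe that conditional on the shared block choices $\mathbf{l}=(l_1,\ldots,l_r)\in\Lambda(S)$, the two random variables $X$ and $X'$ depend only on the blocks drawn at the non-shared coordinates $\{1,\ldots,K\}\setminus S$, and those draws are made independently by the sampling procedure. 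Hence $X\bot X'$ given $\mathbf{l}$, which yields
\[
\E[XX'\mid \mathbf{l}] \;=\; \E[X\mid \mathbf{l}]\cdot\E[X'\mid \mathbf{l}].
\]

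Next I would identify the conditional expectations. By the definition of the estimator, $\E[X\mid\mathbf{l}]$ is the average of $\rho_{\mathbf{B}(L_1,\ldots,L_K)}$ over all choices of $L_{k}$ with $k\notin S$ (with $L_{k_m}=l_m$ fixed), which is precisely $\rho_S(\mathbf{l})$ as defined in the statement of Theorem~\ref{theorem:variance}. The same holds for $X'$. Therefore
\[
\E[XX'\mid \mathbf{l}] \;=\; \rho_S(\mathbf{l})^2.
\]
Since the shared coordinates themselves are drawn uniformly and independently from $\Lambda(S)$, taking the outer expectation gives
\[
\E[XX'] \;=\; \frac{1}{|\Lambda(S)|}\sum_{\mathbf{l}\in\Lambda(S)}\rho_S(\mathbf{l})^2.
\]

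Finally, I would combine this with $\E[X]\E[X']=\rho^2$. Using the elementary identity $\rho_S(\mathbf{l})^2-\rho^2=(\rho_S(\mathbf{l})-\rho)^2+2\rho(\rho_S(\mathbf{l})-\rho)$ together with the fact that $|\Lambda(S)|^{-1}\sum_{\mathbf{l}}\rho_S(\mathbf{l})=\rho$ (which follows because averaging $\rho_S(\mathbf{l})$ over $\mathbf{l}\in\Lambda(S)$ reconstructs the overall population selectivity $\rho$), the cross term vanishes and we obtain
\[
\Cov(X,X') \;=\; \frac{1}{|\Lambda(S)|}\sum_{\mathbf{l}\in\Lambda(S)}(\rho_S(\mathbf{l})-\rho)^2,
\]
which is the desired identity.

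The calculation itself is a routine application of the tower property once the right conditioning is chosen. The only real subtlety, and therefore the step I would be most careful about, is the conditional independence argument: I must justify that fixing the blocks at the shared coordinates $S$ leaves the non-shared block draws independent across $X$ and $X'$. This follows from the construction of the sampling scheme underlying Equation~(\ref{eq:estimator-cp}), where the block indices $L_{k,i_k}$ at different sampling steps are drawn independently from each relation. Once this is in hand, the rest of the argument reduces to the identity $\E_{\mathbf{l}}[\rho_S(\mathbf{l})]=\rho$ and a one-line algebraic manipulation.
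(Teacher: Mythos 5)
Your proposal is correct and follows essentially the same route as the paper's proof: both condition on the shared block indices $\mathbf{l}\in\Lambda(S)$, use the independence of the non-shared draws to factor the conditional expectation, and identify the conditional mean with $\rho_S(\mathbf{l})$. The only (immaterial) difference is that the paper centers first and computes $\E[(X-\rho)(X'-\rho)\mid\mathbf{l}]$ directly as a product of centered averages, whereas you compute $\E[XX'\mid\mathbf{l}]=\rho_S(\mathbf{l})^2$ and subtract $\rho^2$ at the end, which requires the additional (easy) identity $|\Lambda(S)|^{-1}\sum_{\mathbf{l}}\rho_S(\mathbf{l})=\rho$.
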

\begin{proof}
We have
$$\Cov(X,X')=\E_{X\sim_S X'}[(X-\E[X])(X'-\E[X'])].$$
Since $\E[X]=\E[X']=\rho$, it follows that
$$\Cov(X,X')=\E_{X\sim_S X'}[(X-\rho)(X'-\rho)].$$
We further denote $X\sim_{S(\mathbf{l})} X'$ for $\mathbf{l}=(l_1,...,l_r)$, if $X\sim_S X'$ and $j_{k_m}=l_m$ for $1\leq m\leq r$. We then have
$$\Cov(X,X')=\frac{1}{|\Lambda(S)|}\sum_{\mathbf{l}\in\Lambda(S)}\E_{X\sim_{S(\mathbf{l})} X'}[(X-\rho)(X'-\rho)].$$
Now consider $\E_{X\sim_{S(\mathbf{l})} X'}[(X-\rho)(X'-\rho)]$. By definition, it is the average of the following quantities
$$\gamma=(\rho_{\mathbf{B}(j_1, ..., j_K)}-\rho)(\rho_{\mathbf{B}(j'_1, ..., j'_K)}-\rho)$$
by setting $j_{k_m}=j'_{k_m}=l_m$ for $1\leq m\leq r$. Let $S^c=\{1,...,K\}-S$ be the complement of $S$. We then have
$$E=\E_{X\sim_{S(\mathbf{l})} X'}[(X-\rho)(X'-\rho)]=\big(\frac{1}{|\Lambda(S^c)|}\big)^2\sum_{\Lambda(S^c)}\sum_{\Lambda(S^c)}\gamma.$$
After some rearrangement of the summands, we can have
\begin{eqnarray*}
E&=&\big(\frac{1}{|\Lambda(S^c)|}\big)^2\big(\sum_{\Lambda(S^c)}(\rho_{\mathbf{B}(j_1, ..., j_K)}-\rho)\big)^2\\
&=&\big(\frac{1}{|\Lambda(S^c)|}\sum_{\Lambda(S^c)}(\rho_{\mathbf{B}(j_1, ..., j_K)}-\rho)\big)^2\\
&=&\big((\frac{1}{|\Lambda(S^c)|}\sum_{\Lambda(S^c)}\rho_{\mathbf{B}(j_1, ..., j_K)})-\rho\big)^2\\
&=&(\rho_S(\mathbf{l})-\rho)^2.
\end{eqnarray*}
This completes the proof of the lemma.
\end{proof}

\subsection{Proof of Lemma~\ref{lemma:c4var}} \label{sec:proofs:lemma:c4var}

\begin{proof}
Table~\ref{tab:non-central-moments} presents the non-central moments of a normal variable $X\sim\mathcal{N}(\mu,\sigma^2)$. By Table~\ref{tab:non-central-moments}, $\Var[X_l^2]=2\sigma_l^2(2\mu_l^2+\sigma_l^2)$, and $\Cov(X_l^2, X_l)=2\mu_l\sigma_l^2$. Thus
\begin{eqnarray*}
\Var[f]&=&b_0^2 \Var[X_l^4] + b_1^2 \Var[X_l] + 2b_0b_1\Cov(X_l^2, X_l)\\
&=&\sigma_l^2 [(b_1 + 2 b_0 \mu_l)^2 + 2 b_0^2 \sigma_l^2].
\end{eqnarray*}
This completes the proof of the lemma.
\end{proof}

\begin{table}
\centering
\begin{tabular}{|r|l|}
\hline
$k$ & Non-central moment $E(X^k)$\\
\hline
$1$ & $\mu$ \\
$2$ & $\mu^2+\sigma^2$ \\
$3$ & $\mu^3+3\mu\sigma^2$ \\
$4$ & $\mu^4+6\mu^2\sigma^2+3\sigma^4$\\
\hline
\end{tabular}
\caption{Non-central moments of $X\sim N(\mu, \sigma^2)$}
\label{tab:non-central-moments}
\shrink
\end{table}

\subsection{Proof of Theorem~\ref{theorem:c4approx}} \label{sec:proofs:theorem:c4approx}

To prove the theorem, we need the following result:

\begin{theorem}\label{theorem:varbound}
$\Var[\rho_n]$ in Equation~(\ref{eq:var-estimator-cp}) can be bounded as:
\begin{equation*}
\Var[\rho_n]\leq\big(1-(1-\frac{1}{n})^K\big)\rho(1-\rho).
\end{equation*}
\end{theorem}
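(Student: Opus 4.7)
The plan is to bound the right-hand side of Equation~(\ref{eq:var-estimator-cp}) term by term, by exploiting the fact that every $\rho_S(\mathbf{l})$ is itself a valid selectivity (i.e., lies in $[0,1]$) together with two elementary identities: a variance-style inequality for $[0,1]$-valued random variables and the binomial theorem.

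First, I would fix $S \in \mathcal{S}_r$ and re-read the inner quantity $\frac{1}{|\Lambda(S)|}\sum_{\mathbf{l}\in\Lambda(S)}(\rho_S(\mathbf{l})-\rho)^2$ as the second central moment of the random variable $Y_S$ obtained by drawing $\mathbf{l}$ uniformly from $\Lambda(S)$ and outputting $\rho_S(\mathbf{l})$. Since $\rho_S(\mathbf{l})$ is by construction an average of $\rho_{\mathbf{B}(\cdot)}$'s, each of which lies in $[0,1]$, the variable $Y_S$ also takes values in $[0,1]$. A short calculation using the definition of $\rho_S(\mathbf{l})$ shows that the unconditional average of the $\rho_S(\mathbf{l})$'s over $\Lambda(S)$ equals $\rho$, so $\E[Y_S]=\rho$.

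Next I would invoke the standard inequality that for any random variable $Y\in[0,1]$ with mean $\mu$, $\Var[Y]\le \mu(1-\mu)$; this follows from $Y^2\le Y$, hence $\E[Y^2]\le \E[Y]=\mu$. Applied to $Y_S$, this yields the uniform bound
$$\frac{1}{|\Lambda(S)|}\sum_{\mathbf{l}\in\Lambda(S)}(\rho_S(\mathbf{l})-\rho)^2 \;\le\; \rho(1-\rho),$$
independent of $S$ and $r$.

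Plugging this bound into Equation~(\ref{eq:var-estimator-cp}) and using $|\mathcal{S}_r|=\binom{K}{r}$ gives
$$\Var[\rho_n] \;\le\; \rho(1-\rho)\sum_{r=1}^{K}\binom{K}{r}\frac{(n-1)^{K-r}}{n^{K}}.$$
Finally, the binomial theorem applied to $n^K=((n-1)+1)^K=\sum_{r=0}^{K}\binom{K}{r}(n-1)^{K-r}$ shows that the $r\ge 1$ tail equals $n^K-(n-1)^K$, so the sum simplifies to $1-(1-1/n)^K$, giving the claimed inequality.

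The only nontrivial step is the first one: one must check carefully that $Y_S\in[0,1]$ and that $\E[Y_S]=\rho$. Both facts follow directly from the definitions of $\rho_S(\mathbf{l})$ and $\rho_{\mathbf{B}}$, but they are the conceptual crux; once they are in hand, the rest is routine algebra.
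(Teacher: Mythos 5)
Your proof is correct, but it takes a genuinely different and more elementary route than the paper's. The paper obtains Theorem~\ref{theorem:varbound} as the special case $m=K$, $\rho'=\rho$ of the general covariance bound in Theorem~\ref{theorem:dn-bound}; that argument in turn rests on the monotonicity property $\sigma_S^2\leq\sigma_{S'}^2$ for $S\subset S'$ (Lemma~\ref{lemma:sigma}, proved via Cauchy--Schwarz) together with the exact evaluation $\sigma_{\mathcal{K}}^2=\rho(1-\rho)$, which uses the tuple-level partition so that each $\rho_{\mathcal{K}}(\mathbf{l})$ is $0$ or $1$. You instead bound every term $\sigma_S^2=\frac{1}{|\Lambda(S)|}\sum_{\mathbf{l}\in\Lambda(S)}(\rho_S(\mathbf{l})-\rho)^2$ directly by $\rho(1-\rho)$, using only that each $\rho_S(\mathbf{l})$ lies in $[0,1]$ with overall mean $\rho$ and the inequality $\E[Y^2]\leq\E[Y]$ for $Y\in[0,1]$; the closing binomial computation $\sum_{r=1}^{K}\binom{K}{r}(n-1)^{K-r}=n^K-(n-1)^K$ is the same in both arguments. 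Your two supporting facts do check out: $\rho_S(\mathbf{l})$ is an average of selectivities $\rho_{\mathbf{B}}\in[0,1]$, and averaging $\rho_S(\mathbf{l})$ over $\Lambda(S)$ recovers the average of $\rho_{\mathbf{B}}$ over all block combinations, which equals $\rho$. What your route buys is brevity and generality --- it bypasses the monotonicity lemma entirely and does not require the tuple-level (block size one) partitioning assumption, so it holds for arbitrary block sizes. What the paper's route buys is reuse: the monotonicity machinery and the exact value of $\sigma_{\mathcal{K}}^2$ are needed anyway for the tighter covariance bounds of Theorems~\ref{theorem:tighter-bound} and~\ref{theorem:dn-bound}, so the variance bound falls out as a corollary at no extra cost.
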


\begin{proof}
The proof is straightforward since this is a special case of Theorem~\ref{theorem:dn-bound} (see Appendix~\ref{sec:more-cov-bounds}). Specifically,
we have $\Var[\rho_n]=\Cov(\rho_n,\rho_n)$. By letting $m=K$ in Theorem~\ref{theorem:dn-bound}, we obtain
$$\Var[\rho_n]\leq f(n, K)g(\rho)^2,$$
where $f(n,K)=1-(1-\frac{1}{n})^K$ and $g(\rho)=\sqrt{\rho(1-\rho)}$. This completes the proof.
\end{proof}

According to Theorem~\ref{theorem:varbound}, $\Var[\rho_n]\to 0$ as $n\to\infty$. We are now ready to prove Theorem~\ref{theorem:c4approx}:

\begin{proof} (of Theorem~\ref{theorem:c4approx})
Let $\mu_l=\rho_n$, and $\E[\rho_n]=\rho$. Define $g(X)=b_0X^2+b_1X+b_2$. Since $\rho_n$ is strongly consistent, $\rho_n\as\rho$. Moreover, since $g$ is continuous, we have $f=g(\rho_n)\as g(\rho)$ by the continuous mapping theorem. Note that $g(\rho)$ is a constant. On the other hand, by Lemma~\ref{lemma:c4var} and Theorem~\ref{theorem:varbound}, $\Var[f]\to 0$ as $n\to\infty$. As a result, $$f^\mathcal{N}\indistr \E[f]=g(\rho).$$
Since $f\as g(\rho)$ implies $f\inprob g(\rho)$,
$$f^\mathcal{N}-f\indistr g(\rho)-g(\rho)=0$$
by Slutsky's theorem. Since $0$ is a constant, $f^\mathcal{N}-f\inprob 0$ as well. As a result, we have $f^\mathcal{N}\inprob f$.
\end{proof}

\subsection{Similar Results for (C6')}\label{sec:proofs:theory:c6}

\begin{lemma}\label{lemma:c6var}
If $X_l\sim \mathcal{N}(\mu_l, \sigma_l^2)$, $X_r\sim \mathcal{N}(\mu_r, \sigma_r^2)$, and $f=b_0X_lX_r + b_1X_l + b_2X_r + b_3$, then
$$\Var[f]=\sigma_l^2\big(b_0\mu_r+b_1\big)^2+\sigma_r^2\big(b_0\mu_l+b_2\big)^2+b_0^2\sigma_l^2\sigma_r^2.$$
\end{lemma}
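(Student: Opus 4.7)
The plan is to compute $\Var[f]$ directly via expansion, exploiting the independence $X_l \bot X_r$ established in Lemma~\ref{lemma:ind-bin}. Since $b_3$ is a constant, it contributes nothing to the variance, so we may reduce to $Y = b_0 X_l X_r + b_1 X_l + b_2 X_r$ and compute $\Var[Y]$.

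The cleanest route is a centering trick. I would first write $X_l = \mu_l + \epsilon_l$ and $X_r = \mu_r + \epsilon_r$, where $\epsilon_l \sim \mathcal{N}(0,\sigma_l^2)$ and $\epsilon_r \sim \mathcal{N}(0,\sigma_r^2)$ are independent. Substituting and using $\E[\epsilon_l]=\E[\epsilon_r]=0$ together with $\E[\epsilon_l\epsilon_r]=0$, one obtains $\E[Y] = b_0\mu_l\mu_r + b_1\mu_l + b_2\mu_r$, and after collecting like terms,
\begin{equation*}
Y - \E[Y] = (b_0\mu_r + b_1)\,\epsilon_l + (b_0\mu_l + b_2)\,\epsilon_r + b_0\,\epsilon_l\epsilon_r.
\end{equation*}

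Squaring this and taking expectations, all six cross terms vanish: the three terms of the form $\epsilon_l^2\epsilon_r$, $\epsilon_l\epsilon_r^2$, and $\epsilon_l\epsilon_r$ factor through independence into a product containing $\E[\epsilon_l]$ or $\E[\epsilon_r]$, which is zero. The three surviving square terms give $\sigma_l^2(b_0\mu_r+b_1)^2 + \sigma_r^2(b_0\mu_l+b_2)^2 + b_0^2\sigma_l^2\sigma_r^2$, matching the claim exactly.

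There is no real obstacle here; the main thing to be careful about is the bookkeeping in the expansion of $(Y-\E[Y])^2$, in particular verifying that every mixed moment of the form $\E[\epsilon_l^a\epsilon_r^b]$ with $a,b \geq 1$ and at least one of $a$ or $b$ equal to $1$ vanishes by independence plus the zero-mean property. An equivalent but slightly more mechanical alternative would be to compute $\E[Y^2]$ directly by expanding the six product terms of $Y^2$, using $\E[X_l^2] = \mu_l^2 + \sigma_l^2$ and $\E[X_r^2] = \mu_r^2 + \sigma_r^2$, and then subtract $(\E[Y])^2$; both routes lead to the same closed form, but the centering approach avoids the intermediate cancellations.
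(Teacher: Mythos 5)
Your proof is correct, and it reaches the stated formula by a decomposition that differs in a small but real way from the paper's. The paper applies the variance-of-a-sum formula directly to the three correlated terms $b_0X_lX_r$, $b_1X_l$, $b_2X_r$, which forces it to quote (or separately derive) the product-moment identities $\Var[X_lX_r]=\mu_l^2\sigma_r^2+\mu_r^2\sigma_l^2+\sigma_l^2\sigma_r^2$, $\Cov(X_lX_r,X_l)=\mu_r\sigma_l^2$, and $\Cov(X_lX_r,X_r)=\mu_l\sigma_r^2$, and then substitute and simplify. Your centering substitution $X_l=\mu_l+\epsilon_l$, $X_r=\mu_r+\epsilon_r$ instead rewrites $f-\E[f]$ as a sum of three \emph{mutually uncorrelated} terms $(b_0\mu_r+b_1)\epsilon_l$, $(b_0\mu_l+b_2)\epsilon_r$, and $b_0\epsilon_l\epsilon_r$, so the answer drops out as a sum of three variances with no cancellation and no need to precompute the mixed moments; the final algebraic form $\sigma_l^2(b_0\mu_r+b_1)^2+\sigma_r^2(b_0\mu_l+b_2)^2+b_0^2\sigma_l^2\sigma_r^2$ appears already grouped. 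Your check that every cross term contains a first power of $\epsilon_l$ or $\epsilon_r$ and hence vanishes by independence plus zero mean is exactly the point that makes this work. Both arguments use only independence and finite second moments (normality is never needed), so they are equally general; yours is slightly cleaner bookkeeping, the paper's is slightly more modular in that it reuses standard covariance identities also needed elsewhere in its covariance computations.
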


\begin{proof}
Since $X_l\bot X_r$, $\Cov(X_l, X_r)=0$. So
\begin{eqnarray*}
\Var[f]&=&b_0^2\cdot \Var[X_lX_r]+b_1^2\sigma_l^2+b_2^2\sigma_r^2\\
&+&2b_0b_1\cdot \Cov(X_lX_r, X_l)\\
&+&2b_0b_2\cdot \Cov(X_lX_r,X_r).
\end{eqnarray*}
Since $\Var[X_lX_r]=\mu_l^2\sigma_r^2+\mu_r^2\sigma_l^2+\sigma_l^2\sigma_r^2$,
$\Cov(X_lX_r,X_l)=\mu_r\sigma_l^2$, and similarly
$\Cov(X_lX_r,X_r)=\mu_l\sigma_r^2$,
we can have the desired expression for $\Var[f]$ by substituting these quantities.
\end{proof}

\begin{theorem}\label{theorem:c6approx}
Suppose that $X_l\sim \mathcal{N}(\mu_l, \sigma_l^2)$, $X_r\sim \mathcal{N}(\mu_r, \sigma_r^2)$, and $f=b_0X_lX_r + b_1X_l + b_2X_r + b_3$. Let $f^\mathcal{N}\sim \mathcal{N}(\E[f], \Var[f])$, where $\Var[f]$ is shown in Lemma~\ref{lemma:c6var}. Then $f^\mathcal{N}\inprob f$.
\end{theorem}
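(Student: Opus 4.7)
The plan is to follow the same template as the proof of Theorem~\ref{theorem:c4approx}, extended to the two-variable setting. Write $g(x,y) = b_0 xy + b_1 x + b_2 y + b_3$, so that $f = g(X_l, X_r)$. Interpret $X_l$ and $X_r$ as the sampling-based selectivity estimators for the left and right subtrees, so that they are (asymptotically) normal with means $\mu_l, \mu_r$ equal to the estimates $\rho_n^{(l)}, \rho_n^{(r)}$ and variances $\sigma_l^2, \sigma_r^2$. The strong consistency of the underlying estimator gives $\rho_n^{(l)} \as \rho^{(l)}$ and $\rho_n^{(r)} \as \rho^{(r)}$, where $\rho^{(l)}, \rho^{(r)}$ are the true selectivities. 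By Lemma~\ref{lemma:ind-bin} we additionally have $X_l \bot X_r$, which we will need when computing $\E[f]$ and $\Var[f]$.

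The first main step is to show that $f \as g(\rho^{(l)}, \rho^{(r)})$. Since a.s.\ convergence of coordinates implies joint a.s.\ convergence, $(X_l, X_r) \as (\rho^{(l)}, \rho^{(r)})$, and because $g$ is continuous on $\mathbb{R}^2$, the continuous mapping theorem yields $f \as g(\rho^{(l)}, \rho^{(r)})$, which in particular implies $f \inprob g(\rho^{(l)}, \rho^{(r)})$.

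The second main step is to show that $f^{\mathcal{N}} \indistr g(\rho^{(l)}, \rho^{(r)})$. For the mean, independence of $X_l$ and $X_r$ gives $\E[f] = b_0 \mu_l \mu_r + b_1 \mu_l + b_2 \mu_r + b_3$, and since $\mu_l \to \rho^{(l)}$, $\mu_r \to \rho^{(r)}$, this converges to $g(\rho^{(l)}, \rho^{(r)})$. For the variance, we use the explicit expression
\begin{equation*}
\Var[f] = \sigma_l^2(b_0 \mu_r + b_1)^2 + \sigma_r^2(b_0 \mu_l + b_2)^2 + b_0^2 \sigma_l^2 \sigma_r^2
\end{equation*}
from Lemma~\ref{lemma:c6var}. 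By Theorem~\ref{theorem:varbound} we have $\sigma_l^2, \sigma_r^2 \to 0$ as $n \to \infty$, while $\mu_l, \mu_r$ remain bounded (both live in $[0,1]$), so every summand vanishes and $\Var[f] \to 0$. A normal random variable whose mean converges to a constant $c$ and whose variance converges to $0$ converges in distribution (in fact in probability) to $c$; hence $f^{\mathcal{N}} \indistr g(\rho^{(l)}, \rho^{(r)})$, which again strengthens to convergence in probability because the limit is deterministic.

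Finally, combine the two pieces via Slutsky's theorem: since $f^{\mathcal{N}} \inprob g(\rho^{(l)}, \rho^{(r)})$ and $f \inprob g(\rho^{(l)}, \rho^{(r)})$, their difference converges in probability to $0$, i.e., $f^{\mathcal{N}} \inprob f$. The structure of the argument is essentially the same as in Theorem~\ref{theorem:c4approx}, and the only genuinely new ingredient is the bivariate continuous mapping step together with the use of $X_l \bot X_r$ to simplify $\E[f]$ and $\Var[f]$. The main place where one has to be careful is verifying that $\Var[f] \to 0$: the cross term $b_0^2 \sigma_l^2 \sigma_r^2$ is trivially handled, but one must note that the coefficients $(b_0 \mu_r + b_1)^2$ and $(b_0 \mu_l + b_2)^2$ stay bounded as the sample size grows, so that the vanishing of $\sigma_l^2, \sigma_r^2$ indeed drives $\Var[f]$ to zero. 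I do not expect this to pose any real difficulty.
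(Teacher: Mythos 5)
Your proposal is correct and follows essentially the same route as the paper's own proof: strong consistency plus the (bivariate) continuous mapping theorem to get $f \inprob g(\rho,\rho')$, the vanishing of $\Var[f]$ via Lemma~\ref{lemma:c6var} and Theorem~\ref{theorem:varbound} to get $f^{\mathcal{N}} \inprob \E[f]$, and Slutsky's theorem to conclude. Your added remark that the coefficients $(b_0\mu_r+b_1)^2$ and $(b_0\mu_l+b_2)^2$ stay bounded is a small but welcome piece of care that the paper leaves implicit.
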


\begin{proof}
Let $\mu_l=\rho_n$ and $\mu_r=\rho'_n$. Suppose that $\E[\rho_n]=\rho$ and $\E[\rho'_n]=\rho'$. Define
$$g(X,Y)=b_0XY+b_1X+b_2Y+b_3.$$
Since $\mu_l$ and $\mu_r$ are both strongly consistent, $\rho_n\as\rho$ and $\rho'_n\as\rho'$. Moreover, since $g$ is continuous, by the continuous mapping theorem we have
$$f=g(\rho_n,\rho'_n)\as g(\rho, \rho').$$
Note that $g(\rho,\rho')$ is a constant. On the other hand, by Lemma~\ref{lemma:c6var} and Theorem~\ref{theorem:varbound}, $\Var[f]\to 0$ as $n\to\infty$. As a result, since $X_l\bot X_r$ by Lemma~\ref{lemma:ind-bin}, it follows that
$$f^\mathcal{N}\indistr \E[f]=g(\rho,\rho')$$
Since $f\as g(\rho,\rho')$ implies $f\inprob g(\rho,\rho')$,
$$f^\mathcal{N}-f\indistr g(\rho,\rho')-g(\rho,\rho')=0$$
by Slutsky's theorem. Since $0$ is a constant, $f^\mathcal{N}-f\inprob 0$ as well. As a result, we have $f^\mathcal{N}\inprob f$.
\end{proof}

\subsection{Proof of Theorem~\ref{theorem:tc}}\label{sec:proofs:theorem:tc}

\begin{proof}
Since $f_{kc}$ and $c$ are independent, we have
$$\E[t_{kc}]=\E[f_{kc}^\mathcal{N}c]=\E[f_{kc}]\E[c]$$
and
$$\Var[t_{kc}]=E^2[f_{kc}]\Var[c]+ E^2[c]\Var[f_{kc}]+\Var[c]\Var[f_{kc}].$$
Since $\Var[f_{kc}]\to 0$ as $n\to\infty$,
$$\Pr(t_{kc}^\mathcal{N})\to \mathcal{N}(\E[f_{kc}]\E[c], E^2[f_{kc}]\Var[c]).$$ In other words, $t_{kc}^\mathcal{N}\indistr \E[f_{kc}]c$.

On the other hand, $f_{kc}^\mathcal{N}\inprob \E[f_{kc}]$ and $c\inprob c$. As a result, we have $(f_{kc}^\mathcal{N}, c)\inprob (\E[f_{kc}], c)$. By the continuous mapping theorem, $f_{kc}^\mathcal{N}c \inprob \E[f_{kc}]c$. That is, $t_{kc} \inprob \E[f_{kc}]c$, which implies $t_{kc} \indistr \E[f_{kc}]c$. This completes the proof of the theorem.
\end{proof}

\subsection{Convergence of $g_c^\mathcal{N}$}\label{sec:proofs:theorem:gc}

\begin{theorem}\label{theorem:gc}
Let $g_c=\sum_{k=1}^m f_{kc}^\mathcal{N}$ and
$$g_c^\mathcal{N}\sim\mathcal{N}(\E[g_c], \Var[g_c]).$$
Then $g_c^\mathcal{N}\inprob g_c$.
\end{theorem}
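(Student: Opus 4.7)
The plan is to mirror the two-step structure used in the proofs of Theorems~\ref{theorem:c4approx} and~\ref{theorem:c6approx}: identify a single deterministic limit $g_c^{\star}$ to which both $g_c^{\mathcal{N}}$ and $g_c$ converge in probability, and then conclude that their difference vanishes. Concretely, let $\rho_k$ (and $\rho_l^{(k)}, \rho_r^{(k)}$ when $O_k$ is binary) denote the true selectivities of the operators in the plan, and let $f_{kc}^{\star}$ be the deterministic value obtained by evaluating $f_{kc}$ at these true selectivities; set $g_c^{\star} = \sum_{k=1}^{m} f_{kc}^{\star}$.

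The main technical step is to show $\Var[g_c] \to 0$ as $n \to \infty$. Expanding
\[
\Var[g_c] \;=\; \sum_{k,l=1}^{m} \Cov\bigl(f_{kc}^{\mathcal{N}},\, f_{lc}^{\mathcal{N}}\bigr)
\]
and applying the Cauchy--Schwarz bound $|\Cov(Y,Z)| \leq \sqrt{\Var[Y]\Var[Z]}$, it suffices that each marginal $\Var[f_{kc}^{\mathcal{N}}] = \Var[f_{kc}]$ vanish. For types (C1')--(C3') and (C5') this is immediate, since $\Var[f_{kc}]$ is a linear combination of the $\sigma_n^2$'s; for the quadratic (C4') and product (C6') types, Lemmas~\ref{lemma:c4var} and~\ref{lemma:c6var} give explicit formulas that vanish as the $\sigma_n^2$'s go to zero, and Theorem~\ref{theorem:varbound} guarantees $\sigma_n^2 \to 0$. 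Hence $\Var[g_c^{\mathcal{N}}] = \Var[g_c] \to 0$, and by continuity $\E[g_c^{\mathcal{N}}] = \E[g_c] = \sum_k \E[f_{kc}] \to g_c^{\star}$, so $g_c^{\mathcal{N}} \indistr g_c^{\star}$; since the limit is a constant, this strengthens to $g_c^{\mathcal{N}} \inprob g_c^{\star}$.

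In parallel, Theorems~\ref{theorem:c4approx} and~\ref{theorem:c6approx} give $f_{kc}^{\mathcal{N}} \inprob f_{kc}$; strong consistency of $\rho_n$ together with the continuous mapping theorem gives $f_{kc} \inprob f_{kc}^{\star}$; summing the $m$ terms via the continuous mapping theorem yields $g_c = \sum_{k=1}^{m} f_{kc}^{\mathcal{N}} \inprob g_c^{\star}$. Slutsky's theorem then gives $g_c^{\mathcal{N}} - g_c \indistr g_c^{\star} - g_c^{\star} = 0$, and since $0$ is a constant, $g_c^{\mathcal{N}} - g_c \inprob 0$, i.e., $g_c^{\mathcal{N}} \inprob g_c$. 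The main subtlety is precisely the reason the theorem is needed at all: the $f_{kc}^{\mathcal{N}}$'s are correlated through shared upstream samples, so $g_c$ is not itself normal and the joint distribution of the $f_{kc}^{\mathcal{N}}$'s is not specified by the marginals. Fortunately, the argument routes everything through convergence to a scalar limit, so the unknown correlation structure only inflates the Cauchy--Schwarz bound on $\Var[g_c]$ by at most a factor of $m^2$ and no joint distributional assumption on the $f_{kc}^{\mathcal{N}}$'s is needed.
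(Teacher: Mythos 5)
Your proposal is correct and follows essentially the same route as the paper's proof: both arguments send $g_c^{\mathcal{N}}$ and $g_c$ to a common constant limit (you use the value at the true selectivities, the paper uses $\E[g_c]=\sum_k\E[f_{kc}]$, which coincide in the limit) and then conclude via Slutsky's theorem that the difference vanishes in probability. The one place you go beyond the paper is in explicitly justifying $\Var[g_c]\to 0$ through the Cauchy--Schwarz bound on the covariance expansion --- the paper simply asserts this --- which is a worthwhile addition given that the $f_{kc}^{\mathcal{N}}$'s are correlated.
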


\begin{proof}
Since by definition $f_{kc}^\mathcal{N}\sim\mathcal{N}(\E[f_{kc}],\Var[f_{kc}])$ and $\Var[f_{kc}]\to 0$, $f_{kc}^\mathcal{N}\indistr \E[f_{kc}]$. Since $\E[f_{kc}]$ is a constant, it implies that $f_{kc}^\mathcal{N}\inprob \E[f_{kc}]$. By the continuous mapping theorem, $g_c\inprob \sum_{k=1}^m \E[f_{kc}]$. On the other hand, since $\Var[g_c]\to 0$, $g_c^\mathcal{N}\indistr \E[g_c]$. Since $\E[g_c]$ is again a constant, it follows that
$$g_c^\mathcal{N}\inprob \E[g_c]=\sum_{k=1}^m \E[f_{kc}].$$
As a result, by applying the continuous mapping theorem again, we have $g_c^\mathcal{N}-g_c\inprob 0$ and hence $g_c^\mathcal{N}\inprob g_c$.
\end{proof}

\subsection{A Tighter Upper Bound for Covariance}\label{sec:proofs:theorem:tighter-bound}

Consider two operators $O$ and $O'$ where $O\in Desc(O')$. Suppose that $|\mathcal{R}|=K$,
$|\mathcal{R}'|=K'$, and $|\mathcal{R}\cap\mathcal{R}'|=m$ ($m\geq 1$). Let the estimators for $O$ and $O'$
be $\rho_n$ and $\rho'_n$ where $n$ is the number of sample steps, and define $\rho=\E[\rho_n]$ and $\rho'=\E[\rho'_n]$.

\begin{theorem}\label{theorem:tighter-bound}
Let $\mathcal{S}_r$, $\Lambda(S)$, and $\rho_S(\mathbf{l})$ be the same as that defined in Theorem~\ref{theorem:variance}. Define
$$\sigma_S^2=\frac{1}{|\Lambda(S)|}\sum_{\mathbf{l}\in\Lambda(S)}(\rho_S(\mathbf{l})-\rho)^2,$$
and
$$S_{\rho}^2(m,n)=\sum_{r=1}^m\big(1-\frac{1}{n}\big)^{m-r}\big(\frac{1}{n}\big)^r\sum_{S\in\mathcal{S}_r}\sigma_S^2.$$
We then have
$$|\Cov(\rho_n,\rho'_n)|\leq\sqrt{S_{\rho}^2(m,n)S_{\rho'}^2(m,n)}\leq \sqrt{\Var[\rho_n]\Var[\rho'_n]}.$$
\end{theorem}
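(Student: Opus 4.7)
The plan is to prove the two inequalities separately. The first will follow from a two-level Cauchy--Schwarz argument applied to a Haas-style decomposition of $\Cov(\rho_n,\rho'_n)$ into sample-block pair contributions; the second will follow from a monotonicity property of $\sigma_S^2$ in $S$ that is essentially the law of total variance.

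For the first inequality, I would expand
\[
\Cov(\rho_n,\rho'_n) \;=\; \frac{1}{n^{K+K'}}\sum_{(X,X')} \Cov(X,X'),
\]
where $X$ and $X'$ range over the sample-block products contributing to $\rho_n$ and $\rho'_n$. Because non-shared relations are drawn from distinct sample tables, $\Cov(X,X')=0$ unless $X$ and $X'$ reuse a common sampled block on at least one of the $m$ shared relations, so I partition the pairs by their exact agreement set $S\subseteq\{1,\dots,m\}$, just as in the proof of Theorem~\ref{theorem:variance}. A direct count (agreement on $|S|=r$ shared coordinates, forced disagreement on the other $m-r$ shared coordinates, and free choice on the remaining $K-m$ and $K'-m$ unshared coordinates) gives $n^{K+K'-m}(n-1)^{m-r}$ pairs, contributing the weight $(1-1/n)^{m-r}(1/n)^r$ after normalization. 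Next, for any pair agreeing exactly on $S$, an argument analogous to Lemma~\ref{lemma:covariance} — conditioning on the shared blocks indexed by $\mathbf{l}\in\Lambda(S)$ and exploiting independence across non-$S$ coordinates — yields
\[
\Cov(X,X') \;=\; \frac{1}{|\Lambda(S)|}\sum_{\mathbf{l}\in\Lambda(S)} (\rho_S(\mathbf{l})-\rho)(\rho'_S(\mathbf{l})-\rho'),
\]
so Cauchy--Schwarz on the $\mathbf{l}$-sum gives $|\Cov(X,X')|\le\sqrt{\sigma_S^2\,\sigma_S'^{\,2}}$. Substituting back, I would apply Cauchy--Schwarz a second time to the outer double sum over $r$ and $S$, with $x_{r,S}=\sqrt{w_{r,S}\,\sigma_S^2}$ and $y_{r,S}=\sqrt{w_{r,S}\,\sigma_S'^{\,2}}$ where $w_{r,S}=(1-1/n)^{m-r}(1/n)^r$; this produces $|\Cov(\rho_n,\rho'_n)|\le\sqrt{S_\rho^2(m,n)\,S_{\rho'}^2(m,n)}$.

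For the second inequality, I would prove monotonicity $S_\rho^2(m,n)\le S_\rho^2(K,n)=\Var[\rho_n]$ (and symmetrically for $\rho'$) via a probabilistic reading of both sides. Let $I_1,\dots,I_K$ be i.i.d.\ Bernoulli$(1/n)$ and set $S^\ast=\{k:I_k=1\}$, $S^{\ast\ast}=S^\ast\cap\{1,\dots,m\}$. Then the formulas in Theorem~\ref{theorem:variance} and in the present statement can be rewritten as $\Var[\rho_n]=\E[\sigma_{S^\ast}^2]$ and $S_\rho^2(m,n)=\E[\sigma_{S^{\ast\ast}}^2]$ (using that $\sigma_\emptyset^2=0$). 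Under the coupling $S^{\ast\ast}\subseteq S^\ast$, it suffices to show $\sigma_S^2\le\sigma_T^2$ whenever $S\subseteq T$. This is the law of total variance: if $\mathbf{J}$ is a uniformly random full-range block index, then $\sigma_S^2=\Var(\E[\rho_{\mathbf{B}(\mathbf{J})}\mid \mathbf{J}_S])$, and conditioning on a finer $\sigma$-algebra $\mathbf{J}_T$ only increases the variance of the conditional expectation. Taking expectations under the coupling then gives $S_\rho^2(m,n)\le \Var[\rho_n]$; combining with the symmetric inequality for $\rho'_n$ yields $\sqrt{S_\rho^2(m,n)\,S_{\rho'}^2(m,n)}\le\sqrt{\Var[\rho_n]\Var[\rho'_n]}$.

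The main obstacle I expect is in the first step: correctly doing the combinatorial bookkeeping so that the weight attached to each $S\in\mathcal{S}_r$ comes out exactly to $(1-1/n)^{m-r}(1/n)^r$, and verifying that the inner Cauchy--Schwarz is genuinely over a \emph{common} index set $\Lambda(S)$ indexing identical shared block tuples for both $O$ and $O'$. This is where the assumption that shared relations use a single common sample table is essential — otherwise even agreement on $S$ would not yield a common $\mathbf{l}$, and the covariance would simply vanish. Getting the decomposition right is really the heart of the proof; the monotonicity in the second step, once phrased as a law-of-total-variance statement under the coupling, is then a short observation.
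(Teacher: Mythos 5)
Your proposal is correct, and it splits naturally into a part that mirrors the paper and a part that genuinely improves on it. For the first inequality you follow essentially the paper's route: the decomposition of $\Cov(\rho_n,\rho'_n)$ into block-pair covariances with weights $(1-\frac{1}{n})^{m-r}(\frac{1}{n})^r$ is exactly Lemma~\ref{lemma:dn}, and your inner Cauchy--Schwarz over $\Lambda(S)$ is Lemma~\ref{lemma:ds}; your only deviation is that you then apply a single Cauchy--Schwarz over the joint index set $\{(r,S)\}$ with weights $w_{r,S}$, where the paper applies it twice (once over $S\in\mathcal{S}_r$, then again over $r$ via the quantities $A_r$). The two are equivalent, and yours is marginally cleaner. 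Your combinatorial count $n^{K+K'-m}(n-1)^{m-r}$ and your observation that the shared relations must use a common sample table for $\Lambda(S)$ to be a genuinely common index set are both right and match the paper. For the second inequality your argument is genuinely different and nicer: the paper proves $S_\rho^2(m,n)\leq S_\rho^2(m+1,n)$ by an explicit induction (Lemma~\ref{lemma:monotone}) built on a bijection between $\mathcal{S}_r^{(m+1)}\setminus\mathcal{S}_r^{(m)}$ and $\mathcal{S}_{r-1}^{(m)}$ together with a recursion $S_{m+1}=(1-\frac{1}{n})S_m+B_m$, whereas you read $\Var[\rho_n]=\E[\sigma^2_{S^\ast}]$ and $S_\rho^2(m,n)=\E[\sigma^2_{S^{\ast\ast}}]$ for the Bernoulli coupling $S^{\ast\ast}=S^\ast\cap\{1,\dots,m\}\subseteq S^\ast$ and conclude pointwise from the monotonicity $\sigma_S^2\leq\sigma_T^2$ for $S\subseteq T$ (which is the paper's Lemma~\ref{lemma:sigma}, or equivalently your law-of-total-variance reading $\sigma_S^2=\Var(\E[\rho_{\mathbf{B}(\mathbf{J})}\mid\mathbf{J}_S])$). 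This buys a one-line replacement for the paper's most technical lemma and makes transparent \emph{why} the bound holds: conditioning on fewer shared coordinates can only shrink the variance of the conditional mean.
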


We next prove Theorem~\ref{theorem:tighter-bound}. To establish the first inequality in the theorem, namely,
$$|\Cov(\rho_n,\rho'_n)|\leq\sqrt{S_{\rho}^2(m,n)S_{\rho'}^2(m,n)},$$
we need two lemmas. The first one gives an explicit expression of the covariance $\Cov(\rho_n,\rho'_n)$, which is quite similar to the expression of $\Var[\rho_n]$ shown in Theorem~\ref{theorem:variance}.

\begin{lemma}\label{lemma:dn}
Let $\mathcal{S}_r$ be the collection of all subsets of $\mathcal{R}$ with size $r$ (for $1\leq r\leq m$). Define
$$\Cov_S(\rho,\rho')=\frac{1}{|\Lambda(S)|}\sum_{\mathbf{l}\in\Lambda(S)}(\rho_S(\mathbf{l})-\rho)(\rho'_S(\mathbf{l})-\rho').$$
Then
\begin{eqnarray*}
\Cov(\rho_n,\rho'_n)=\sum_{r=1}^m\frac{(n-1)^{m-r}}{n^m}\sum_{S\in \mathcal{S}_r}\Cov_S(\rho,\rho').
\end{eqnarray*}
Here $\rho_S(\mathbf{l})$ and $\rho'_S(\mathbf{l})$ are the same as that in Theorem~\ref{theorem:variance}, defined over $\mathcal{R}$ and $\mathcal{R}'$ respectively.
\end{lemma}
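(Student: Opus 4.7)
The plan is to mirror the proof of Theorem~\ref{theorem:variance}: expand $\Cov(\rho_n, \rho'_n)$ by bilinearity into a double sum over sample-block selectivities, partition the resulting pairs according to which coordinates of $\mathcal{R} \cap \mathcal{R}'$ they share, count the pairs in each class, and evaluate the conditional covariance. First I would write
$$\rho_n = \frac{1}{n^K}\sum_{X \in \mathcal{X}} X, \qquad \rho'_n = \frac{1}{n^{K'}}\sum_{X' \in \mathcal{X}'} X',$$
where $\mathcal{X}$ and $\mathcal{X}'$ are the indexed families of sample-block selectivities $\rho_{\mathbf{B}(L_{1,i_1},\ldots,L_{K,i_K})}$ and $\rho'_{\mathbf{B}(L_{1,i'_1},\ldots,L_{K',i'_{K'}})}$ attached to $\mathcal{R}$ and $\mathcal{R}'$ respectively. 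Bilinearity of covariance then gives
$$\Cov(\rho_n, \rho'_n) \;=\; \frac{1}{n^{K+K'}} \sum_{X \in \mathcal{X}} \sum_{X' \in \mathcal{X}'} \Cov(X, X').$$

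Next I would argue, exactly as in the variance derivation, that $\Cov(X, X') = 0$ unless $X$ and $X'$ share at least one sampling index on some relation in $\mathcal{R} \cap \mathcal{R}'$: samples drawn from distinct relations are independent (Lemma~\ref{lemma:ind}), and samples drawn in distinct sampling steps from the same relation are also independent. Writing $X \sim_S X'$ to mean the indices of $X$ and $X'$ agree exactly on $S \subseteq \mathcal{R} \cap \mathcal{R}'$ and differ on $(\mathcal{R} \cap \mathcal{R}') \setminus S$, this partitions the contributing pairs by the size $r = |S|$, so
$$\Cov(\rho_n, \rho'_n) \;=\; \frac{1}{n^{K+K'}} \sum_{r=1}^m \sum_{S \in \mathcal{S}_r} \sum_{X \sim_S X'} \Cov(X, X').$$

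I would then do two separate computations. For the counting: fixing $S$ with $|S| = r$, $X$ has $n^K$ choices; $X'$ must match $X$ on the $r$ indices of $S$ (1 choice each), differ from $X$ on the remaining $m - r$ shared indices ($n-1$ choices each), and range freely on the $K' - m$ coordinates corresponding to $\mathcal{R}' \setminus \mathcal{R}$ ($n$ choices each), yielding $n^K(n-1)^{m-r}n^{K'-m}$ pairs. For the per-class covariance: adapting the argument of Lemma~\ref{lemma:covariance}, I would condition on the shared indices $\mathbf{l} \in \Lambda(S)$ and then independently average $X$ over its coordinates outside $S$ and $X'$ over its coordinates outside $S$; the two inner averages produce the factors $\rho_S(\mathbf{l}) - \rho$ and $\rho'_S(\mathbf{l}) - \rho'$ respectively, and averaging over $\mathbf{l}$ returns $\Cov_S(\rho, \rho')$. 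Substituting and simplifying $n^K(n-1)^{m-r}n^{K'-m}/n^{K+K'} = (n-1)^{m-r}/n^m$ yields the claimed identity.

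The main obstacle is the asymmetry between the two estimators that is absent in the variance case: here $\mathcal{R}$ and $\mathcal{R}'$ need not coincide, so the bookkeeping must cleanly separate three kinds of coordinates, namely those in $S$ (which are tied), those in $(\mathcal{R} \cap \mathcal{R}') \setminus S$ (which are forced to differ, and therefore dictate the $(n-1)^{m-r}$ factor), and those in the symmetric differences $\mathcal{R} \setminus \mathcal{R}'$ and $\mathcal{R}' \setminus \mathcal{R}$ (which are completely free and whose contributions must collapse to $\rho_S(\mathbf{l})$ and $\rho'_S(\mathbf{l})$ under conditioning on $\mathbf{l}$). Once this trichotomy is set up, the combinatorial accounting falls out mechanically and the formula for $\Cov(\rho_n, \rho'_n)$ becomes a direct bivariate analog of the formula for $\Var[\rho_n]$.
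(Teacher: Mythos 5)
Your proposal is correct and follows essentially the same route as the paper's proof: bilinearity of covariance over the $n^{K+K'}$ pairs of sample blocks, partitioning the contributing pairs by the exact subset $S\subseteq\mathcal{R}\cap\mathcal{R}'$ of shared sampling indices, the count $n^K(n-1)^{m-r}n^{K'-m}$ per class, and the conditional factorization of each class covariance into $(\rho_S(\mathbf{l})-\rho)(\rho'_S(\mathbf{l})-\rho')$ via independent averaging over the untied coordinates. The trichotomy of coordinates you flag as the main bookkeeping obstacle is precisely what the paper handles with the complements $S^c_K$ and $S^c_{K'}$, so nothing is missing.
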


\begin{proof}
The idea is similar to our proof of Theorem~\ref{theorem:variance}. Let $K=|\mathcal{R}|$ and $K'=|\mathcal{R}'|$. We have
$$\rho_n=\frac{1}{n^K}\sum_{i_1=1}^{n}\cdots\sum_{i_K=1}^{n}\rho_{\mathbf{B}(L_{1,i_1},\cdots,L_{K,i_K})},$$
and
$$\rho'_n=\frac{1}{n^{K'}}\sum_{i_1=1}^{n}\cdots\sum_{i_{K'}=1}^{n}\rho'_{\mathbf{B}(L_{1,i_1},\cdots,L_{K,i_{K'}})}.$$
Therefore,
$$\E[\rho_n\rho'_n]=\frac{1}{n^{K+K'}}\sum_{k=1}^{n^K}\sum_{k'=1}^{n^{K'}}\E[\rho_{\mathbf{B}}\rho'_{\mathbf{B}}],$$
and
$$\E[\rho_n]\E[\rho'_n]=\frac{1}{n^{K+K'}}\sum_{k=1}^{n^K}\sum_{k'=1}^{n^{K'}}\E[\rho_{\mathbf{B}}]\E[\rho'_{\mathbf{B}}].$$
Hence, by letting $d_n=\Cov(\rho_n,\rho'_n)=\E[\rho_n\rho'_n]-\E[\rho_n]\E[\rho'_n]$,
\begin{eqnarray*}
d_n&=&\frac{1}{n^{K+K'}}\sum_{k=1}^{n^K}\sum_{k'=1}^{n^{K'}}\big(\E[\rho_{\mathbf{B}}\rho'_{\mathbf{B}}]-\E[\rho_{\mathbf{B}}]\E[\rho'_{\mathbf{B}}]\big)\\
&=&\frac{1}{n^{K+K'}}\sum_{k=1}^{n^K}\sum_{k'=1}^{n^{K'}}\Cov(\rho_{\mathbf{B}},\rho'_{\mathbf{B}}).
\end{eqnarray*}
If $\mathbf{B}$ and $\mathbf{B}'$ share no blocks, then $\rho_{\mathbf{B}}$ and $\rho'_{\mathbf{B}}$ are independent and thus $\Cov(\rho_{\mathbf{B}},\rho'_{\mathbf{B}})=0$. Thus we only need to consider the case that $\mathbf{B}$ and $\mathbf{B}'$ share at least one block. Similarly as before, we partition the pairs $(\rho_{\mathbf{B}},\rho'_{\mathbf{B}})$ based on the number of blocks $\mathbf{B}$ and $\mathbf{B}'$ share. According to Lemma~\ref{lemma:npairs}, for a fixed $S\in\mathcal{S}_r$, the number of pairs $(\rho_{\mathbf{B}},\rho'_{\mathbf{B}})$ such that $\rho_{\mathbf{B}}\sim_S\rho'_{\mathbf{B}}$ is
$$n^K(n-1)^{m-r}n^{K'-m}=n^{K+K'-m}(n-1)^{m-r}.$$
We hence have
$$\Cov(\rho_n,\rho'_n)=\sum_{r=1}^{m}\frac{(n-1)^{m-r}}{n^m}\times\sum_{S\in\mathcal{S}_r}\Cov(\rho_{\mathbf{B}},\rho'_{\mathbf{B}}).$$
Similarly as in Lemma~\ref{lemma:covariance}, we have
$$\Cov(\rho_{\mathbf{B}},\rho'_{\mathbf{B}})=\E_{\rho_{\mathbf{B}}\sim_S\rho'_{\mathbf{B}}}[(\rho_{\mathbf{B}}-\rho)(\rho'_{\mathbf{B}}-\rho')],$$
and hence
$$\Cov(\rho_{\mathbf{B}},\rho'_{\mathbf{B}})=\frac{1}{|\Lambda(S)|}\sum_{\mathbf{l}\in\Lambda(S)}\E_{\rho_{\mathbf{B}}\sim_{S(\mathbf{l})}\rho'_{\mathbf{B}}}
[(\rho_{\mathbf{B}}-\rho)(\rho'_{\mathbf{B}}-\rho')].$$
Now let $\mathcal{K}$ and $\mathcal{K}'$ be the indexes of the relations in $\mathcal{R}$ and $\mathcal{R}'$ respectively. Denote $S^c_K=\mathcal{K}-S$ and $S^c_{K'}=\mathcal{K}'-S$. Let $$E=\E_{\rho_{\mathbf{B}}\sim_{S(\mathbf{l})}\rho'_{\mathbf{B}}}[(\rho_{\mathbf{B}}-\rho)(\rho'_{\mathbf{B}}-\rho')].$$
We have
\begin{eqnarray*}
E&=&\frac{1}{|\Lambda(S^c_K)|\cdot|\Lambda(S^c_{K'})|}\sum_{\Lambda(S^c_K)}\sum_{\Lambda(S^c_{K'})}
\big((\rho_{\mathbf{B}}-\rho)(\rho'_{\mathbf{B}}-\rho')\big)\\
&=&\big(\frac{1}{|\Lambda(S^c_K)|}\sum_{\Lambda(S^c_K)}(\rho_{\mathbf{B}}-\rho)\big)
\big(\frac{1}{|\Lambda(S^c_{K'})|}\sum_{\Lambda(S^c_{K'})}(\rho'_{\mathbf{B}}-\rho')\big)\\
&=&\big((\frac{1}{|\Lambda(S^c_K)|}\sum_{\Lambda(S^c_K)}\rho_{\mathbf{B}})-\rho\big)
\big((\frac{1}{|\Lambda(S^c_{K'})|}\sum_{\Lambda(S^c_{K'})}\rho'_{\mathbf{B}})-\rho'\big)\\
&=&(\rho_S(\mathbf{l})-\rho)(\rho'_S(\mathbf{l})-\rho').
\end{eqnarray*}
This completes the proof of the lemma.
\end{proof}

\noindent Our second lemma further provides an upper bound for $\Cov_S(\rho,\rho')$:
\begin{lemma}\label{lemma:ds}
Let $S\in\mathcal{S}_r$. Then we have
$$|\Cov_S(\rho,\rho')|\leq\sqrt{\sigma_S^2\cdot(\sigma'_S)^2}.$$
\end{lemma}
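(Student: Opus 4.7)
The plan is to recognize this as the Cauchy--Schwarz inequality applied to two finite sequences indexed by $\mathbf{l} \in \Lambda(S)$. Define $u(\mathbf{l}) = \rho_S(\mathbf{l}) - \rho$ and $v(\mathbf{l}) = \rho'_S(\mathbf{l}) - \rho'$. Then directly from the definitions,
$$\Cov_S(\rho,\rho') = \frac{1}{|\Lambda(S)|} \sum_{\mathbf{l}\in\Lambda(S)} u(\mathbf{l})\,v(\mathbf{l}), \qquad \sigma_S^2 = \frac{1}{|\Lambda(S)|} \sum_{\mathbf{l}\in\Lambda(S)} u(\mathbf{l})^2,$$
and analogously $(\sigma'_S)^2 = |\Lambda(S)|^{-1}\sum_{\mathbf{l}} v(\mathbf{l})^2$. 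So $\Cov_S$ is (up to the normalizing factor $|\Lambda(S)|^{-1}$) an inner product of $u$ and $v$, while $\sigma_S^2$ and $(\sigma'_S)^2$ are the squared norms.

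First I would take absolute values inside the sum and apply the standard Cauchy--Schwarz inequality:
$$\Bigl|\sum_{\mathbf{l}} u(\mathbf{l})v(\mathbf{l})\Bigr| \le \sqrt{\sum_{\mathbf{l}} u(\mathbf{l})^2}\,\sqrt{\sum_{\mathbf{l}} v(\mathbf{l})^2}.$$
Dividing both sides by $|\Lambda(S)|$ and rewriting the right-hand side as $\sqrt{|\Lambda(S)|^{-1}\sum u^2}\cdot\sqrt{|\Lambda(S)|^{-1}\sum v^2}$ (i.e., absorbing one $|\Lambda(S)|^{-1/2}$ into each square root) yields $|\Cov_S(\rho,\rho')| \le \sqrt{\sigma_S^2 \cdot (\sigma'_S)^2}$, which is exactly the claim.

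Equivalently, one can phrase this probabilistically: let $L$ be uniform on $\Lambda(S)$ and view $U = \rho_S(L) - \rho$, $V = \rho'_S(L) - \rho'$ as random variables. Then $\Cov_S(\rho,\rho') = \E[UV]$, $\sigma_S^2 = \E[U^2]$, $(\sigma'_S)^2 = \E[V^2]$, and the bound is the Cauchy--Schwarz inequality $|\E[UV]| \le \sqrt{\E[U^2]\E[V^2]}$. There is no real obstacle here; the only thing worth checking is that the centerings $\rho$ and $\rho'$ play no special role in the argument (they do not, since we simply treat $U$ and $V$ as arbitrary sequences and apply Cauchy--Schwarz to their raw values, not to their deviations from their means).
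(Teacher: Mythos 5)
Your proposal is correct and is essentially identical to the paper's own proof, which likewise applies the Cauchy--Schwarz inequality to the sequences $\rho_S(\mathbf{l})-\rho$ and $\rho'_S(\mathbf{l})-\rho'$ over $\Lambda(S)$ and then absorbs the normalization $1/|\Lambda(S)|$ into the two square roots. Nothing further is needed.
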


\begin{proof}
Let $d_S=\Cov_S(\rho,\rho')$ and $d_{\rho}^2=(\rho_S(\mathbf{l})-\rho)^2$. By the Cauchy-Schwarz inequality, we have
\begin{eqnarray*}
d_S^2&=&\frac{1}{|\Lambda(S)|^2}\big(\sum_{\mathbf{l}\in\Lambda(S)}(\rho_S(\mathbf{l})-\rho)(\rho'_S(\mathbf{l})-\rho')\big)^2\\
&\leq&\frac{1}{|\Lambda(S)|^2}\big(\sum_{\mathbf{l}\in\Lambda(S)}d_{\rho}^2\big)\big(\sum_{\mathbf{l}\in\Lambda(S)}d_{\rho'}^2\big)\\
&=&\big(\frac{1}{|\Lambda(S)|}\sum_{\mathbf{l}\in\Lambda(S)}d_{\rho}^2\big)\big(\frac{1}{|\Lambda(S)|}\sum_{\mathbf{l}\in\Lambda(S)}d_{\rho'}^2\big)\\
&=&\sigma_S^2\cdot(\sigma'_S)^2,
\end{eqnarray*}
The lemma then follows immediately.
\end{proof}

\noindent We can now prove the first inequality in Theorem~\ref{theorem:tighter-bound}:
\begin{proof}
Let $d_n=\Cov(\rho_n,\rho'_n)$. By Lemma~\ref{lemma:dn} and~\ref{lemma:ds} we have
\begin{eqnarray*}
|d_n|&=&|\sum_{r=1}^m\big(1-\frac{1}{n}\big)^{m-r}\big(\frac{1}{n}\big)^r\sum_{S\in\mathcal{S}_r}\Cov_S(\rho,\rho')|\\
&\leq&\sum_{r=1}^m\big(1-\frac{1}{n}\big)^{m-r}\big(\frac{1}{n}\big)^r\sum_{S\in\mathcal{S}_r}\sqrt{\sigma_S^2(\sigma'_S)^2}.
\end{eqnarray*}
By the Cauchy-Schwarz inequality, we have
$$\sum_{S\in\mathcal{S}_r}\sqrt{\sigma_S^2(\sigma'_S)^2}\leq\sqrt{\sum_{S\in\mathcal{S}_r}\sigma_S^2\sum_{S\in\mathcal{S}_r}(\sigma'_S)^2}.$$
Combining these two inequalities, we obtain
$$|d_n|\leq\sum_{r=1}^m\big(1-\frac{1}{n}\big)^{m-r}\big(\frac{1}{n}\big)^r\sqrt{\sum_{S\in\mathcal{S}_r}\sigma_S^2\sum_{S\in\mathcal{S}_r}(\sigma'_S)^2}.$$
Now define
$$A_r=\sqrt{\big(1-\frac{1}{n}\big)^{m-r}\big(\frac{1}{n}\big)^r\sum_{S\in\mathcal{S}_r}\sigma_S^2}.$$
Then $|d_n|\leq \sum_{r=1}^m A_rA'_r$. Applying the Cauchy-Schwarz inequality again,
$$|d_n|\leq \sqrt{\big(\sum_{r=1}^m A_r^2\big)\big(\sum_{r=1}^m (A')_r^2\big)}=\sqrt{S_{\rho}^2(m,n)S_{\rho'}^2(m,n)},$$
which completes the proof of the inequality.
\end{proof}

To establish the second inequality in the theorem, namely,
$$\sqrt{S_{\rho}^2(m,n)S_{\rho'}^2(m,n)}\leq \sqrt{\Var[\rho_n]\Var[\rho'_n]},$$
we need two more lemmas. The first one states that the $\sigma_S^2$ has some nice \emph{monotonicity} property:

\begin{lemma}\label{lemma:sigma}
Let $S\in\mathcal{S}_r$ and $S'\in\mathcal{S}_{r+1}$ such that $S\subset S'$, for $1\leq r\leq K-1$. Then $\sigma_S^2\leq\sigma_{S'}^2$.
\end{lemma}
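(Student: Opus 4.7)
The plan is to reduce the claim to a single application of Jensen's inequality (equivalently, Cauchy--Schwarz in the form ``square of mean $\leq$ mean of squares''), after unfolding the definition of $\rho_S(\mathbf{l})$ so that the enlarged index set $S'$ corresponds to a finer conditional average than $S$.

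First I would write $S' = S \cup \{k\}$ for some index $k \notin S$, so that any $\mathbf{l}' \in \Lambda(S')$ can be uniquely decomposed as $\mathbf{l}' = (\mathbf{l}, l_k)$ with $\mathbf{l} \in \Lambda(S)$ and $1 \leq l_k \leq m_k$. Directly from the definition of $\rho_S(\mathbf{l})$ as the average of $\rho_{\mathbf{B}(L_1,\dots,L_K)}$ over all coordinates outside $S$, and of $\rho_{S'}(\mathbf{l}, l_k)$ as the average over all coordinates outside $S'$, one obtains the telescoping identity
\begin{equation*}
\rho_S(\mathbf{l}) \;=\; \frac{1}{m_k} \sum_{l_k=1}^{m_k} \rho_{S'}(\mathbf{l}, l_k).
\end{equation*}
That is, $\rho_S(\mathbf{l})$ is just the further average of $\rho_{S'}(\cdot)$ over the one extra coordinate $l_k$.

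Next I would apply Jensen's inequality to the convex function $x \mapsto (x-\rho)^2$, giving
\begin{equation*}
(\rho_S(\mathbf{l}) - \rho)^2 \;\leq\; \frac{1}{m_k}\sum_{l_k=1}^{m_k} (\rho_{S'}(\mathbf{l}, l_k) - \rho)^2.
\end{equation*}
Averaging both sides over $\mathbf{l} \in \Lambda(S)$ and using $|\Lambda(S')| = |\Lambda(S)|\cdot m_k$, the double average on the right collapses into a single average over $\Lambda(S')$, yielding exactly $\sigma_{S'}^2$ on the right-hand side and $\sigma_S^2$ on the left.

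I do not anticipate a serious obstacle here: the only care needed is in bookkeeping the index sets (checking that the coordinates not in $S$ are the coordinates not in $S'$, plus the extra coordinate $l_k$) so that the telescoping identity above is genuinely correct. Everything else is a one-line convexity argument.
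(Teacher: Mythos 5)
Your proposal is correct and follows essentially the same route as the paper's proof: the paper likewise decomposes $\rho_S(\mathbf{l})$ as the average of $\rho_{S'}(\mathbf{l}'_j)$ over the one extra coordinate and then applies the inequality $\bigl(\frac{1}{m}\sum_j a_j\bigr)^2\leq\frac{1}{m}\sum_j a_j^2$ (stated there as Cauchy--Schwarz, which is your Jensen step for $x\mapsto(x-\rho)^2$) before averaging over $\Lambda(S)$. No gap; the bookkeeping you flag is exactly the content of the paper's identity $|\Lambda(S^c)|=m_{r+1}|\Lambda((S')^c)|$.
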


\begin{proof}
Without loss of generality, let $S=\{1,...,r\}$ and $S'=\{1,...,r+1\}$. For a given $\mathbf{l}=(j_1,...,j_r)\in\Lambda(S)$, let $\mathbf{l}'_j=(j_1,...,j_r,j)$, for $1\leq j\leq m_{r+1}$. Since $\Lambda(S')=\Lambda(S)\times\{1,...,m_{r+1}\}$, we have $\Lambda(S^c)=\Lambda((S')^c)\times\{1,...,m_{r+1}\}$ and thus $|\Lambda(S^c)|=m_{r+1}|\Lambda((S')^c)|$. Therefore, by letting $d_{\rho}^2=(\rho_S(\mathbf{l})-\rho)^2$, it follows that
\begin{eqnarray*}
d_{\rho}^2&=&\big((\frac{1}{|\Lambda(S^c)|}\sum_{\Lambda(S^c)}\rho_{\mathbf{B}})-\rho\big)^2\\
&=&\big(\frac{1}{|\Lambda(S^c)|}\sum_{\Lambda(S^c)}(\rho_{\mathbf{B}}-\rho)\big)^2\\
&=&\big(\frac{1}{m_{r+1}|\Lambda((S')^c)|}\sum_{j=1}^{m_{r+1}}\sum_{\Lambda((S')^c)}(\rho_{\mathbf{B}}-\rho)\big)^2\\
&=&\frac{1}{m_{r+1}^2}\big(\sum_{j=1}^{m_{r+1}}\frac{1}{|\Lambda((S')^c)|}\sum_{\Lambda((S')^c)}(\rho_{\mathbf{B}}-\rho)\big)^2.
\end{eqnarray*}
By the Cauchy-Schwarz inequality, we have
\begin{eqnarray*}
d_{\rho}^2&=&\frac{1}{m_{r+1}}\sum_{j=1}^{m_{r+1}}\big(\frac{1}{|\Lambda((S')^c)|}\sum_{\Lambda((S')^c)}(\rho_{\mathbf{B}}-\rho)\big)^2\\
&=&\frac{1}{m_{r+1}}\sum_{j=1}^{m_{r+1}}\big((\frac{1}{|\Lambda((S')^c)|}\sum_{\Lambda((S')^c)}\rho_{\mathbf{B}})-\rho\big)^2\\
&=&\frac{1}{m_{r+1}}\sum_{j=1}^{m_{r+1}}(\rho_{S'}(\mathbf{l}'_j)-\rho)^2.
\end{eqnarray*}
Therefore,
\begin{eqnarray*}
\sigma_S^2&=&\frac{1}{|\Lambda(S)|}\sum_{\Lambda(S)}(\rho_S(\mathbf{l})-\rho)^2\\
&\leq &\frac{1}{|\Lambda(S)|m_{r+1}}\sum_{\Lambda(S)}\sum_{j=1}^{m_{r+1}}(\rho_{S'}(\mathbf{l}'_j)-\rho)^2\\
&=&\frac{1}{|\Lambda(S')|}\sum_{\Lambda(S')}(\rho_{S'}(\mathbf{l}')-\rho)^2\\
&=&\sigma_{S'}^2.
\end{eqnarray*}
This completes the proof of the lemma.
\end{proof}

\noindent Our next lemma further shows that the $S_{\rho}^2(m,n)$ also has some similar monotonicity property:

\begin{lemma}\label{lemma:monotone}
For $m\geq 1$, we have
$$S_{\rho}^2(m,n)\leq S_{\rho}^2(m+1, n).$$
\end{lemma}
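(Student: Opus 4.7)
The plan is to decompose $S_\rho^2(m+1,n)$ by splitting each size-$r$ subset of the $m+1$ common relations according to whether or not it contains the newly added $(m+1)$-th relation, apply the term-by-term monotonicity bound of Lemma~\ref{lemma:sigma}, and recognize the resulting lower bound as a convex combination of two copies of $S_\rho^2(m,n)$ with weights $q=1-1/n$ and $p=1/n$, which sum to $1$.

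Concretely, I would abbreviate $p=1/n$, $q=1-p$, and write $\mathcal{S}_r^{(k)}$ for the size-$r$ subsets of the first $k$ common relations, so that if we set $T_r^{(k)}:=\sum_{S\in\mathcal{S}_r^{(k)}}\sigma_S^2$ then $S_\rho^2(k,n)=\sum_{r=1}^{k}q^{k-r}p^r T_r^{(k)}$. Partitioning $\mathcal{S}_r^{(m+1)}$ into subsets that avoid $m+1$ (giving $T_r^{(m)}$) and those that contain $m+1$, written uniquely as $S_0\cup\{m+1\}$ with $S_0\in\mathcal{S}_{r-1}^{(m)}$, yields
\[
T_r^{(m+1)} \;=\; T_r^{(m)} + U_r^{(m)},\qquad U_r^{(m)}\;:=\;\sum_{S_0\in\mathcal{S}_{r-1}^{(m)}}\sigma_{S_0\cup\{m+1\}}^2.
\]
Since $S_0\subset S_0\cup\{m+1\}$, Lemma~\ref{lemma:sigma} gives $\sigma_{S_0}^2\le\sigma_{S_0\cup\{m+1\}}^2$ term by term, so $U_r^{(m)}\ge T_{r-1}^{(m)}$ under the convention $T_0^{(m)}:=0$.

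Substituting this decomposition into the definition and separating the $T_r^{(m)}$ and $U_r^{(m)}$ pieces gives
\begin{align*}
S_\rho^2(m+1,n) \;=\; \sum_{r=1}^{m+1}q^{m+1-r}p^r T_r^{(m)} \;+\; \sum_{r=1}^{m+1}q^{m+1-r}p^r U_r^{(m)}.
\end{align*}
The first sum equals $q\cdot S_\rho^2(m,n)$, since $T_{m+1}^{(m)}=0$ trims the range to $r\le m$ and the prefactor $q^{m+1-r}=q\cdot q^{m-r}$. For the second sum, applying $U_r^{(m)}\ge T_{r-1}^{(m)}$ and then shifting $s:=r-1$ yields $\sum_{s=0}^{m}q^{m-s}p^{s+1}T_s^{(m)} = p\cdot S_\rho^2(m,n)$, again using $T_0^{(m)}=0$. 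Adding the two pieces gives $S_\rho^2(m+1,n)\ge (q+p)\,S_\rho^2(m,n) = S_\rho^2(m,n)$, which is exactly the claim.

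The heart of the argument is Lemma~\ref{lemma:sigma}; once that term-by-term monotonicity of $\sigma_S^2$ under superset enlargement is in hand, the rest is algebraic bookkeeping. The one subtle point is keeping the boundary terms straight — namely $T_0^{(m)}=0$ (so the index shift matches the definition of $S_\rho^2(m,n)$) and $T_{m+1}^{(m)}=0$ (so the first sum cleanly factors as $q\cdot S_\rho^2(m,n)$) — so I would state these conventions explicitly up front.
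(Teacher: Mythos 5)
Your proof is correct and follows essentially the same route as the paper's: both decompose the size-$r$ subsets of $\{1,\dots,m+1\}$ according to whether they contain the element $m+1$, apply the monotonicity of $\sigma_S^2$ under superset enlargement (Lemma~\ref{lemma:sigma}) to the subsets that do, and recombine the two pieces as the convex combination $(1-\tfrac{1}{n})S_\rho^2(m,n)+\tfrac{1}{n}S_\rho^2(m,n)$. The only cosmetic difference is that the paper carries the $r=1$ boundary term as an explicit nonnegative remainder $C_m$ rather than absorbing it via your convention $T_0^{(m)}=0$.
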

\begin{proof}
We should be careful now since $\mathcal{S}_r$ is actually related to $m$. Specifically, $\mathcal{S}_r$ is all the $r$-subsets of $\{1,...,m\}$.\footnote{More generally, the indexes could be represented as $\mathcal{J}_m=\{j_1,...,j_m\}$ and $\mathcal{J}_{m+1}=\{j_1,...,j_m,j_{m+1}\}$ such that $\mathcal{J}_m\subset\mathcal{J}_{m+1}$. We used $\mathcal{J}_m=\{1,...,m\}$ and $\mathcal{J}_{m+1}=\{1,...,m,m+1\}$ in our proof without loss of generality.} To make this more explicit, we further use $\mathcal{S}_r^{(m)}$ to indicate this relationship. Moreover, to simplify notation, we define
$$A_r^{(m)}=\sum_{S\in\mathcal{S}_r^{(m)}}\sigma_S^2.$$
Furthermore, if $r=m$, then $\mathcal{S}_m^{(m)}$ contains only one single element $\{1,...,m\}$. We thus simply use $\sigma_m^2$ to represent $A_m^{(m)}$, i.e.,
$$\sigma_m^2=\sum_{S\in\mathcal{S}_m^{(m)}}\sigma_S^2.$$
Now consider $S_{m+1}=S_{\rho}^2(m+1,n)$. We have
\begin{eqnarray*}
S_{m+1}&=&\sum_{r=1}^{m+1}\big(1-\frac{1}{n}\big)^{m+1-r}\big(\frac{1}{n}\big)^r A_r^{(m+1)}\\
&=&\sum_{r=1}^{m}\big(1-\frac{1}{n}\big)^{m+1-r}\big(\frac{1}{n}\big)^r A_r^{(m+1)}+\big(\frac{1}{n}\big)^{m+1}\sigma_{m+1}^2.
\end{eqnarray*}
Define $\Delta_r^{(m+1)}=\sum_{S\in\mathcal{S}_r^{(m+1)}\setminus\mathcal{S}_r^{(m)}}\sigma_S^2$. Then
$$\Delta_r^{(m+1)}=A_r^{(m+1)}-A_r^{(m)}.$$
We therefore have
$S_{m+1}=\big(1-\frac{1}{n}\big)S_m + B_m$, where
$$B_m=\sum_{r=1}^{m}\big(1-\frac{1}{n}\big)^{m+1-r}\big(\frac{1}{n}\big)^r\Delta_r^{(m+1)}+\big(\frac{1}{n}\big)^{m+1}\sigma_{m+1}^2.$$
Let us further define $\mathcal{S}_r^{(m)}=\emptyset$ if $r > m$. Then $\Delta_{m+1}^{(m+1)}=\sigma_{m+1}^2$, and therefore
$$B_m=\sum_{r=1}^{m+1}\big(1-\frac{1}{n}\big)^{m+1-r}\big(\frac{1}{n}\big)^r\Delta_r^{(m+1)}.$$

Next, consider some $S\in\mathcal{S}_r^{(m+1)}\setminus\mathcal{S}_r^{(m)}$ where $r\geq 2$. Note that $S$ must contain $m+1$ since otherwise $S\in\mathcal{S}_r^{(m)}$. What's more, if we remove $m+1$ from $S$, then $S$ must be now in $\mathcal{S}_{r-1}^{(m)}$, that is, $S\setminus\{m+1\}\in\mathcal{S}_{r-1}^{(m)}$. On the other hand, for any $S'\in \mathcal{S}_{r-1}^{(m)}$, we can obtain an element in $\mathcal{S}_r^{(m+1)}\setminus\mathcal{S}_r^{(m)}$ by simply adding $m+1$, that is, $S'\cup\{m+1\}\in\mathcal{S}_r^{(m+1)}\setminus\mathcal{S}_r^{(m)}$. We therefore have established a 1-1 mapping $\varphi$ between $\mathcal{S}_r^{(m+1)}\setminus\mathcal{S}_r^{(m)}$ and $\mathcal{S}_{r-1}^{(m)}$.

Furthermore, note that for any $S\in\mathcal{S}_r^{(m+1)}\setminus\mathcal{S}_r^{(m)}$, we have $\varphi(S)\subset S$. Hence by Lemma~\ref{lemma:sigma}, $\sigma_{\varphi(S)}^2\leq\sigma_S^2$. Therefore, we have
$$\Delta_r^{(m+1)}=\sum_{S\in\mathcal{S}_r^{(m+1)}\setminus\mathcal{S}_r^{(m)}}\sigma_S^2\geq\sum_{\varphi(S)\in\mathcal{S}_{r-1}^{(m)}}\sigma_{\varphi(S)}^2=A_{r-1}^{(m)}.$$
As a result, we have
\begin{eqnarray*}
B_m&\geq&C_m+\sum_{r=2}^{m+1}\big(1-\frac{1}{n}\big)^{m+1-r}\big(\frac{1}{n}\big)^r A_{r-1}^{(m)}\\
&=&C_m+\sum_{r'=1}^{m}\big(1-\frac{1}{n}\big)^{m+1-(r'+1)}\big(\frac{1}{n}\big)^{r'+1} A_{r'}^{(m)}\\
&=&C_m+\frac{1}{n}\sum_{r'=1}^{m}\big(1-\frac{1}{n}\big)^{m-r'}\big(\frac{1}{n}\big)^{r'} A_{r'}^{(m)}\\
&=&C_m+\frac{1}{n}S_m,
\end{eqnarray*}
where
$$C_m=\big(1-\frac{1}{n}\big)^{m}\frac{1}{n}\Delta_1^{(m+1)}=\frac{1}{n}\big(1-\frac{1}{n}\big)^{m}\sigma_{\{m+1\}}^2\geq 0.$$
Hence, $B_m\geq\frac{1}{n}S_m$. Since $S_{m+1}=\big(1-\frac{1}{n}\big)S_m + B_m$, we conclude that $S_{m+1}\geq S_m$. This completes the proof of the lemma.
\end{proof}

\noindent It is now easy to prove the second inequality in Theorem~\ref{theorem:tighter-bound}:

\begin{proof}
Based on Lemma~\ref{lemma:monotone}, by induction, we can easily prove that $S_{\rho}^2(m,n)\leq \Var[\rho_n]$ and $S_{\rho'}^2(m,n)\leq \Var[\rho'_n]$, since $m\leq\min\{K, K'\}$. The inequality then follows.
\end{proof}

For our special case in this paper where $\Cov(\rho_n,\rho'_n)\neq 0$, we will always have $m=\min\{K,K'\}$. Without loss of generality, let $m=K$. Then $S_{\rho}^2(m,n)= \Var[\rho_n]$, and we only need to approximate $S_{\rho'}^2(m,n)$ with $S_{\rho'}^2(K,n)$, which by Lemma~\ref{lemma:monotone} is guaranteed to be superior to $\Var[\rho'_n]$. Intuitively, the bigger $K'-K$ is, the bigger the gap is between $S_{\rho'}^2(K,n)$ and $\Var[\rho'_n]$. In fact, in the proof of Lemma~\ref{lemma:monotone}, we have actually showed that $S_{m+1}\geq S_m+C_m$. So we can roughly estimate that
$$\Var[\rho'_n]-S_{\rho'}^2(K,n)\geq\frac{1}{n}\big(1-\frac{1}{n}\big)^K\sum_{r=K+1}^{K'}\sigma_{\{r\}}^2.$$

\subsection{More Bounds for Covariances}\label{sec:more-cov-bounds}

We can actually have another upper bound for $\Cov(\rho_n,\rho'_n)$:
\begin{theorem}\label{theorem:dn-bound}
We have
$$|\Cov(\rho_n,\rho'_n)|\leq f(n,m)g(\rho)g(\rho'),$$
where $f(n,m)=1-(1-\frac{1}{n})^m$ and $g(\rho)=\sqrt{\rho(1-\rho)}$.
\end{theorem}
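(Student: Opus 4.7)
The plan is to combine the explicit form of $\Cov(\rho_n,\rho'_n)$ given by Lemma~\ref{lemma:dn}, the Cauchy--Schwarz bound on $\Cov_S(\rho,\rho')$ from Lemma~\ref{lemma:ds}, a uniform bound on $\sigma_S^2$ in terms of $\rho(1-\rho)$, and finally a binomial identity. Starting from Lemma~\ref{lemma:dn},
\begin{equation*}
|\Cov(\rho_n,\rho'_n)|\leq\sum_{r=1}^m\frac{(n-1)^{m-r}}{n^m}\sum_{S\in\mathcal{S}_r}|\Cov_S(\rho,\rho')|,
\end{equation*}
and Lemma~\ref{lemma:ds} gives $|\Cov_S(\rho,\rho')|\leq\sqrt{\sigma_S^2(\sigma'_S)^2}$, so it suffices to show that $\sigma_S^2\leq \rho(1-\rho)$ and symmetrically $(\sigma'_S)^2\leq \rho'(1-\rho')$.

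The key step is the uniform bound on $\sigma_S^2$. Recall that $\rho_S(\mathbf{l})$ is an average of selectivities of the form $\rho_{\mathbf{B}}$, each of which lies in $[0,1]$; hence $\rho_S(\mathbf{l})\in[0,1]$ as well. Moreover, averaging $\rho_S(\mathbf{l})$ uniformly over $\mathbf{l}\in\Lambda(S)$ yields exactly $\rho=\E[\rho_n]$, so $\sigma_S^2$ is the variance of a $[0,1]$-valued quantity with mean $\rho$. By the elementary inequality $\Var(Y)\leq (b-\E[Y])(\E[Y]-a)$ for any $Y\in[a,b]$ (which follows from $(b-Y)(Y-a)\geq 0$), applied with $a=0$ and $b=1$, we obtain $\sigma_S^2\leq \rho(1-\rho)=g(\rho)^2$, and similarly $(\sigma'_S)^2\leq g(\rho')^2$. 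Therefore $|\Cov_S(\rho,\rho')|\leq g(\rho)g(\rho')$ uniformly in $S$.

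Substituting this uniform bound back and pulling $g(\rho)g(\rho')$ out of both summations gives
\begin{equation*}
|\Cov(\rho_n,\rho'_n)|\leq g(\rho)g(\rho')\cdot\frac{1}{n^m}\sum_{r=1}^m\binom{m}{r}(n-1)^{m-r},
\end{equation*}
since $|\mathcal{S}_r|=\binom{m}{r}$. By the binomial theorem, $\sum_{r=0}^m\binom{m}{r}(n-1)^{m-r}=n^m$, so the sum starting at $r=1$ equals $n^m-(n-1)^m$. Dividing by $n^m$ yields $1-(1-\tfrac{1}{n})^m=f(n,m)$, giving the claimed inequality.

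The only slightly non-routine ingredient is the $\sigma_S^2\leq \rho(1-\rho)$ bound; this is essentially the Bhatia--Davis inequality for $[0,1]$-valued random variables and is the place where the specific structure of $\rho_S(\mathbf{l})$ (bounded selectivities with the right mean) is used. Everything else is a direct application of previously established lemmas and a standard binomial identity, so no additional technical obstacles are expected.
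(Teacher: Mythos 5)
Your proof is correct, and the overall skeleton (Lemma~\ref{lemma:dn} for the decomposition, Lemma~\ref{lemma:ds} for the Cauchy--Schwarz step, then a uniform bound on $\sigma_S^2$ followed by the binomial identity) matches the paper's. The one place where you genuinely diverge is the key bound $\sigma_S^2\leq\rho(1-\rho)$. The paper gets this in two stages: it first invokes the monotonicity result (Lemma~\ref{lemma:sigma}) to conclude $\sigma_S^2\leq\sigma_{\mathcal{K}}^2$, and then computes $\sigma_{\mathcal{K}}^2=\rho(1-\rho)$ \emph{exactly}, using the fact that under the tuple-level partitioning scheme each $\rho_{\mathcal{K}}(\mathbf{l})$ is $0$ or $1$. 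You instead bound every $\sigma_S^2$ directly by the Bhatia--Davis-type inequality, using only that $\rho_S(\mathbf{l})\in[0,1]$ (as an average of selectivities $\rho_{\mathbf{B}}\in[0,1]$) and that the uniform average of $\rho_S(\mathbf{l})$ over $\Lambda(S)$ is $\rho$ --- equivalently, $\frac{1}{|\Lambda(S)|}\sum_{\mathbf{l}}\rho_S(\mathbf{l})^2\leq\frac{1}{|\Lambda(S)|}\sum_{\mathbf{l}}\rho_S(\mathbf{l})=\rho$, so the centered second moment is at most $\rho-\rho^2$. This buys two things: you do not need Lemma~\ref{lemma:sigma} at all for this theorem, and you do not need the tuple-level ($\{0,1\}$-valued) partitioning assumption, so your argument covers arbitrary block sizes. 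What you give up is the exact identification $\sigma_{\mathcal{K}}^2=\rho(1-\rho)$, which the paper reuses elsewhere (e.g., in the proof of Theorem~\ref{theorem:varbound} and the comparison of the bounds $B_1$, $B_2$, $B_3$); as a bound for this theorem alone, however, your route is shorter and slightly more general. The concluding binomial computation is identical in both proofs.
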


\begin{proof}
As in the proof of Lemma~\ref{lemma:dn}, let $\mathcal{K}$ and $\mathcal{K}'$ be the indexes of the relations in $\mathcal{R}$ and $\mathcal{R}'$ respectively. By Lemma~\ref{lemma:sigma}, we have $\sigma_S^2\leq\sigma_{\mathcal{K}}^2$ and $(\sigma'_S)^2\leq\sigma_{\mathcal{K}'}^2$. Moreover, consider
$$\sigma_{\mathcal{K}}^2=\frac{1}{|\Lambda(\mathcal{K})|}\sum_{\mathbf{l}\in\Lambda(\mathcal{K})}(\rho_{\mathcal{K}}(\mathbf{l})-\rho)^2.$$
Since we use the tuple-level partition scheme, we have $\rho_{\mathcal{K}}(\mathbf{l})=1$ or $\rho_{\mathcal{K}}(\mathbf{l})=0$. Therefore,
\begin{eqnarray*}
\sigma_{\mathcal{K}}^2&=&(1-\rho)^2\cdot\frac{1}{|\Lambda(\mathcal{K})|}\sum_{\mathbf{l}\in\Lambda(\mathcal{K})}I(\rho_{\mathcal{K}}(\mathbf{l})=1)\\
&\quad&+\rho^2\cdot\frac{1}{|\Lambda(\mathcal{K})|}\sum_{\mathbf{l}\in\Lambda(\mathcal{K})}I(\rho_{\mathcal{K}}(\mathbf{l})=0)\\
&=&(1-\rho)^2\cdot\rho + \rho^2\cdot(1-\rho)\\
&=&\rho(1-\rho).
\end{eqnarray*}
Similarly, we have $\sigma_{\mathcal{K}'}^2=\rho'(1-\rho')$. Hence,
$$\Cov_S(\rho,\rho')\leq\sqrt{\sigma_S^2(\sigma'_S)^2}\leq\sqrt{\rho(1-\rho)\cdot\rho'(1-\rho')},$$
and therefore, by letting $g(\rho)=\sqrt{\rho(1-\rho)}$,
\begin{eqnarray*}
|d_n|&=&|\sum_{r=1}^m\frac{(n-1)^{m-r}}{n^m}\times\sum_{S\in\mathcal{S}_r}\Cov_S(\rho,\rho')|\\
&\leq&\sum_{r=1}^m\frac{(n-1)^{m-r}}{n^m}\times\sum_{S\in\mathcal{S}_r}g(\rho)g(\rho')\\
&=&g(\rho)g(\rho')\sum_{r=1}^m{m \choose r}\big(\frac{1}{n}\big)^r\big(1-\frac{1}{n}\big)^{m-r}\\
&=&g(\rho)g(\rho')[1-(1-\frac{1}{n})^m].
\end{eqnarray*}
This completes the proof of the theorem.
\end{proof}

When $n$ is large, $(1-\frac{1}{n})^m\approx 1-\frac{m}{n}$. As a result, $1-(1-\frac{1}{n})^m\approx\frac{m}{n}$. Therefore, when $n\to\infty$, $\Cov(\rho_n,\rho'_n)\to 0$. This is intuitively true considering the strong consistency of $\rho_n$. If we keep taking samples, finally the estimated selectivity should converge to the actual selectivity (a constant). On the other hand, a larger $m$ implies a larger bound since the computations of $\rho_n$ and $\rho'_n$ share more samples. Another interesting observation is that the bound also depends on the actual selectivities $\rho$ and $\rho'$. Note that $g(\rho)$ is minimized at $\rho=0$ or $\rho=1$ (with $g_{\min}=0$), and is maximized at $\rho=\frac{1}{2}$ (with $g_{\max}=\frac{1}{2}$). To shed some light on this, observe that whenever $\rho$ or $\rho'$ is 0 or 1, $\rho_n$ or $\rho'_n$ is always 0 or 1 regardless of the number of samples. Hence $\Cov(\rho_n,\rho'_n)=0$ in such cases.

An natural question is how good this bound is compared with the two bounds in Section~\ref{sec:uncertainty:covar:bounds}. Let us name these three bounds as
\begin{enumerate}[($B_1$)]
\item $\sqrt{S_{\rho}^2(m,n)S_{\rho'}^2(m,n)}$, the first bound in Theorem~\ref{theorem:tighter-bound};
\item $\sqrt{\Var[\rho_n]\Var[\rho'_n]}$, the second bound in Theorem~\ref{theorem:tighter-bound};
\item $f(n,m)g(\rho)g(\rho')$, the bound in Theorem~\ref{theorem:dn-bound}.
\end{enumerate}
By Theorem~\ref{theorem:tighter-bound}, we already know that $B_1\leq B_2$. Next, according to the proof of Theorem~\ref{theorem:dn-bound}, $\sigma_S^2\leq \rho(1-\rho)$ and $(\sigma')_S^2\leq \rho'(1-\rho')$. We then immediately have
$$\sqrt{S_{\rho}^2(m,n)S_{\rho'}^2(m,n)}\leq f(n,m)g(\rho)g(\rho'),$$
by the definition of $S_{\rho}^2(m,n)$. That is, $B_1\leq B_3$. Moreover, by Theorem~\ref{theorem:varbound}, we have
$$|\Cov(\rho_n,\rho'_n)|\leq\sqrt{\Var[\rho_n]\Var[\rho'_n]}\leq f(n)g(\rho)g(\rho'),$$
where $$f(n)=\sqrt{\big(1-(1-\frac{1}{n})^K\big)\big(1-(1-\frac{1}{n})^{K'}\big)}.$$
When $n$ is large, $1-(1-\frac{1}{n})^K\approx\frac{K}{n}$, and $1-(1-\frac{1}{n})^{K'}\approx\frac{K'}{n}$. Therefore, the right hand is close to $\frac{\sqrt{KK'}}{n}g(\rho)g(\rho')$.
Since $m\leq \min\{K, K'\} <\sqrt{KK'}$, we know that $B_3$ is better than the upper bound of $B_2$. However, in general $B2$ and $B_3$ are incomparable.

One more issue of $B_3$ is that it includes the \emph{true} selectivities $\rho$ and $\rho'$ that are not known without running the query. As a result, $B_3$ is not directly computable. Nonetheless, when $n$ is large, we can simply use the observed $\rho_n$ and $\rho'_n$ as approximations due to the strong consistency of $\rho_n$.

Finally, the techniques we used in the proof of Theorem~\ref{theorem:dn-bound} can be further generalized to establish similar bounds for other covariances such as $\Cov(\rho_n^2, (\rho')_n^2)$ and $\Cov(\rho_n^2, \rho'_n$).

\begin{theorem}\label{theorem:dn22-bound}
$$|\Cov(\rho_n^2, (\rho')_n^2)|\leq f(n, m)h(\rho)h(\rho'),$$
where
\begin{eqnarray*}
f(n, m)&=&[1 - (1 - \frac{1}{n})^{K + K' - m}(1 - \frac{2}{n})^m(1 - \frac{3}{n})^m]\\
&\cdot &\sqrt{1 - (1 - \frac{1}{n})^K} \sqrt{1 - (1- \frac{1}{n})^{K'}},
\end{eqnarray*}
and
$$h(\rho) = \sqrt{\rho(1 - \rho)(\rho - \rho^2 + 1)}.$$
\end{theorem}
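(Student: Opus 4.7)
The strategy is to extend the argument used in Theorem~\ref{theorem:dn-bound} from pairs of sample blocks to quadruples, since each $\rho_n^2$ unfolds into a double sum. First I would expand
$$\rho_n^2 = \frac{1}{n^{2K}}\sum_{\mathbf{B}_1,\mathbf{B}_2}\rho_{\mathbf{B}_1}\rho_{\mathbf{B}_2},\qquad (\rho'_n)^2 = \frac{1}{n^{2K'}}\sum_{\mathbf{B}'_1,\mathbf{B}'_2}\rho'_{\mathbf{B}'_1}\rho'_{\mathbf{B}'_2},$$
so that $\Cov(\rho_n^2,(\rho'_n)^2)$ becomes a quadruple sum of terms $\Cov(\rho_{\mathbf{B}_1}\rho_{\mathbf{B}_2},\,\rho'_{\mathbf{B}'_1}\rho'_{\mathbf{B}'_2})$. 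By the same independence argument used earlier, such a summand vanishes unless the unordered multiset $\{\mathbf{B}_1,\mathbf{B}_2\}$ shares at least one coordinate block with $\{\mathbf{B}'_1,\mathbf{B}'_2\}$ over the $m$ common relations.

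Next I would count the surviving quadruples combinatorially. For each fixed $(\mathbf{B}_1,\mathbf{B}_2)$, the fraction of $(\mathbf{B}'_1,\mathbf{B}'_2)$ that is entirely disjoint from it can be factored across relations: the $K+K'-2m$ unshared positions contribute nothing, while for each of the $m$ shared relations the two index slots of $\mathbf{B}'_1$ and $\mathbf{B}'_2$ must avoid the corresponding slots of $\mathbf{B}_1$ and $\mathbf{B}_2$. Multiplying out the per-relation avoidance probabilities produces exactly $(1-\tfrac{1}{n})^{K+K'-m}(1-\tfrac{2}{n})^m(1-\tfrac{3}{n})^m$ — the first factor handles the non-shared coordinate structure, while the $(1-2/n)^m$ and $(1-3/n)^m$ factors arise from requiring $\mathbf{B}'_1$'s slot to miss two prior indices and $\mathbf{B}'_2$'s slot to miss three. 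Thus the total mass of nonzero summands scales like $1$ minus this product, producing the leading bracketed factor of $f(n,m)$.

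The individual covariance terms are then handled by Cauchy-Schwarz, $|\Cov(\rho_{\mathbf{B}_1}\rho_{\mathbf{B}_2},\rho'_{\mathbf{B}'_1}\rho'_{\mathbf{B}'_2})|\leq\sqrt{\Var[\rho_{\mathbf{B}_1}\rho_{\mathbf{B}_2}]\,\Var[\rho'_{\mathbf{B}'_1}\rho'_{\mathbf{B}'_2}]}$, followed by an outer Cauchy-Schwarz (as in the proof of Theorem~\ref{theorem:tighter-bound}) to split the $K$-dependent and $K'$-dependent parts, yielding the $\sqrt{1-(1-1/n)^K}$ and $\sqrt{1-(1-1/n)^{K'}}$ factors exactly as in Theorem~\ref{theorem:varbound}. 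To derive $h(\rho)$, I would exploit that $\rho_{\mathbf{B}}\in\{0,1\}$ under the tuple-level partitioning, so $\rho_{\mathbf{B}_1}\rho_{\mathbf{B}_2}$ is Bernoulli; case-splitting on whether $\mathbf{B}_1,\mathbf{B}_2$ share blocks gives a worst-case second moment whose bound works out to $\rho(1-\rho)(\rho-\rho^2+1)$, i.e., $v(v+1)$ in terms of $v=\rho(1-\rho)$, where the ``$+1$'' term dominates the fully-disjoint case and the ``$+v$'' corrects for partial overlap.

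The hard part will be the bookkeeping in the two combinatorial steps. In the quadruple enumeration, disentangling the per-relation avoidance probabilities without double-counting the ``at least one shared block'' event is what pins down the exact three-factor form of $f(n,m)$; an off-by-one in the index counting or in whether the constraints are between $\mathbf{B}'_1,\mathbf{B}'_2$ and $\mathbf{B}_1,\mathbf{B}_2$ or internally among $\mathbf{B}'_1,\mathbf{B}'_2$ would corrupt the exponents $(1-2/n)^m$ versus $(1-3/n)^m$. Similarly, obtaining the precise constant in $h(\rho)$, rather than a loose bound such as $\sqrt{v}$, requires covering the overlap cases inside the pair $(\mathbf{B}_1,\mathbf{B}_2)$ and picking the envelope; this is where the nontrivial factor $(\rho-\rho^2+1)$ — i.e., $1+v$ — appears and is the most delicate ingredient of the proof.
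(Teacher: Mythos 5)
Your proposal follows essentially the same route as the paper's proof: expand $\Cov(\rho_n^2,(\rho'_n)^2)$ into a quadruple sum of block-level covariances, count the fully disjoint quadruples to obtain the bracketed factor of $f(n,m)$, apply Cauchy--Schwarz to each surviving covariance, and bound $\Var[\rho_{\mathbf{B}_1}\rho_{\mathbf{B}_2}]$ by $h(\rho)^2\big(1-(1-\tfrac{1}{n})^K\big)$ using the Bernoulli nature of $\rho_{\mathbf{B}}$ together with a shared-block case analysis (your $v(v+1)$ with $v=\rho(1-\rho)$ is exactly the paper's $(\rho^2-\rho)^2+(\rho-\rho^2)$, obtained from $\sigma_{\mathcal{K}}^2=(A-\rho)^2+(\rho-\rho^2)$ with $\rho^2\leq A\leq\rho$). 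The only cosmetic discrepancy is that the $\sqrt{1-(1-1/n)^K}$ and $\sqrt{1-(1-1/n)^{K'}}$ factors come in the paper from that variance bound itself --- via the partition-by-shared-blocks decomposition and a monotonicity lemma analogous to Lemma~\ref{lemma:sigma} --- rather than from a separate ``outer'' Cauchy--Schwarz over the sum, but this does not alter the argument.
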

When $n$ is large, we can approximate $f(n, m)$ as:
\begin{eqnarray*}
f(n,m)&\approx&\big(1-(1-\frac{K+K'-m}{n})(1-\frac{2m}{n})(1-\frac{3m}{n})\big)\\
&\quad&\cdot\big(1-(1-\frac{K}{n})\big)^{\frac{1}{2}}\big(1-(1-\frac{K'}{n})\big)^{\frac{1}{2}}\\
&\approx&\big(1-(1-\frac{K+K'-m}{n}-\frac{2m}{n}-\frac{3m}{n})\big)\frac{\sqrt{KK'}}{n}\\
&=&\frac{(K+K'+4m)\sqrt{KK'}}{n^2}.
\end{eqnarray*}

\begin{proof}
For notational convenience, define $E_{\rho\rho'}=\E[\rho_n^2(\rho'_n)^2]$, and $E_{\rho}E_{\rho'}=\E[\rho_n^2]\E[(\rho'_n)^2]$. We have
\begin{eqnarray*}
E_{\rho\rho'}&=&\E[\big(\frac{1}{n^K}\sum_{k=1}^{n^K}\rho_{\mathbf{B}}\big)^2\big(\frac{1}{n^{K'}}\sum_{k'=1}^{ n^{K'}}\rho'_{\mathbf{B}}\big)^2]\\
&=&\frac{1}{n^{2(K+K')}}\sum_{k_1,k_2=1}^{n^K}\sum_{k'_1,k'_2=1}^{n^{K'}}\E[\rho_{\mathbf{B}_1}
\rho_{\mathbf{B}_2}\rho'_{\mathbf{B}_1}\rho'_{\mathbf{B}_2}],
\end{eqnarray*}
and
\begin{eqnarray*}
E_{\rho}E_{\rho'}&=&\E[\big(\frac{1}{n^K}\sum_{k=1}^{n^K}\rho_{\mathbf{B}}\big)^2]
\E[\big(\frac{1}{n^{K'}}\sum_{k'=1}^{n^{K'}}\rho'_{\mathbf{B}}\big)^2]\\
&=&\frac{1}{n^{2(K+K')}}\sum_{k_1,k_2=1}^{n^K}\sum_{k'_1,k'_2=1}^{n^{K'}}\E[\rho_{\mathbf{B}_1}
\rho_{\mathbf{B}_2}]\E[\rho'_{\mathbf{B}_1}\rho'_{\mathbf{B}_2}].
\end{eqnarray*}
Therefore, by letting $d_n^2=\Cov(\rho_n^2,(\rho')_n^2)$, we have
\begin{eqnarray*}
d_n^2&=&\E[\rho_n^2(\rho'_n)^2]-\E[\rho_n^2]\E[(\rho'_n)^2]\\
&=&\frac{1}{n^{2(K+K')}}\sum_{k_1,k_2=1}^{n^K}\sum_{k'_1,k'_2=1}^{n^{K'}}\Cov(\rho_{\mathbf{B}_1}
\rho_{\mathbf{B}_2},\rho'_{\mathbf{B}_1}\rho'_{\mathbf{B}_2}).
\end{eqnarray*}
Note that, among the $n^{2(K+K')}$ summands in $d_n^2$, $n^K(n-1)^K(n-2)^mn^{K'-m}(n-3)^m(n-1)^{K'-m}=[n(n-1)]^{K+K'-m}[(n-2)(n-3)]^m$ of them involve samples that do not share any blocks and are therefore independent. As a result, for these terms,
$$\Cov(\rho_{\mathbf{B}_1}\rho_{\mathbf{B}_2},\rho'_{\mathbf{B}_1}\rho'_{\mathbf{B}_2})=0.$$
Since
$$\Cov^2(\rho_{\mathbf{B}_1}\rho_{\mathbf{B}_2},\rho'_{\mathbf{B}_1}\rho'_{\mathbf{B}_2})\leq
\Var[\rho_{\mathbf{B}_1}\rho_{\mathbf{B}_2}]\Var[\rho'_{\mathbf{B}_1}\rho'_{\mathbf{B}_2}],$$
we have
\begin{eqnarray*}
|d_n^2|&\leq& \big(1-\frac{[n(n-1)]^{K+K'-m}[(n-2)(n-3)]^m}{n^{2(K+K')}}\big)\\
&\quad&\Var[\rho_{\mathbf{B}_1}\rho_{\mathbf{B}_2}]^{\frac{1}{2}}\Var[\rho'_{\mathbf{B}_1}\rho'_{\mathbf{B}_2}]^{\frac{1}{2}}\\
&=&\big(1-(1-\frac{1}{n})^{K+K'-m}(1-\frac{2}{n})^m(1-\frac{3}{n})^m\big)\\
&\quad&\Var[\rho_{\mathbf{B}_1}\rho_{\mathbf{B}_2}]^{\frac{1}{2}}\Var[\rho'_{\mathbf{B}_1}\rho'_{\mathbf{B}_2}]^{\frac{1}{2}}.
\end{eqnarray*}

Next, let us consider
$$V=\Var[\rho_{\mathbf{B}_1}\rho_{\mathbf{B}_2}]=\E[(\rho_{\mathbf{B}_1}\rho_{\mathbf{B}_2}
-\E[\rho_{\mathbf{B}_1}\rho_{\mathbf{B}_2}])^2].$$
As before, we partition the pairs $(\mathbf{B}_1, \mathbf{B}_2)$ based on the blocks they share, which gives us
$$V=\sum_{r=1}^{K}\frac{(n-1)^{K-r}}{n^K}\times\sum_{S\in\mathcal{S}_r}
\big(\frac{1}{|\Lambda(S)|}\sum_{\mathbf{l}\in\Lambda(S)}\E_{S(\mathbf{l})}(\mathbf{B}_1,\mathbf{B}_2)\big),$$
where
$$\E_{S(\mathbf{l})}(\mathbf{B}_1,\mathbf{B}_2)=\E_{\rho_{\mathbf{B}_1}\sim_{S(\mathbf{l})}\rho_{\mathbf{B}_2}}
[(\rho_{\mathbf{B}_1}\rho_{\mathbf{B}_2}-\E[\rho_{\mathbf{B}_1}\rho_{\mathbf{B}_2}])^2].$$
Now define
$$\sigma_S^2=\frac{1}{|\Lambda(S)|}\sum_{\mathbf{l}\in\Lambda(S)}\E_{S(\mathbf{l})}(\mathbf{B}_1,\mathbf{B}_2).$$
Similarly as before, we are able to show that, if $S\in\mathcal{S}_r$, $S'\in\mathcal{S}_{r+1}$, and $S\subset S'$, then $\sigma_S^2\leq\sigma_{S'}^2$. To see this, without loss of generality, let $S=\{1,...,r\}$ and $S'=\{1,...,r+1\}$. For a given $\mathbf{l}=(j_1,...,j_r)\in\Lambda(S)$, let $\mathbf{l}'_j=(j_1,...,j_r,j)$, for $1\leq j\leq m_{r+1}$. Since $\Lambda(S')=\Lambda(S)\times\{1,...,m_{r+1}\}$, $\Lambda(S^c)=\Lambda((S')^c)\times\{1,...,m_{r+1}\}$ and thus $|\Lambda(S^c)|=m_{r+1}|\Lambda((S')^c)|$.

We have
$$E=\frac{1}{|\Lambda(S^c)|^2}\sum_{\Lambda(S^c)}\sum_{\Lambda(S^c)}
(\rho_{\mathbf{B}_1}\rho_{\mathbf{B}_2}-\E[\rho_{\mathbf{B}_1}\rho_{\mathbf{B}_2}])^2.$$
Let $\E[\rho_{\mathbf{B}_1}\rho_{\mathbf{B}_2}]=A$. We then have
$$E=\frac{1}{|\Lambda(S^c)|^2}\sum_{\Lambda(S^c)}\sum_{\Lambda(S^c)}(\rho_{\mathbf{B}_1}^2\rho_{\mathbf{B}_2}^2
-2A\rho_{\mathbf{B}_1}\rho_{\mathbf{B}_2}+A^2).$$
Consider the terms $\rho_{\mathbf{B}_1}^2\rho_{\mathbf{B}_2}^2$ and $\rho_{\mathbf{B}_1}\rho_{\mathbf{B}_2}$. We have
\begin{eqnarray*}
E_1&=&\frac{1}{|\Lambda(S^c)|^2}\sum_{\Lambda(S^c)}\sum_{\Lambda(S^c)}\rho_{\mathbf{B}_1}^2\rho_{\mathbf{B}_2}^2\\
&=&\big(\frac{1}{|\Lambda(S^c)|}\sum_{\Lambda(S^c)}\rho_{\mathbf{B}_1}^2\big)
\cdot\big(\frac{1}{|\Lambda(S^c)|}\sum_{\Lambda(S^c)}\rho_{\mathbf{B}_2}^2\big)\\
&=&\big(\frac{1}{|\Lambda(S^c)|}\sum_{\Lambda(S^c)}\rho_{\mathbf{B}}^2\big)^2\\
&=&\big(\frac{1}{|\Lambda((S')^c)|m_{r+1}}\sum_{\Lambda((S')^c)}\sum_{j=1}^{m_{r+1}}\rho_{\mathbf{B}}^2\big)^2,
\end{eqnarray*}
and
\begin{eqnarray*}
E_2&=&\frac{1}{|\Lambda(S^c)|^2}\sum_{\Lambda(S^c)}\sum_{\Lambda(S^c)}\rho_{\mathbf{B}_1}\rho_{\mathbf{B}_2}\\
&=&\big(\frac{1}{|\Lambda(S^c)|}\sum_{\Lambda(S^c)}\rho_{\mathbf{B}_1}\big)
\cdot\big(\frac{1}{|\Lambda(S^c)|}\sum_{\Lambda(S^c)}\rho_{\mathbf{B}_2}\big)\\
&=&\big(\frac{1}{|\Lambda(S^c)|}\sum_{\Lambda(S^c)}\rho_{\mathbf{B}}\big)^2\\
&=&\big(\frac{1}{|\Lambda((S')^c)|m_{r+1}}\sum_{\Lambda((S')^c)}\sum_{j=1}^{m_{r+1}}\rho_{\mathbf{B}}\big)^2.
\end{eqnarray*}
Since $\rho_{\mathbf{B}}=0$ or $\rho_{\mathbf{B}}=1$, we have $\rho_{\mathbf{B}}^2=\rho_{\mathbf{B}}$ and thus $E_1=E_2$. Therefore,
$$E=(1-2A)E_2+A^2.$$
Furthermore, define
$$\overline{\rho}_{S'}=\frac{1}{|\Lambda((S')^c)|}\sum_{\Lambda((S')^c)}\rho_{\mathbf{B}},\quad \overline{\rho^2}_{S'}=\frac{1}{|\Lambda((S')^c)|}\sum_{\Lambda((S')^c)}\rho_{\mathbf{B}}^2,$$
$$\overline{\rho_1\rho_2}_{S'}=\frac{1}{|\Lambda((S')^c)|^2}\sum_{\Lambda((S')^c)}\sum_{\Lambda((S')^c)}\rho_{\mathbf{B}_1}\rho_{\mathbf{B}_2},$$
$$\overline{\rho_1^2\rho_2^2}_{S'}=\frac{1}{|\Lambda((S')^c)|^2}\sum_{\Lambda((S')^c)}
\sum_{\Lambda((S')^c)}\rho_{\mathbf{B}_1}^2\rho_{\mathbf{B}_2}^2,$$
$$\overline{(\rho_1\rho_2-A)^2}_{S'}=\frac{1}{|\Lambda((S')^c)|^2}\sum_{\Lambda((S')^c)}
\sum_{\Lambda((S')^c)}\big(\rho_{\mathbf{B}_1}\rho_{\mathbf{B}_2}-A\big)^2.$$
By the Cauchy-Schwarz inequality, we have
$$E_2=\frac{1}{m_{r+1}^2}\big(\sum_{j=1}^{m_{r+1}}\overline{\rho}_{S'}\big)^2\leq\frac{1}{m_{r+1}}\sum_{j=1}^{m_{r+1}}\overline{\rho}_{S'}^2.$$
Hence,
\begin{eqnarray*}
E&\leq&(1-2A)\frac{1}{m_{r+1}}\sum_{j=1}^{m_{r+1}}\overline{\rho}_{S'}^2+A^2\\
&=&\frac{1}{m_{r+1}}\sum_{j=1}^{m_{r+1}}\big((1-2A)\overline{\rho}_{S'}^2+A^2\big)\\
&=&\frac{1}{m_{r+1}}\sum_{j=1}^{m_{r+1}}[\overline{\rho}_{S'}^2-2A\overline{\rho}_{S'}^2+A^2]\\
&=&\frac{1}{m_{r+1}}\sum_{j=1}^{m_{r+1}}[(\overline{\rho^2}_{S'})^2-2A\overline{\rho}_{S'}^2+A^2]\\
&=&\frac{1}{m_{r+1}}\sum_{j=1}^{m_{r+1}}[\overline{\rho_1^2\rho_2^2}_{S'}-2A\overline{\rho_1\rho_2}_{S'}+A^2]\\
&=&\frac{1}{m_{r+1}}\sum_{j=1}^{m_{r+1}}\overline{(\rho_1\rho_2-A)^2}_{S'}.
\end{eqnarray*}
Therefore,
\begin{eqnarray*}
\sigma_S^2&=&\frac{1}{|\Lambda(S)|}\sum_{\mathbf{l}\in\Lambda(S)}\E_{S(\mathbf{l})}(\mathbf{B}_1,\mathbf{B}_2)\\
&\leq&\frac{1}{|\Lambda(S)|}\sum_{\mathbf{l}\in\Lambda(S)}\frac{1}{m_{r+1}}\sum_{j=1}^{m_{r+1}}\overline{(\rho_1\rho_2-A)^2}_{S'}\\
&=&\frac{1}{|\Lambda(S)|m_{r+1}}\sum_{\mathbf{l}\in\Lambda(S)}\sum_{j=1}^{m_{r+1}}\overline{(\rho_1\rho_2-A)^2}_{S'}\\
&=&\frac{1}{|\Lambda(S')|}\sum_{\mathbf{l}'\in\Lambda(S')}\overline{(\rho_1\rho_2-A)^2}_{S'}\\
&=&\sigma_{S'}^2.
\end{eqnarray*}
As a result, we have $\sigma_S^2\leq\sigma_{\mathcal{K}}^2$. Since
\begin{eqnarray*}
\sigma_{\mathcal{K}}^2&=&\frac{1}{|\Lambda(\mathcal{K})|}\sum_{\mathbf{l}\in\Lambda(\mathcal{K})}\E_{\mathcal{K}(\mathbf{l})}(\mathbf{B}_1,\mathbf{B}_2)\\
&=&\frac{1}{|\Lambda(\mathcal{K})|}\sum_{\mathbf{l}\in\Lambda(\mathcal{K})}(\rho_{\mathbf{B}_1}\rho_{\mathbf{B}_2}-A)^2\\
&=&\frac{1}{|\Lambda(\mathcal{K})|}\sum_{\mathbf{l}\in\Lambda(\mathcal{K})}(\rho_{\mathbf{B}}^2-A)^2\\
&=&\frac{1}{|\Lambda(\mathcal{K})|}\sum_{\mathbf{l}\in\Lambda(\mathcal{K})}(\rho_{\mathbf{B}}-A)^2\\
&=&(1-A)^2\cdot\frac{1}{|\Lambda(\mathcal{K})|}\sum_{\mathbf{l}\in\Lambda(\mathcal{K})}I(\rho_{\mathbf{B}}=1)\\
&\quad&+A^2\cdot\frac{1}{|\Lambda(\mathcal{K})|}\sum_{\mathbf{l}\in\Lambda(\mathcal{K})}I(\rho_{\mathbf{B}}=0)\\
&=&(1-A)^2\cdot\rho + A^2\cdot(1-\rho)\\
&=&A^2-2A\rho+\rho\\
&=&(A-\rho)^2+(\rho-\rho^2).
\end{eqnarray*}
Now consider $A=\E[\rho_{\mathbf{B}_1}\rho_{\mathbf{B}_2}]$. We have
\begin{eqnarray*}
A&=&\Pr(\rho_{\mathbf{B}_1}=1,\rho_{\mathbf{B}_2}=1)\\
&\geq&\Pr(\rho_{\mathbf{B}_1}=1,\rho_{\mathbf{B}_2}=1, \rho_{\mathbf{B}_1}\sim_{\emptyset}\rho_{\mathbf{B}_2})\\
&=&\Pr(\rho_{\mathbf{B}_1}=1)\cdot \Pr(\rho_{\mathbf{B}_2}=1)\\
&=&\rho^2.
\end{eqnarray*}
On the other hand,
$$A=\Pr(\rho_{\mathbf{B}_1}=1,\rho_{\mathbf{B}_2}=1)\leq \Pr(\rho_{\mathbf{B}_1}=1)\leq\rho.$$
Thus, $\rho^2\leq A\leq\rho$. So we have
$$\rho-\rho^2\leq\sigma_{\mathcal{K}}^2\leq(\rho^2-\rho)^2+(\rho-\rho^2).$$
Therefore,
\begin{eqnarray*}
V&=&\Var[\rho_{\mathbf{B}_1}\rho_{\mathbf{B}_2}]\\
&=&\sum_{r=1}^{K}\frac{(n-1)^{K-r}}{n^K}\times\sum_{S\in\mathcal{S}_r}\sigma_S^2\\
&\leq&\sum_{r=1}^{K}\frac{(n-1)^{K-r}}{n^K}\times\sum_{S\in\mathcal{S}_r}\sigma_{\mathcal{K}}^2\\
&\leq&[(\rho^2-\rho)^2+(\rho-\rho^2)]\cdot\big(1-(1-\frac{1}{n})^K\big).
\end{eqnarray*}
Similarly, we have
\begin{eqnarray*}
V'&=&\Var[\rho'_{\mathbf{B}_1}\rho'_{\mathbf{B}_2}]\\
&\leq&[((\rho')^2-\rho')^2+(\rho'-(\rho')^2)]\cdot\big(1-(1-\frac{1}{n})^{K'}\big).
\end{eqnarray*}
As a result, since
\begin{eqnarray*}
|d_n^2|&\leq&\big(1-(1-\frac{1}{n})^{K+K'-m}(1-\frac{2}{n})^m(1-\frac{3}{n})^m\big)\\
&\quad&\cdot \Var[\rho_{\mathbf{B}_1}\rho_{\mathbf{B}_2}]^{\frac{1}{2}}\Var[\rho'_{\mathbf{B}_1}\rho'_{\mathbf{B}_2}]^{\frac{1}{2}},
\end{eqnarray*}
we have
\begin{eqnarray*}
|\Cov(\rho_n^2,(\rho')_n^2)|=|d_n^2|\leq f(n,m)h(\rho)h(\rho'),
\end{eqnarray*}
where
\begin{eqnarray*}
f(n,m)&=&\big(1-(1-\frac{1}{n})^{K+K'-m}(1-\frac{2}{n})^m(1-\frac{3}{n})^m\big)\\
&\quad&\cdot\big(1-(1-\frac{1}{n})^K\big)^{\frac{1}{2}}\big(1-(1-\frac{1}{n})^{K'}\big)^{\frac{1}{2}},
\end{eqnarray*}
and $h(\rho) = \sqrt{\rho(1 - \rho)(\rho - \rho^2 + 1)}$. This completes the proof of the theorem.
\end{proof}

With very similar arguments, we are able to show that
\begin{theorem}\label{theorem:dn21-bound}
$$|\Cov(\rho_n^2, \rho'_n)| \leq f(n, m)h(\rho)g(\rho'),$$
where
\begin{eqnarray*}
f(n, m)&=&[1 - (1 - \frac{1}{n})^K(1 - \frac{2}{n})^m]\\
&\cdot &\sqrt{1 - (1 - \frac{1}{n})^K}\sqrt{1 - (1 - \frac{1}{n})^{K'}},
\end{eqnarray*}
$g(\rho)=\sqrt{\rho(1 - \rho)}$, and $h(\rho)=\sqrt{\rho(1 - \rho)(\rho - \rho^2 + 1)}$.
\end{theorem}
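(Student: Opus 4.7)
The plan is to mirror the proof of Theorem~\ref{theorem:dn22-bound}, with the $\rho'$-side downgraded from a product of two blocks to a single block. Starting from the expansion
\begin{align*}
\Cov(\rho_n^2,\rho'_n)=\frac{1}{n^{2K+K'}}\sum_{k_1,k_2,k'}\Cov(\rho_{\mathbf{B}_1}\rho_{\mathbf{B}_2},\rho'_{\mathbf{B}}),
\end{align*}
a triple contributes zero whenever $\mathbf{B}$ is block-disjoint from both $\mathbf{B}_1$ and $\mathbf{B}_2$ on every one of the $m$ common relations, since then $\rho'_{\mathbf{B}}$ is drawn from independent tuples and is independent of $\rho_{\mathbf{B}_1}\rho_{\mathbf{B}_2}$.

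I would then lower-bound the number of vanishing triples by the ``strictly disjoint'' subclass in which $\mathbf{B}_2$ also differs from $\mathbf{B}_1$ in all $K$ positions; the count is $n^K(n-1)^K(n-2)^m n^{K'-m}$, so the fraction of potentially non-zero summands is at most $1-(1-1/n)^K(1-2/n)^m$, which is the leading factor of $f(n,m)$. A pointwise Cauchy--Schwarz step
\begin{align*}
|\Cov(\rho_{\mathbf{B}_1}\rho_{\mathbf{B}_2},\rho'_{\mathbf{B}})|\leq\sqrt{\Var[\rho_{\mathbf{B}_1}\rho_{\mathbf{B}_2}]\cdot\Var[\rho'_{\mathbf{B}}]},
\end{align*}
combined with $\Var[\rho_{\mathbf{B}_1}\rho_{\mathbf{B}_2}]\leq h(\rho)^2(1-(1-1/n)^K)$ recycled from inside the proof of Theorem~\ref{theorem:dn22-bound} (via the monotonicity of $\sigma_S^2$ and the closed form $\sigma_{\mathcal{K}}^2=\rho(1-\rho)(1+\rho-\rho^2)=h(\rho)^2$), delivers the $h(\rho)\sqrt{1-(1-1/n)^K}$ portion of the target bound.

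The hard part, and the main obstacle I expect, is the remaining factor $g(\rho')\sqrt{1-(1-1/n)^{K'}}$: the naive pointwise bound $\Var[\rho'_{\mathbf{B}}]=g(\rho')^2$ is a factor of $\sqrt{1-(1-1/n)^{K'}}$ too loose. The plan to close this gap is to first collapse the $k'$-sum into
\begin{align*}
\Cov(\rho_n^2,\rho'_n)=\frac{1}{n^{2K}}\sum_{k_1,k_2}\Cov(\rho_{\mathbf{B}_1}\rho_{\mathbf{B}_2},\rho'_n),
\end{align*}
and apply Cauchy--Schwarz at this coarser level so that Theorem~\ref{theorem:varbound} can be invoked directly to give $\sqrt{\Var[\rho'_n]}\leq g(\rho')\sqrt{1-(1-1/n)^{K'}}$. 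The delicate technical point is fusing this coarser Cauchy--Schwarz with the triple-level counting above without double-counting: one retains the $(k_1,k_2)$-level combinatorial factor by restricting the outer Cauchy--Schwarz to those pairs whose joint sample support interacts non-trivially with the leaf tables shared with $\rho'_n$, and tracks the weights with exactly the same bookkeeping used in Theorem~\ref{theorem:dn22-bound}. Once these ingredients are combined correctly, the bound $f(n,m)h(\rho)g(\rho')$ follows.
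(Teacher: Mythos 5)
The paper never writes out a proof of Theorem~\ref{theorem:dn21-bound}; it only asserts that the result follows ``with very similar arguments'' from Theorem~\ref{theorem:dn22-bound}. The first half of your plan is precisely that adaptation, and it matches what the paper's argument would produce: the expansion over triples, the count $n^K(n-1)^K(n-2)^m n^{K'-m}$ of a subclass of vanishing triples yielding the factor $1-(1-\frac{1}{n})^K(1-\frac{2}{n})^m$, the per-triple Cauchy--Schwarz, and the recycled bound $\Var[\rho_{\mathbf{B}_1}\rho_{\mathbf{B}_2}]\leq h(\rho)^2(1-(1-\frac{1}{n})^K)$. You also correctly identified the exact point where the analogy breaks: the second argument is now a single sample $\rho'_{\mathbf{B}}$ with $\Var[\rho'_{\mathbf{B}}]=g(\rho')^2$ exactly, so the direct mimicry produces no factor $\sqrt{1-(1-\frac{1}{n})^{K'}}$.

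The gap is in your proposed repair, and it cannot be closed. The combinatorial factor $1-(1-\frac{1}{n})^K(1-\frac{2}{n})^m$ and the reduction $\Var[\rho'_n]\leq g(\rho')^2(1-(1-\frac{1}{n})^{K'})$ are both harvested from averaging over the same $n^{K'}$ index tuples $k'$, and you can collect that gain once, not twice: as soon as you pair $\rho_{\mathbf{B}_1}\rho_{\mathbf{B}_2}$ against the aggregate $\rho'_n$, \emph{every} pair $(k_1,k_2)$ has nonzero covariance with $\rho'_n$ (your restriction to pairs ``whose joint sample support interacts non-trivially with the shared leaf tables'' excludes nothing, since every pair touches the $m$ shared relations through $\rho'_n$), so the triple-level counting factor is forfeited; and at the triple level the only available variance is the full $g(\rho')^2$. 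Worse, no bookkeeping can rescue this, because the inequality with both square-root factors present is too strong: for $K=K'=m=1$ one computes exactly $\Cov(\rho_n^2,\rho'_n)=\frac{2\rho}{n}\Cov(\rho_{\mathbf{B}},\rho'_{\mathbf{B}})+\frac{(1-2\rho)}{n^2}\Cov(\rho_{\mathbf{B}},\rho'_{\mathbf{B}})$, which is $\Theta(n^{-1})$ for nested predicates (where $\Cov(\rho_{\mathbf{B}},\rho'_{\mathbf{B}})=\rho'(1-\rho)>0$), whereas $f(n,1)h(\rho)g(\rho')=O(n^{-2})$. The same order-of-magnitude problem already infects the written proof of Theorem~\ref{theorem:dn22-bound}, where the index-averaged quantity $h(\rho)^2(1-(1-\frac{1}{n})^K)$ is substituted for the pointwise $\Var[\rho_{\mathbf{B}_1}\rho_{\mathbf{B}_2}]$ inside a per-term Cauchy--Schwarz, even though for a fixed pair with distinct indices that variance is $\rho^2(1-\rho^2)=\Theta(1)$. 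The honest output of your (correct) first half is therefore a bound of the form $[1-(1-\frac{1}{n})^K(1-\frac{2}{n})^m]$ times a function of $\rho,\rho'$ alone, i.e.\ $O(n^{-1})$; the two extra $\sqrt{1-(1-\frac{1}{n})^{K}}$-type factors should not be there, and the difficulty you flagged as ``delicate'' is in fact a sign that the stated inequality is unprovable.
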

The approximate version is:
\begin{eqnarray*}
f(n, m)&\approx & [1 - (1 - \frac{K}{n})(1 - \frac{2m}{n})]\frac{\sqrt{KK'}}{n}\\
&\approx &\frac{(K + 2m)\sqrt{KK'}}{n^2}.
\end{eqnarray*}

\section{The Complete Framework}\label{sec:framework}

We present the complete framework of estimating the distribution of $t_q$ in Algorithm~\ref{alg:dist-tq}. Note that, the framework is a two-stage one: we first obtain the marginal distributions of the selectivities via sampling, and then obtain the distribution of $t_q$.

It is worth to point out that a more straightforward, one-stage alternative can also solve the problem: we keep running the query plan over different sample tables and observe the joint distribution of the selectivities. It will then directly give the distribution of the estimated running times: we simply plug in each observed selectivity vector $\mathbf{X}$ to the cost formulas and compute the running times. However, the overhead of this approach might be prohibitive in practice: we need the same number of sample runs as the observations we need to build the histogram of the running times.

Nonetheless, this conceptually simpler framework is of some theoretic interest. Note that in our current framework, we view the execution time as a function of the selectivities over \emph{all} operators. Can we instead view the time as a function of the selectivities over just the leaf operators? The answer is no, because the selectivities of the internal operators cannot be simply determined by the selectivities of the leaf nodes. However, we can indeed view the time as a function of the leaf tables. That is, as long as we fix the input table of each leaf (i.e., scan) operator, the selectivity of each internal operator is also fixed and hence the running time can be determined. Since different leaf tables may lead to the same selectivity on a leaf operator, this explains why simply fixing the selectivities of the leaf operators may not be sufficient for characterizing the running time of the query plan. However, a function of \emph{tables} is not feasible for mathematical analysis. The only way is to leave it as a black box and repeatedly feed it with different sample tables, which is costly and infeasible for our purpose of query execution time prediction. By instead representing the running time as a function of selectivities over all operators, we obtain something mathematically manipulatable and practically efficient, though we now need to address the new challenge of estimating the covariances between the selectivities.

\begin{algorithm}
  \SetAlgoLined
  \KwIn{$q$, the input query; $\mathcal{C}$, cost units calibrated offline}
  \KwOut{$t_q^{\mathcal{N}}\sim \mathcal{N}(\E[t_q], \Var[t_q])$}
  \SetAlgoLined
  $Agg\leftarrow false$\;

  \quad

  \textbf{EstSelDistr}$(O)$:

  \If{$O$ has left child $O_l$} {
    $EstSelDistr(O_l)$\;
  }
  \If{$O$ has right child $O_r$} {
    $EstSelDistr(O_r)$\;
  }
  \If{$O$ is aggregate} {
    $Agg\leftarrow true$\;
  }

  Compute $\rho_n$ and $S_n^2$ for $O$ using Algorithm~\ref{alg:sel-est}\;

  \quad

  \textbf{GetCostFunc}$(\mathcal{P}_q)$:

  \ForEach{$O\in\mathcal{P}_q$} {
    \ForEach{$f\in CostFunctions(O)$} {
        \uIf{$O$ is unary} {
            Collect $(X_l, f)$'s in $\mathcal{I}_l$\;
        }\Else {
            Collect $(X_l, X_r, f)$'s in $\mathcal{I}_l\times\mathcal{I}_r$\;
        }
        Compute $f$ by solving the optimization problem\;
    }
  }

  \quad

  \textbf{Main}:

  $\mathcal{P}_q\leftarrow GetQueryPlan(q)$\;

  Run $\mathcal{P}_q$ over the sample tables\;

  $O_{root}\leftarrow GetRootOperator(\mathcal{P}_q)$\;

  $EstSelDistr(O_{root})$\;

  $GetCostFunc(\mathcal{P}_q)$\;

  Estimate $t_q^\mathcal{N}\sim \mathcal{N}(\E[t_q],\Var[t_q])$ using Algorithm~\ref{alg:var-tq}\;

  \caption{Estimation of the distribution of $t_q$}
\label{alg:dist-tq}
\end{algorithm}

We further summarize the procedure of computing $\Var[t_q]$ in Algorithm~\ref{alg:var-tq}. We first compute the variances for the $t_k$'s, since we have shown that they are directly computable. We then collect all the paths from the leaf operators to the root and compute or bound the covariances between the operators along each path. Based on Lemma~\ref{lemma:correlated}, these are all the pairs of operators that we need to check the covariances. Based on if $\Cov(Z,Z')$ is computable, we directly compute or provide some upper bound for it.

\begin{algorithm}[!htb]
  \SetAlgoLined
  \KwIn{$\mathcal{P}_q$, the query plan of $q$}
  \KwOut{$\Var[t_q]$, the estimated variance of $t_q$}
  \SetAlgoLined
  $VarOps\leftarrow 0$, $CovOpsUb\leftarrow 0$\;

  \ForEach{$O_k\in\mathcal{P}_q$} {
    $VarOps\leftarrow VarOps + \Var[t_k]$\;
  }

  $\mathcal{L}\leftarrow GetLeafOps(\mathcal{P}_q)$\;

  \ForEach{$L\in\mathcal{L}$} {
    $\mathcal{P}\leftarrow GetPath(L)$\;
    \ForEach{$O,O'\in\mathcal{P}$ s.t. $O\neq O'$} {
        \ForEach{$\Cov(Z, Z')$} {
            \uIf{$\Cov(Z, Z')$ is computable} {
                $VarOps\leftarrow VarOps + ComputeCov(O, O', Z, Z')$\;
            }\Else {
                $CovOpsUb\leftarrow CovOpsUb + UpperBoundCov(O, O', Z, Z')$\;
            }
        }
    }
  }

  $\Var[t_q]\leftarrow VarOps + CovOpsUb$\;

  \Return{$\Var[t_q]$\;}
  \caption{Estimation of the variance of $t_q$}
\label{alg:var-tq}
\end{algorithm}

\section{More Experimental Results}\label{sec:more-exp-results}

In this section we present additional experimental results.

\subsection{Correlations}\label{sec:more-exp-results:correlations}

Table~\ref{tab:cc:pred-time} reports the $r_s$'s (and the corresponding $r_p$'s) for the benchmark queries over different hardware and database settings. Here, SR stands for the sampling ratio (see Section~\ref{sec:experiment:accuracy}). Values below 0.7 are highlighted.
As we mentioned, these are cases where the correlations are not strong.

\begin{table*}[!htb]
\centering
\begin{tabular}{|r||r|r||r|r||r|r|}
\hline
\multicolumn{1}{|c||}{$\quad$} & \multicolumn{2}{c||}{\textbf{MICRO}} & \multicolumn{2}{c||}{\textbf{SELJOIN}} & \multicolumn{2}{c|}{\textbf{TPCH}}\\
\hline
\multicolumn{1}{|c||}{SR} & \multicolumn{1}{c|}{PC1} & \multicolumn{1}{c||}{PC2} & \multicolumn{1}{c|}{PC1} & \multicolumn{1}{c||}{PC2} & \multicolumn{1}{c|}{PC1} & \multicolumn{1}{c|}{PC2}\\
\hline
\hline
\multicolumn{7}{|c|}{Uniform TPC-H 1GB Database}\\
\hline
0.01 & 0.9321 (0.9830) & 0.9400 (\textbf{0.5691}) & 0.7554 (0.8989) & 0.8551 (0.9724) & 0.7209 (0.7571) & 0.9457 (0.9688)\\
0.05 & 0.9381 (0.9875) & 0.9813 (0.7904) & \textbf{0.6958} (0.8414) & 0.9170 (0.9865) & 0.7171 (0.7738) & 0.9583 (0.9768)\\
0.1 & 0.9415 (0.9862) & 0.9740 (0.8252) & 0.7160 (0.8204) & 0.9265 (0.9883) & 0.7498 (0.7700) & 0.9607 (0.9778)\\
\hline
\hline
\multicolumn{7}{|c|}{Skewed TPC-H 1GB Database}\\
\hline
0.01 & 0.9418 (0.9753) & 0.9827 (0.9236) & 0.8545 (\textbf{0.5575}) & 0.9495 (0.8656) & 0.7829 (0.8768) & 0.9614 (0.9189)\\
0.05 & 0.9435 (0.9762) & 0.9838 (0.9130) & 0.8374 (\textbf{0.6502}) & 0.9621 (0.9543) & 0.9248 (0.9266) & 0.9729 (0.9897)\\
0.1 & 0.9431 (0.9765) & 0.9840 (0.9168) & 0.8546 (\textbf{0.6644}) & 0.9622 (0.9574) & 0.9248 (0.9285) & 0.9639 (0.9901)\\
\hline
\hline
\multicolumn{7}{|c|}{Uniform TPC-H 10GB Database}\\
\hline
0.01 & 0.9397 (0.9518) & 0.9853 (0.9549) & 0.9660 (0.8263) & 0.9054 (0.9288) & 0.8265 (0.9344) & 0.7926 (0.9614)\\
0.05 & 0.9379 (0.9675) & 0.9855 (0.9536) & 0.9774 (0.8786) & 0.9094 (0.9617) & 0.8749 (0.9592) & 0.8504 (0.9699)\\
0.1 & 0.9383 (0.9760) & 0.9853 (0.9539) & 0.9708 (0.8574) & 0.9095 (0.9649) & 0.8026 (0.9498) & 0.8559 (0.9706)\\
\hline
\hline
\multicolumn{7}{|c|}{Skewed TPC-H 10GB Database}\\
\hline
0.01 & 0.9674 (0.9665) & 0.9819 (0.9830) & 0.9636 (0.8986) & 0.9728 (0.9532) & 0.9480 (0.9696) & 0.8894 (0.9884)\\
0.05 & 0.9669 (0.9812) & 0.9841 (0.9831) & 0.9650 (0.9519) & 0.9784 (0.9761) & 0.9439 (0.9887) & 0.9127 (0.9936)\\
0.1 & 0.9675 (0.9905) & 0.9840 (0.9830) & 0.9663 (0.9580) & 0.9781 (0.9781) & 0.9354 (0.9910) & 0.9198 (0.9944)\\
\hline
\end{tabular}
\caption{$r_s$ ($r_p$) of the benchmark queries over different hardware and database settings (values below 0.7 are highlighted)}
\label{tab:cc:pred-time}
\shrink
\end{table*}

\subsection{Distributional Distances}\label{sec:more-exp-results:distances}

Table~\ref{tab:ks} reports the complete results of distributional distances for the benchmark queries (see Section~\ref{sec:experiment:accuracy}).
Values above 0.3 are highlighted. The closer a value is to 0, the better the proximity of two distributions is.

\begin{table*}[!htb]
\centering
\begin{tabular}{|r||r|r|r|r|r|r||r|r|r|r|r|r|}
\hline
\multicolumn{1}{|c||}{$\quad$} & \multicolumn{2}{c|}{\textbf{MICRO}} & \multicolumn{2}{c|}{\textbf{SELJOIN}} & \multicolumn{2}{c||}{\textbf{TPCH}}
& \multicolumn{2}{c|}{\textbf{MICRO}} & \multicolumn{2}{c|}{\textbf{SELJOIN}} & \multicolumn{2}{c|}{\textbf{TPCH}}\\
\hline
\multicolumn{1}{|c||}{SR} & \multicolumn{1}{c|}{PC1} & \multicolumn{1}{c|}{PC2} & \multicolumn{1}{c|}{PC1} & \multicolumn{1}{c|}{PC2} & \multicolumn{1}{c|}{PC1} & \multicolumn{1}{c||}{PC2} & \multicolumn{1}{c|}{PC1} & \multicolumn{1}{c|}{PC2} & \multicolumn{1}{c|}{PC1} & \multicolumn{1}{c|}{PC2} & \multicolumn{1}{c|}{PC1} & \multicolumn{1}{c|}{PC2}\\
\hline
\hline
\multicolumn{7}{|c||}{Uniform TPC-H 1GB Database} & \multicolumn{6}{c|}{Skewed TPC-H 1GB Database}\\
\hline
0.01 & \textbf{0.5573} & \textbf{0.6235} & 0.2228 & 0.2833 & 0.1175 & 0.0872 & 0.2276 & \textbf{0.3150} & 0.1882 & 0.1395 & 0.1717 & 0.0850\\
0.05 & 0.2728 & \textbf{0.3885} & 0.1563 & 0.1787 & 0.0610 & 0.0570 & 0.2286 & \textbf{0.3180} & 0.1686 & 0.1334 & 0.1691 & 0.1068\\
0.1 & 0.2312 & \textbf{0.3236} & 0.1170 & 0.1441 & 0.0520 & 0.0664 & 0.2286 & \textbf{0.3183} & 0.1695 & 0.1341 & 0.1691 & 0.1068\\
\hline
\hline
\multicolumn{7}{|c||}{Uniform TPC-H 10GB Database} & \multicolumn{6}{c|}{Skewed TPC-H 10GB Database}\\
\hline
0.01 & 0.1663 & 0.2532 & 0.0766 & 0.1097 & 0.0579 & 0.0502 & 0.1170 & 0.2512 & 0.1052 & 0.1316 & 0.1388 & 0.0713\\
0.05 & 0.1657 & 0.2532 & 0.0765 & 0.1098 & 0.0595 & 0.0535 & 0.1158 & 0.2524 & 0.1022 & 0.1275 & 0.1296 & 0.0769\\
0.1 & 0.1622 & 0.2532 & 0.0722 & 0.1091 & 0.0591 & 0.0558 & 0.1136 & 0.2518 & 0.1040 & 0.1282 & 0.1296 & 0.0814\\
\hline
\end{tabular}
\caption{$\overline{D}_n$ of the benchmark queries over different hardware and database settings (values above 0.3 are highlighted)}
\label{tab:ks}
\shrink
\end{table*}

\subsection{Comparison with Simplified Versions}\label{sec:more-exp-results:simplified}

Figure~\ref{fig:simplified:cc:skewed} presents more results on comparison of the four alternatives discussed in Section~\ref{sec:experiment:accuracy} over skewed databases for the \textbf{TPCH} queries.
The observations are similar to that over uniform databases as presented in Section~\ref{sec:experiment:accuracy}.

\begin{figure}[!htb]
\centering
\subfigure[Skewed 1GB database, PC1]{ \label{fig:simplified:cc:skewed:a}
\includegraphics[width=\columnwidth]{./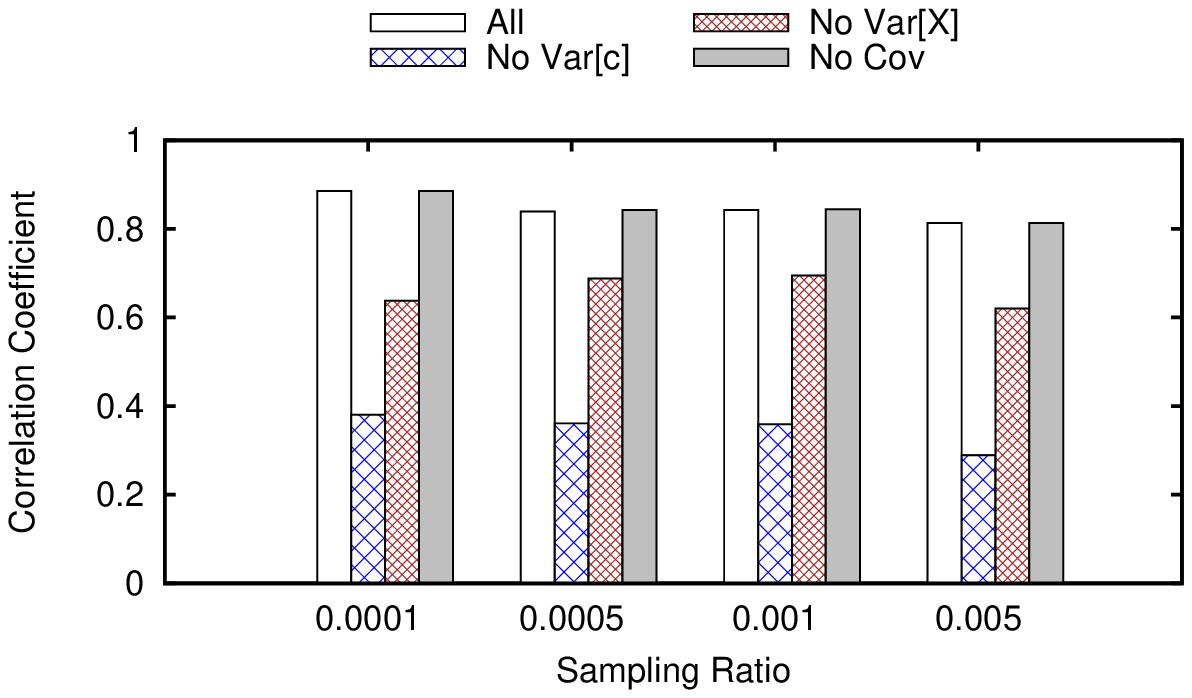}}
\subfigure[Skewed 10GB database, PC2]{ \label{fig:simplified:cc:skewed:b}
\includegraphics[width=\columnwidth]{./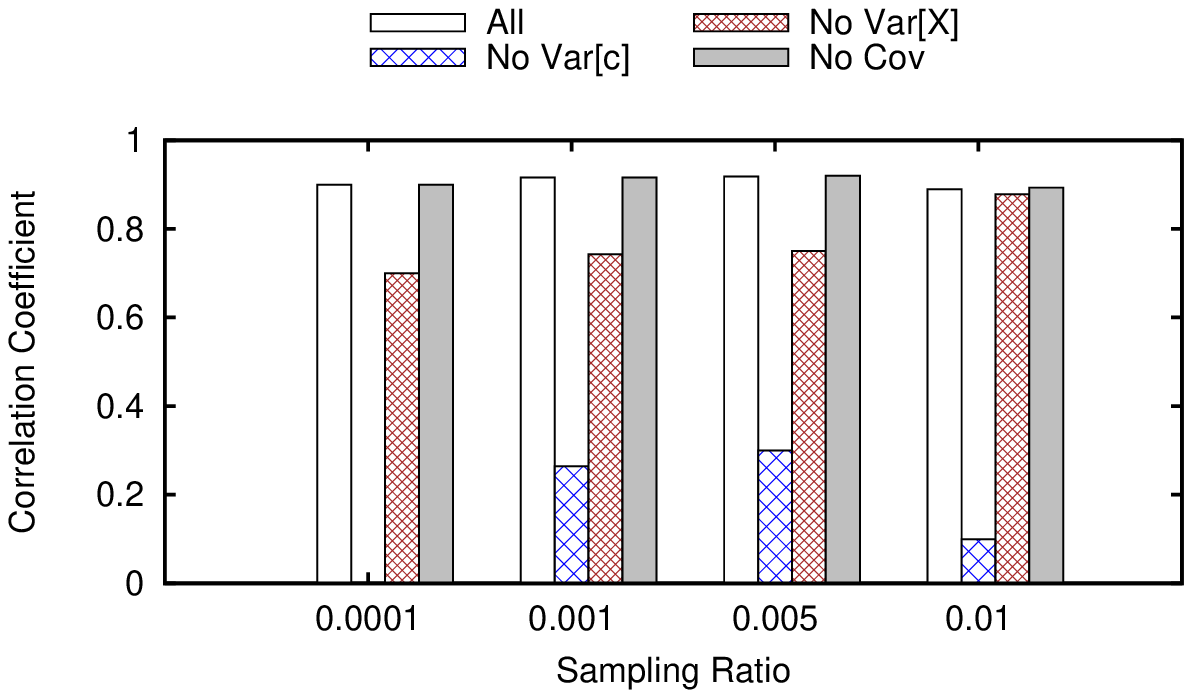}}
\caption{Comparison of four alternatives in terms of $r_s$}
\label{fig:simplified:cc:skewed}
\shrink
\end{figure}

\subsection{Sampling Overhead}\label{sec:more-exp-results:overhead}

Figure~\ref{fig:rel-overhead:all} reports the complete experimental results for the relative overhead of running the queries over the sample tables, which were omitted in Section~\ref{sec:exp:sample-overhead}.

\begin{figure*}[!htb]
\centering
\subfigure[\textbf{MICRO}, PC1]{ \label{fig:rel-overhead:pc1-bogus}
\includegraphics[width=0.66\columnwidth]{./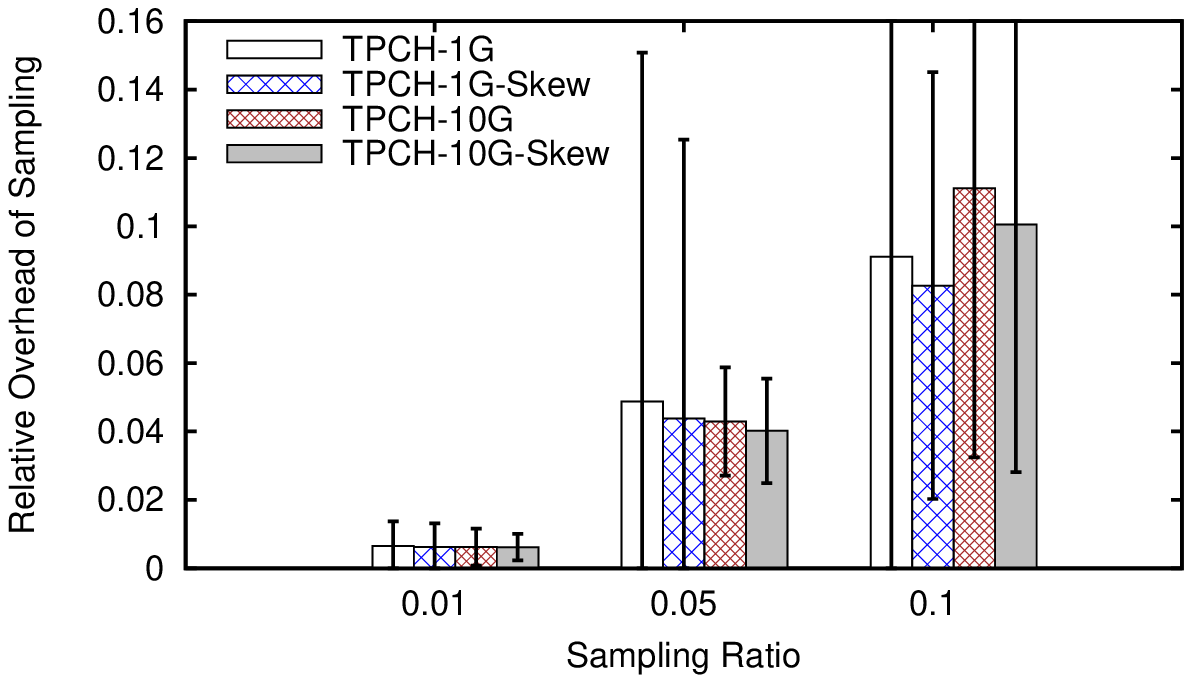}}
\subfigure[\textbf{SELJOIN}, PC1]{ \label{fig:rel-overhead:pc1-sj}
\includegraphics[width=0.66\columnwidth]{./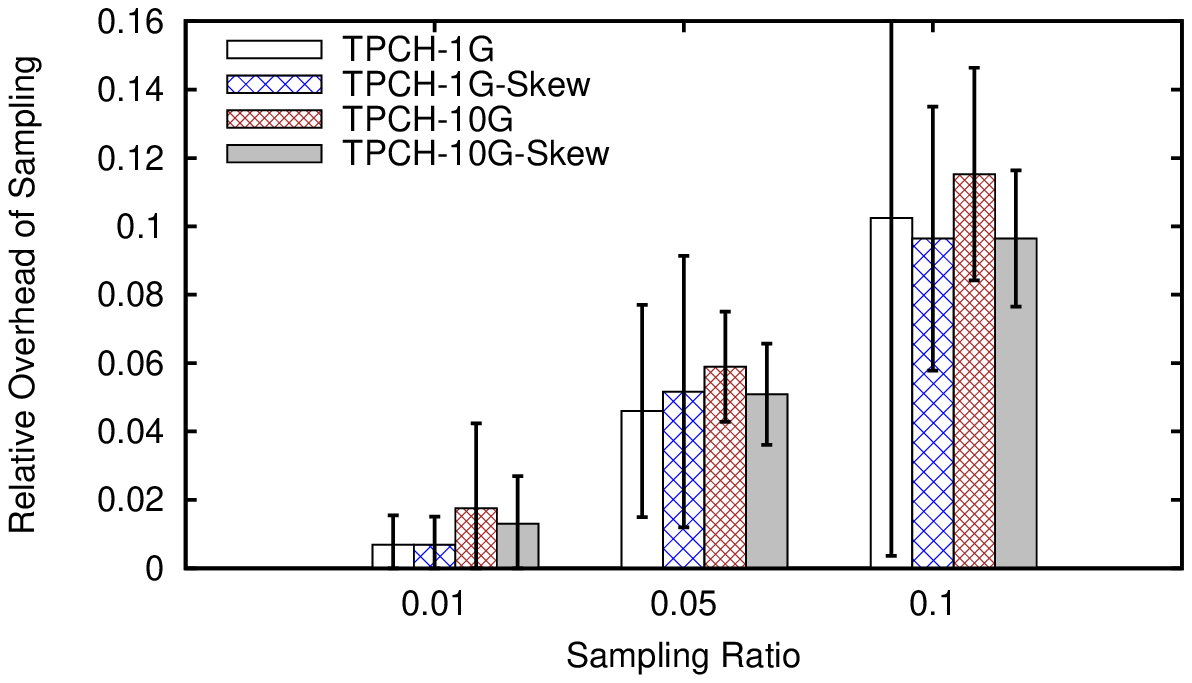}}
\subfigure[\textbf{TPCH}, PC1]{ \label{fig:rel-overhead:pc1-tpch}
\includegraphics[width=0.66\columnwidth]{./figs/rel-overhead-pc1-tpch.eps}}
\subfigure[\textbf{MICRO}, PC2]{ \label{fig:rel-overhead:pc2-bogus}
\includegraphics[width=0.66\columnwidth]{./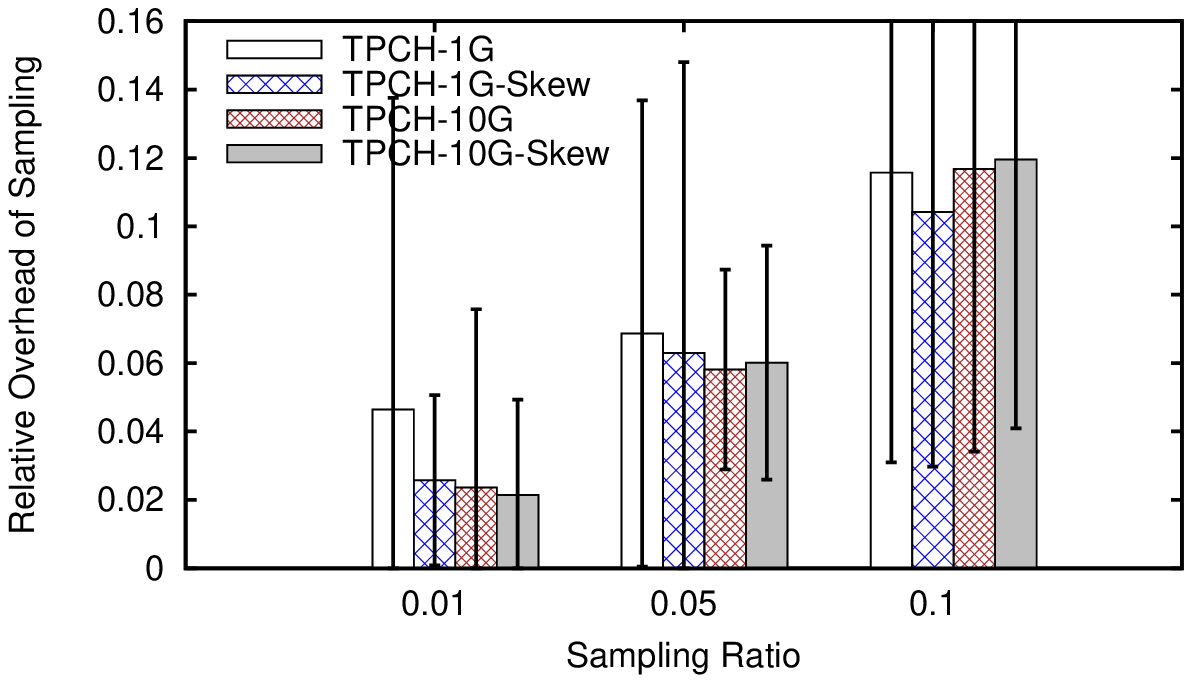}}
\subfigure[\textbf{SELJOIN}, PC2]{ \label{fig:rel-overhead:pc2-sj}
\includegraphics[width=0.66\columnwidth]{./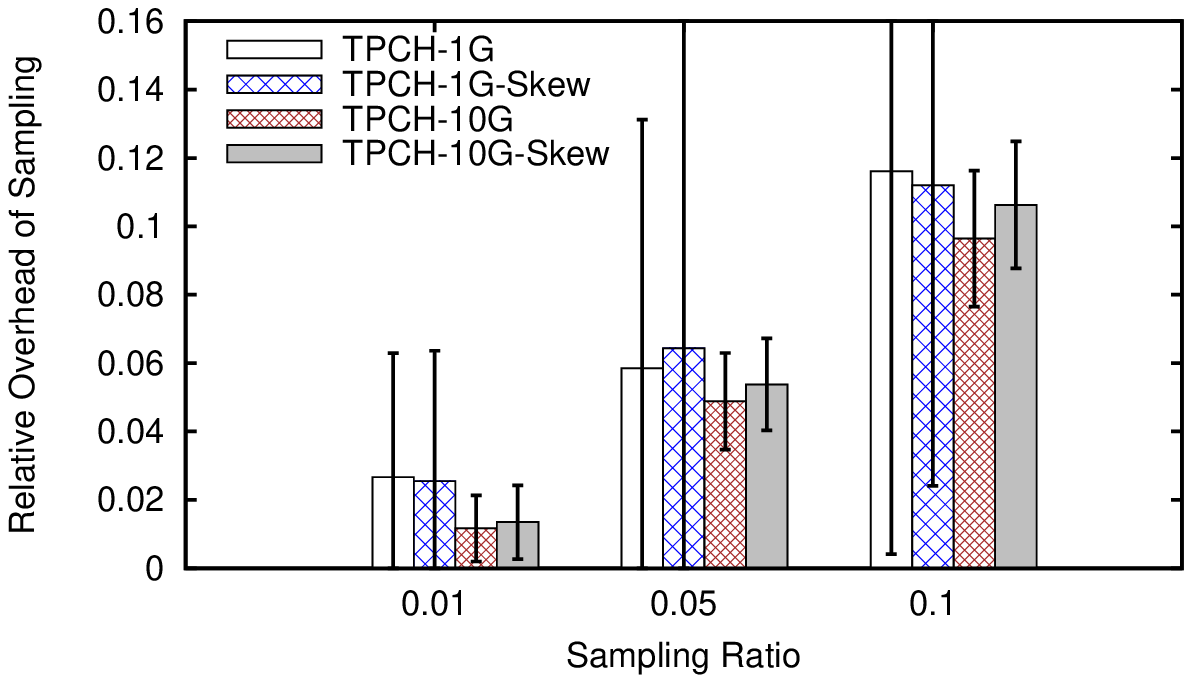}}
\subfigure[\textbf{TPCH}, PC2]{ \label{fig:rel-overhead:pc2-tpch}
\includegraphics[width=0.66\columnwidth]{./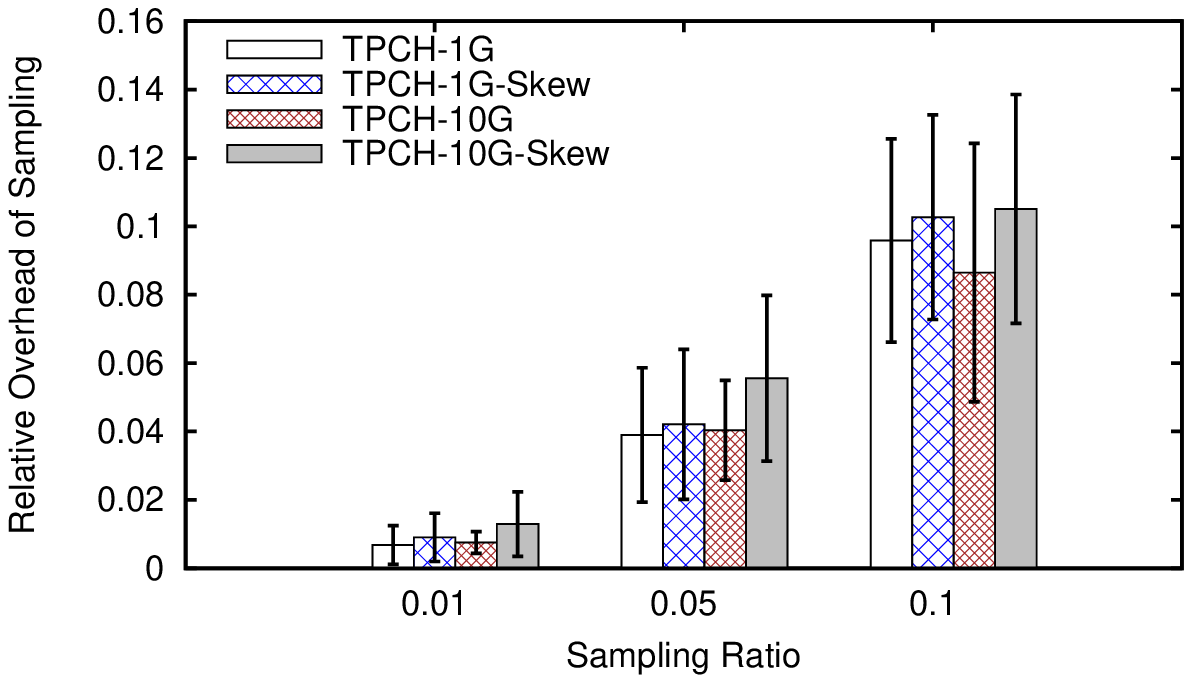}}
\caption{Relative overhead of benchmark queries}
\label{fig:rel-overhead:all}
\shrink
\end{figure*}

\begin{table*}[!htb]
\centering
\begin{tabular}{|r||l|l||l|l||l|l|}
\hline
\multicolumn{1}{|c||}{$\quad$} & \multicolumn{2}{c||}{\textbf{MICRO}} & \multicolumn{2}{c||}{\textbf{SELJOIN}} & \multicolumn{2}{c|}{\textbf{TPCH}}\\
\hline
\multicolumn{1}{|c||}{SR} & \multicolumn{1}{c|}{PC1} & \multicolumn{1}{c||}{PC2} & \multicolumn{1}{c|}{PC1} & \multicolumn{1}{c||}{PC2} & \multicolumn{1}{c|}{PC1} & \multicolumn{1}{c|}{PC2}\\
\hline
\hline
\multicolumn{7}{|c|}{Uniform TPC-H 1GB Database}\\
\hline
0.01 & 0.8127 (0.8730) & 0.9245 (0.9621) & 0.7731 (\textbf{0.2294}) & 0.7465 (\textbf{-0.0015}) & 0.7168 (\textbf{0.0016}) & 0.7612 (\textbf{0.2239})\\
0.05 & 0.8670 (0.7109) & 0.8497 (0.8542) & 0.7818 (\textbf{0.0470}) & 0.7731 (\textbf{0.0247}) & 0.7356 (\textbf{-0.0322}) & 0.7807 (\textbf{0.0676})\\
0.1 & 0.8116 (0.7556) & 0.8246 (0.8216) & 0.7907 (\textbf{0.0276}) & 0.7770 (\textbf{0.0172}) & 0.7304 (\textbf{-0.0339}) & 0.7739 (\textbf{0.0199})\\
0.2 & 0.8102 (\textbf{0.6451}) & 0.8897 (\textbf{0.6541}) & 0.8065 (\textbf{-0.0034}) & 0.7942 (\textbf{-0.0044}) & 0.7506 (\textbf{-0.0606}) & 0.7676 (\textbf{-0.0145})\\
0.3 & 0.9107 (\textbf{0.5566}) & 0.9274 (0.7311) & 0.7940 (\textbf{-0.0056}) & 0.7944 (\textbf{-0.0213}) & 0.7396 (\textbf{-0.0341}) & 0.7930 (\textbf{-0.0346})\\
0.4 & 0.8654 (\textbf{0.4994}) & 0.8988 (0.7034) & 0.7923 (\textbf{-0.0314}) & 0.7902 (\textbf{-0.0239}) & 0.7534 (\textbf{-0.0442}) & 0.7898 (\textbf{-0.0166})\\
\hline
\hline
\multicolumn{7}{|c|}{Skewed TPC-H 1GB Database}\\
\hline
0.01 & 0.8736 (0.7780) & 0.8938 (0.8807) & \textbf{0.6172} (\textbf{-0.0178}) & \textbf{0.6259} (\textbf{-0.0175}) & \textbf{0.5949} (\textbf{0.0145}) & \textbf{0.6556} (\textbf{0.1528})\\
0.05 & 0.9007 (0.7815) & 0.9066 (0.8129) & \textbf{0.6274} (\textbf{0.0265}) & \textbf{0.6293} (\textbf{0.0182}) & \textbf{0.5772} (\textbf{-0.0310}) & \textbf{0.6485} (\textbf{0.0269})\\
0.1 & 0.7748 (\textbf{0.2938}) & 0.9320 (0.8578) & \textbf{0.6347} (\textbf{-0.0306}) & \textbf{0.6286} (\textbf{0.0176}) & \textbf{0.5952} (\textbf{-0.0525}) & \textbf{0.6324} (\textbf{0.0176})\\
0.2 & 0.7566 (\textbf{0.5545}) & 0.8772 (\textbf{0.6146}) & \textbf{0.6360} (\textbf{-0.0202}) & \textbf{0.6211} (\textbf{-0.0174}) & \textbf{0.5970} (\textbf{-0.0352}) & \textbf{0.6071} (\textbf{-0.0127})\\
0.3 & 0.7880 (\textbf{0.4806}) & 0.9137 (\textbf{0.5158}) & \textbf{0.6505} (\textbf{-0.0218}) & 0.7093 (\textbf{-0.0180}) & \textbf{0.5631} (\textbf{-0.0472}) & \textbf{0.6921} (\textbf{-0.0377})\\
0.4 & 0.8063 (\textbf{0.1580}) & 0.8722 (\textbf{0.6483}) & \textbf{0.6808} (\textbf{-0.0520}) & \textbf{0.6180} (\textbf{-0.0198}) & \textbf{0.6553} (\textbf{-0.0601}) & \textbf{0.6161} (\textbf{-0.0175})\\
\hline
\hline
\multicolumn{7}{|c|}{Uniform TPC-H 10GB Database}\\
\hline
0.01 & 0.8407 (\textbf{0.6932}) & 0.9311 (0.7887) & \textbf{0.6594} (\textbf{0.0283}) & \textbf{0.6481} (\textbf{0.0050}) & 0.7395 (\textbf{0.0347}) & 0.8199 (\textbf{0.0048})\\
0.02 & 0.9080 (\textbf{0.6594}) & 0.8781 (0.7153) & \textbf{0.6524} (\textbf{0.0069}) & \textbf{0.6425} (\textbf{-0.0068}) & 0.7362 (\textbf{-0.0122}) & 0.8062 (\textbf{0.0363})\\
0.05 & 0.9004 (\textbf{0.2230}) & 0.9208 (\textbf{0.6030}) & \textbf{0.6366} (\textbf{-0.0132}) & 0.7474 (\textbf{-0.0143}) & 0.7240 (\textbf{-0.0177}) & 0.8313 (\textbf{0.0105})\\
0.1 & 0.8733 (\textbf{0.2993}) & 0.7862 (\textbf{0.3673}) & \textbf{0.6696} (\textbf{-0.0470}) & \textbf{0.6579} (\textbf{-0.0359}) & 0.7461 (\textbf{-0.0514}) & 0.8240 (\textbf{-0.0262})\\
\hline
\hline
\multicolumn{7}{|c|}{Skewed TPC-H 10GB Database}\\
\hline
0.01 & 0.9365 (\textbf{0.6938}) & 0.8742 (0.8389) & \textbf{0.6187} (\textbf{0.0487}) & \textbf{0.6020} (\textbf{-0.0088}) & \textbf{0.6988} (\textbf{-0.0232}) & 0.7820 (\textbf{-0.0170})\\
0.02 & 0.8273 (\textbf{0.5548}) & 0.8929 (0.7476) & \textbf{0.5771} (\textbf{0.0424}) & \textbf{0.6017} (\textbf{0.0029}) & \textbf{0.6812} (\textbf{-0.0291}) & 0.7787 (\textbf{0.0815})\\
0.05 & 0.8008 (\textbf{0.4130}) & 0.8855 (\textbf{0.4758}) & \textbf{0.5537} (\textbf{-0.0133}) & 0.7081 (\textbf{0.0165}) & \textbf{0.6441} (\textbf{-0.0602}) & 0.7274 (\textbf{0.0339})\\
0.1 & 0.7808 (\textbf{0.3152}) & 0.8712 (\textbf{0.4872}) & \textbf{0.5978} (\textbf{-0.0585}) & \textbf{0.6855} (\textbf{-0.0254}) & \textbf{0.6548} (\textbf{-0.0417}) & 0.7366 (\textbf{0.0086})\\
\hline
\end{tabular}
\caption{$r_s$ ($r_p$) between the estimated and actual errors in selectivity estimates (values below 0.7 are highlighted)}
\label{tab:cc:sel-err}
\shrink
\end{table*}

\begin{table*}[!htb]
\centering
\begin{tabular}{|r||l|l||l|l||l|l|}
\hline
\multicolumn{1}{|c||}{$\quad$} & \multicolumn{2}{c||}{\textbf{MICRO}} & \multicolumn{2}{c||}{\textbf{SELJOIN}} & \multicolumn{2}{c|}{\textbf{TPCH}}\\
\hline
\multicolumn{1}{|c||}{SR} & \multicolumn{1}{c|}{PC1} & \multicolumn{1}{c||}{PC2} & \multicolumn{1}{c|}{PC1} & \multicolumn{1}{c||}{PC2} & \multicolumn{1}{c|}{PC1} & \multicolumn{1}{c|}{PC2}\\
\hline
\hline
\multicolumn{7}{|c|}{Uniform TPC-H 1GB Database}\\
\hline
0.01 & 0.9808 (0.9977) & 0.9826 (0.9916) & 0.9934 (1.0000) & 0.9907 (0.9939) & 0.9962 (1.0000) & 0.9967 (1.0000)\\
0.05 & 0.9829 (0.9992) & 0.9923 (0.9970) & 0.9930 (1.0000) & 0.9956 (1.0000) & 0.9973 (1.0000) & 0.9971 (1.0000)\\
0.1 & 0.9920 (0.9993) & 0.9910 (0.9998) & 0.9971 (1.0000) & 0.9986 (1.0000) & 0.9973 (1.0000) & 0.9978 (1.0000)\\
0.2 & 0.9873 (0.9997) & 0.9925 (0.9999) & 0.9997 (1.0000) & 0.9997 (1.0000) & 0.9982 (1.0000) & 0.9972 (1.0000)\\
0.3 & 0.9878 (0.9996) & 0.9961 (0.9998) & 0.9982 (1.0000) & 0.9982 (1.0000) & 0.9986 (1.0000) & 0.9985 (1.0000)\\
0.4 & 0.9958 (0.9996) & 0.9949 (0.9997) & 0.9984 (1.0000) & 0.9973 (1.0000) & 0.9993 (1.0000) & 0.9992 (1.0000)\\
\hline
\hline
\multicolumn{7}{|c|}{Skewed TPC-H 1GB Database}\\
\hline
0.01 & 0.9896 (0.9986) & 0.9964 (0.9973) & 0.9741 (0.9904) & 0.9850 (0.9826) & 0.9942 (1.0000) & 0.9957 (1.0000)\\
0.05 & 0.9994 (0.9997) & 0.9983 (0.9996) & 0.9934 (1.0000) & 0.9930 (1.0000) & 0.9938 (1.0000) & 0.9962 (1.0000)\\
0.1 & 0.9987 (0.9999) & 0.9985 (0.9998) & 0.9947 (1.0000) & 0.9955 (1.0000) & 0.9947 (1.0000) & 0.9955 (1.0000)\\
0.2 & 0.9994 (0.9998) & 0.9997 (1.0000) & 0.9968 (1.0000) & 0.9964 (1.0000) & 0.9965 (1.0000) & 0.9952 (1.0000)\\
0.3 & 0.9993 (0.9999) & 0.9999 (1.0000) & 0.9977 (1.0000) & 0.9996 (1.0000) & 0.9970 (1.0000) & 0.9980 (1.0000)\\
0.4 & 0.9997 (1.0000) & 0.9994 (0.9998) & 0.9988 (1.0000) & 0.9967 (1.0000) & 0.9992 (1.0000) & 0.9962 (1.0000)\\
\hline
\hline
\multicolumn{7}{|c|}{Uniform TPC-H 10GB Database}\\
\hline
0.01 & 0.9964 (0.9996) & 0.9979 (0.9995) & 0.9885 (1.0000) & 0.9883 (1.0000) & 0.9823 (1.0000) & 0.9932 (1.0000)\\
0.02 & 0.9866 (0.9997) & 0.9921 (0.9998) & 0.9896 (1.0000) & 0.9890 (1.0000) & 0.9827 (1.0000) & 0.9940 (1.0000)\\
0.05 & 0.9959 (1.0000) & 0.9938 (1.0000) & 0.9945 (1.0000) & 0.9938 (1.0000) & 0.9894 (1.0000) & 0.9957 (1.0000)\\
0.1 & 0.9964 (1.0000) & 0.9968 (1.0000) & 0.9974 (1.0000) & 0.9969 (1.0000) & 0.9977 (1.0000) & 0.9990 (1.0000)\\
\hline
\hline
\multicolumn{7}{|c|}{Skewed TPC-H 10GB Database}\\
\hline
0.01 & 0.9986 (0.9994) & 0.9957 (0.9994) & 0.9881 (0.9868) & 0.9904 (0.9942) & 0.9925 (1.0000) & 0.9884 (0.9838)\\
0.02 & 0.9992 (0.9998) & 0.9999 (1.0000) & 0.9934 (0.9996) & 0.9900 (0.9936) & 0.9925 (1.0000) & 0.9946 (1.0000)\\
0.05 & 0.9992 (0.9999) & 0.9993 (0.9999) & 0.9893 (1.0000) & 0.9966 (0.9997) & 0.9912 (1.0000) & 0.9935 (1.0000)\\
0.1 & 0.9999 (1.0000) & 0.9997 (1.0000) & 0.9963 (1.0000) & 0.9978 (1.0000) & 0.9939 (1.0000) & 0.9944 (1.0000)\\
\hline
\end{tabular}
\caption{$r_s$ ($r_p$) between the estimated and actual selectivities (values below 0.7 are highlighted)}
\label{tab:cc:sel}
\shrink
\end{table*}

\begin{table*}[!htb]
\centering
\begin{tabular}{|r||r|r|r|r|r|r||r|r|r|r|r|r|}
\hline
\multicolumn{1}{|c||}{$\quad$} & \multicolumn{2}{c|}{\textbf{MICRO}} & \multicolumn{2}{c|}{\textbf{SELJOIN}} & \multicolumn{2}{c||}{\textbf{TPCH}}
& \multicolumn{2}{c|}{\textbf{MICRO}} & \multicolumn{2}{c|}{\textbf{SELJOIN}} & \multicolumn{2}{c|}{\textbf{TPCH}}\\
\hline
\multicolumn{1}{|c||}{SR} & \multicolumn{1}{c|}{PC1} & \multicolumn{1}{c|}{PC2} & \multicolumn{1}{c|}{PC1} & \multicolumn{1}{c|}{PC2} & \multicolumn{1}{c|}{PC1} & \multicolumn{1}{c||}{PC2} & \multicolumn{1}{c|}{PC1} & \multicolumn{1}{c|}{PC2} & \multicolumn{1}{c|}{PC1} & \multicolumn{1}{c|}{PC2} & \multicolumn{1}{c|}{PC1} & \multicolumn{1}{c|}{PC2}\\
\hline
\hline
\multicolumn{7}{|c||}{Uniform TPC-H 1GB Database} & \multicolumn{6}{c|}{Skewed TPC-H 1GB Database}\\
\hline
0.01 & 0.1328 & \textbf{0.2299} & \textbf{0.4678} & \textbf{0.4216} & \textbf{0.5349} & \textbf{0.4738} & 0.1454 & 0.1134 & \textbf{0.6256} & \textbf{0.4279} & \textbf{0.6689} & \textbf{0.5745}\\
0.05 & 0.0340 & 0.0551 & 0.1824 & \textbf{0.2946} & \textbf{0.3121} & \textbf{0.2212} & 0.0508 & 0.0445 & \textbf{0.3765} & \textbf{0.2505} & \textbf{0.7402} & \textbf{0.2580}\\
0.1 & 0.0306 & 0.0318 & 0.1586 & 0.1484 & 0.1988 & 0.1967 & 0.0393 & 0.0255 & 0.1593 & 0.1567 & 0.1815 & 0.1847\\
0.2 & 0.0197 & 0.0122 & 0.1134 & 0.0836 & 0.1017 & 0.1590 & 0.0210 & 0.0250 & 0.1397 & 0.1117 & 0.1127 & 0.1294\\
0.3 & 0.0144 & 0.0132 & 0.0577 & 0.0422 & 0.0583 & 0.0734 & 0.0197 & 0.0161 & 0.0910 & 0.0858 & 0.0890 & 0.1121\\
0.4 & 0.0146 & 0.0200 & 0.0389 & 0.0371 & 0.0585 & 0.0534 & 0.0115 & 0.0203 & 0.0770 & 0.0581 & 0.0878 & 0.0651\\
\hline
\hline
\multicolumn{7}{|c||}{Uniform TPC-H 10GB Database} & \multicolumn{6}{c|}{Skewed TPC-H 10GB Database}\\
\hline
0.01 & 0.0381 & 0.0492 & \textbf{0.3162} & \textbf{0.3396} & \textbf{0.4475} & \textbf{0.5894} & 0.0591 & 0.0542 & \textbf{0.3433} & \textbf{0.3478} & \textbf{0.4723} & \textbf{0.5955}\\
0.02 & 0.0342 & 0.0241 & \textbf{0.2419} & \textbf{0.2344} & \textbf{0.3439} & \textbf{0.3533} & 0.0514 & 0.0336 & \textbf{0.3471} & \textbf{0.2671} & \textbf{0.3819} & \textbf{0.4354}\\
0.05 & 0.0147 & 0.0101 & 0.1491 & 0.1713 & \textbf{0.2018} & \textbf{0.2130} & 0.0254 & 0.0203 & 0.1809 & 0.1914 & \textbf{0.2046} & \textbf{0.2732}\\
0.1 & 0.0068 & 0.0096 & 0.1047 & 0.0800 & 0.1291 & 0.1499 & 0.0116 & 0.0127 & 0.1280 & 0.1389 & 0.1313 & 0.1754\\
\hline
\end{tabular}
\caption{Relative errors in the selectivity estimates of the benchmark queries (values above 0.2 are highlighted)}
\label{tab:sel-rel-err}
\shrink
\end{table*}

\subsection{Selectivity Estimates}\label{sec:more-exp-results:selectivity}

Given a query $q$, the quality of the estimated distribution of $t_q$ depends on a number of factors such as the accuracy of the distributions of the $c$'s and the $X$'s, the quality of the approximated cost functions, and the closeness of the upper bounds of the covariances to the actual values. Note that how well we could estimate the potential errors in selectivity estimates plays a crucial role here, for it directly impacts the accuracy of the estimated distributions of the $X$'s, which further impacts the accuracy of the approximated cost functions as well as the estimated covariances. Therefore, we further studied the correlations between the estimated and actual errors in selectivity estimates. Since the selectivities are modeled as Gaussian variables, we again focus on measuring the correlations between the standard deviations of the Gaussian distributions and the actual errors in the selectivity estimates, as what we did in Section~\ref{sec:experiment:accuracy} for the distributions of the $t_q$'s. For this sake, we examined the correlations for the selective operators (i.e., selections and joins) of each query in the benchmarks under different hardware and database settings. Table~\ref{tab:cc:sel-err} presents the results.

\begin{figure*}
\centering
\subfigure[\textbf{MICRO}]{ \label{fig:cc:sel:bogus}
\includegraphics[width=0.68\columnwidth]{./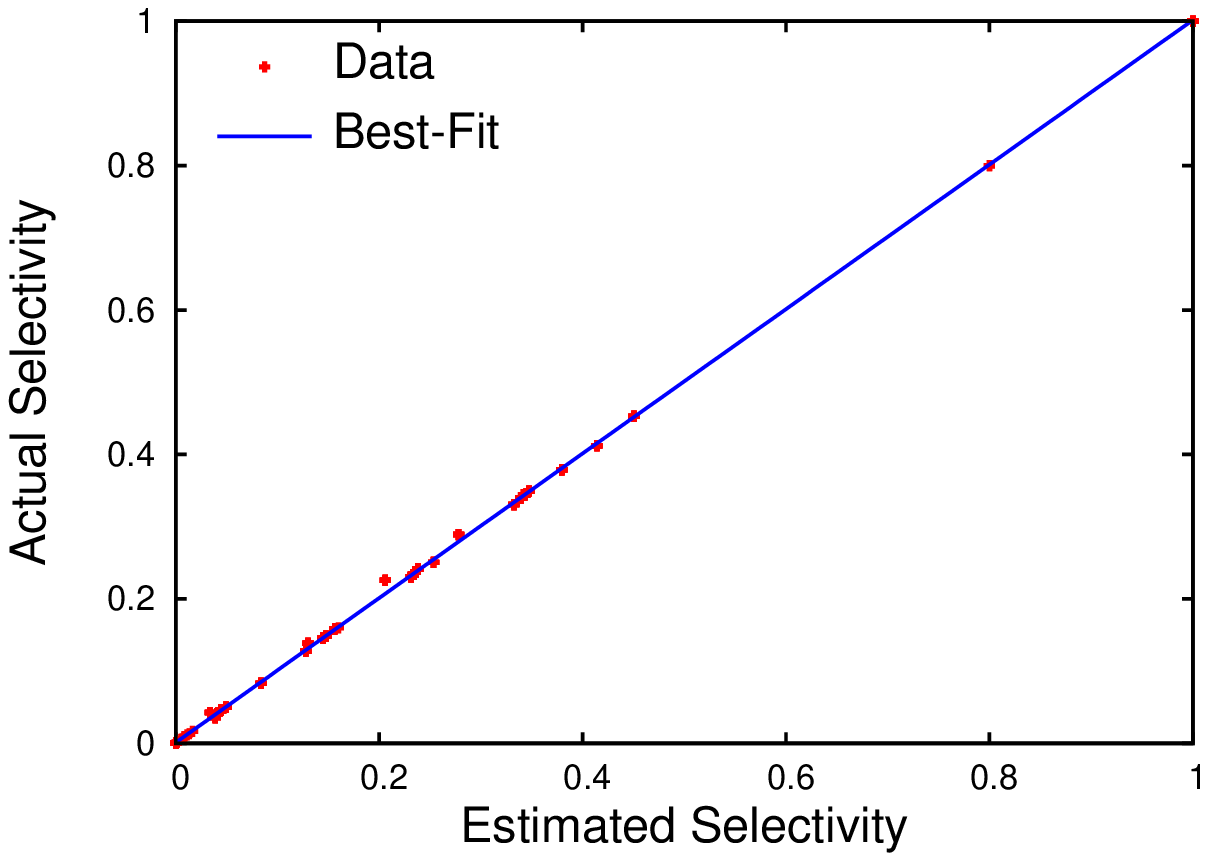}}
\subfigure[\textbf{SELJOIN}]{ \label{fig:cc:sel:sj}
\includegraphics[width=0.68\columnwidth]{./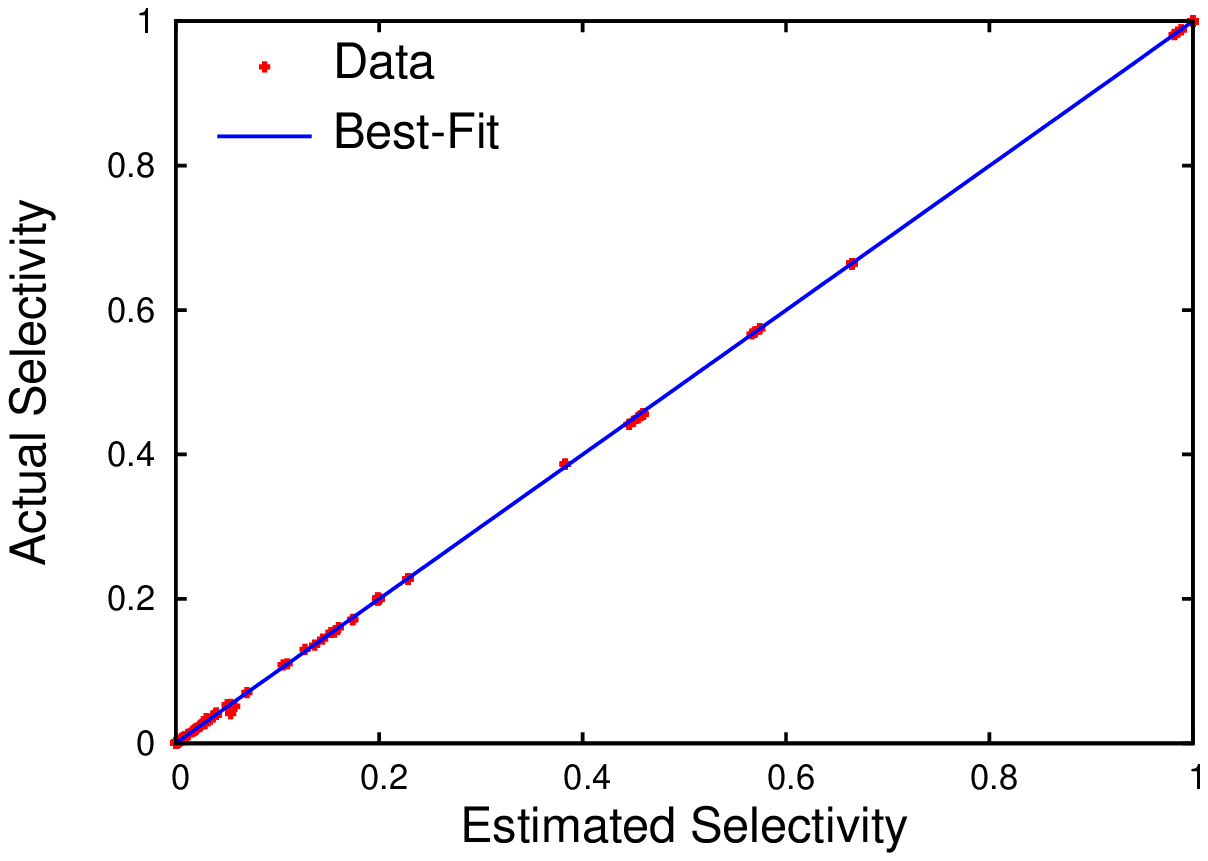}}
\subfigure[\textbf{TPCH}]{ \label{fig:cc:sel:tpch}
\includegraphics[width=0.68\columnwidth]{./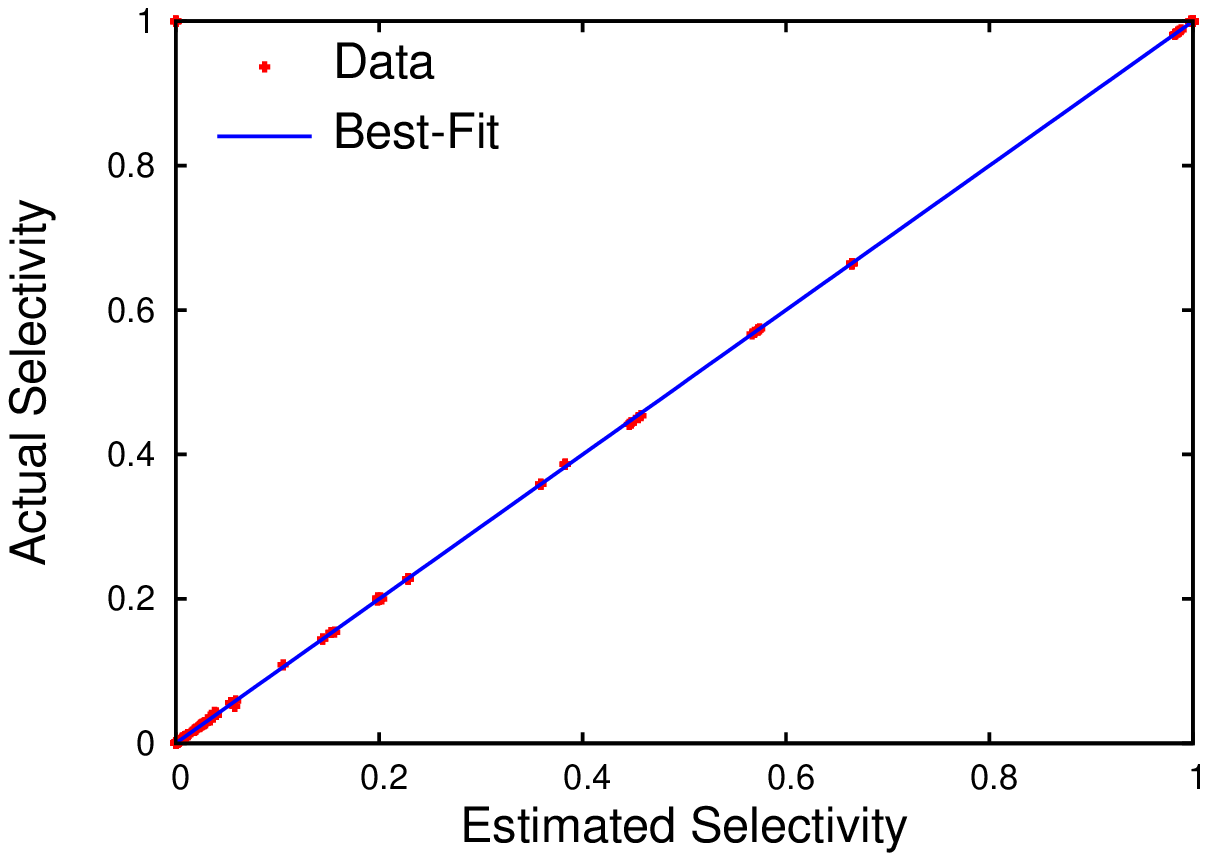}}
\caption{A case study of the correlations between the estimated and actual selectivities}
\label{fig:cc:sel}
\shrink
\end{figure*}

We observe that the correlations are not as good as that of the $t_q$'s in Table~\ref{tab:cc:pred-time}. In particular, there seems to be no linear correlations for the \textbf{SELJOIN} and \textbf{TPCH} queries by examining the corresponding $r_p$'s. One possible reason for this phenomenon is that the actual errors are usually too small. To verify this, in Table~\ref{tab:cc:sel} we present the correlations between the estimated and the actual selectivities, and in Table~\ref{tab:sel-rel-err} we compute the mean relative errors in the selectivity estimates, where the relative error of an estimated selectivity $\rho_n$ is defined as $\frac{|\rho_n-\rho|}{\rho}$. We find that the $r_p$'s between the estimated and actual selectivities are almost 1 for almost all the cases we tested, which suggests a very strong linear dependency. As a case study, in Figure~\ref{fig:cc:sel} we show the scatter plots of the \textbf{MICRO}, \textbf{SELJOIN}, and \textbf{TPCH} queries over skewed 1GB database on PC1 with SR = 0.05. We can see that the estimated selectivities are almost the same as the ground truths. On the other hand, the average relative errors are usually below 20\% according to Table~\ref{tab:sel-rel-err}. Note that our sampling based method cannot be very effective when the errors are too small unless we can have very large number of samples. This is because we estimate the variances of the distributions based on the observed variances in the samples. Since the samples are taken randomly, different batches of samples may present different sample variances although the variation should be small due to the strong consistency property. However, if the errors in selectivity estimates are already very small, then the small variation in sample variances now cannot be negligible. The impact on the correlations due to this variation is unpredictable since this variation is totally random. But the correlations here seem to be not very important, since based on the small variances we can still infer that the errors are small although we have no idea of the correlations. In Table~\ref{tab:cc:sel-big-err} we further present the correlations when the relative errors are above 0.2. We find that now the correlations are much better. This implies that the estimated errors are strongly correlated with the actual errors in selectivity estimates when the errors are relatively large.

\begin{table*}[t]
\centering
\begin{tabular}{|r||l|l||l|l||l|l|}
\hline
\multicolumn{1}{|c||}{$\quad$} & \multicolumn{2}{c||}{\textbf{MICRO}} & \multicolumn{2}{c||}{\textbf{SELJOIN}} & \multicolumn{2}{c|}{\textbf{TPCH}}\\
\hline
\multicolumn{1}{|c||}{SR} & \multicolumn{1}{c|}{PC1} & \multicolumn{1}{c||}{PC2} & \multicolumn{1}{c|}{PC1} & \multicolumn{1}{c||}{PC2} & \multicolumn{1}{c|}{PC1} & \multicolumn{1}{c|}{PC2}\\
\hline
\hline
\multicolumn{7}{|c|}{Uniform TPC-H 1GB Database}\\
\hline
0.01 & \textbf{0.6980} (1.0000) & 0.8223 (0.9952) & 0.9101 (0.8918) & 0.9038 (\textbf{0.0480}) & 0.9355 (0.9994) & 0.9293 (0.9689)\\
0.05 & 0.9920 (1.0000) & 0.9983 (1.0000) & 0.8423 (0.9536) & 0.8320 (0.9989) & 0.8732 (0.9987) & 0.8768 (1.0000)\\
0.1 & 1.0000 (0.9998) & 1.0000 (0.9942) & 0.8696 (0.9995) & 0.9133 (0.9999) & 0.8917 (0.9999) & 0.8784 (0.9999)\\
0.2 & 1.0000 (1.0000) & N/A (N/A) & 0.9747 (0.9706) & 0.9688 (1.0000) & 0.9795 (0.9999) & 0.9684 (0.9999)\\
0.3 & N/A (N/A) & N/A (N/A) & 0.9850 (0.9751) & 0.9884 (0.9754) & 0.9696 (1.0000) & 0.9841 (1.0000)\\
0.4 & N/A (N/A) & 0.9000 (0.9994) & 0.9693 (0.9847) & 0.9688 (1.0000) & 0.9708 (1.0000) & 0.9852 (1.0000)\\
\hline
\hline
\multicolumn{7}{|c|}{Skewed TPC-H 1GB Database}\\
\hline
0.01 & 0.7253 (0.9090) & 0.9067 (1.0000) & 0.7964 (\textbf{0.0291}) & 0.8693 (\textbf{0.0381}) & 0.9163 (0.9970) & 0.9478 (0.9554)\\
0.05 & 0.9903 (1.0000) & 0.9728 (0.9984) & 0.8863 (1.0000) & 0.8872 (0.9944) & 0.9104 (0.9998) & 0.9434 (0.9999)\\
0.1 & 0.9905 (1.0000) & 1.0000 (0.9981) & 0.9081 (0.9541) & 0.9413 (0.9751) & 0.9532 (0.9931) & 0.9602 (0.9999)\\
0.2 & 1.0000 (0.9847) & 1.0000 (0.9803) & 0.9966 (1.0000) & 0.9621 (0.9736) & 0.9326 (0.9679) & 0.9316 (0.9703)\\
0.3 & 1.0000 (0.9624) & 1.0000 (0.9999) & 0.9937 (1.0000) & 0.9951 (0.9000) & 0.9898 (0.9798) & 0.9933 (0.9976)\\
0.4 & 1.0000 (1.0000) & 1.0000 (0.9966) & 0.9954 (1.0000) & 0.9959 (0.9995) & 0.9637 (0.9721) & 0.9924 (0.9641)\\
\hline
\hline
\multicolumn{7}{|c|}{Uniform TPC-H 10GB Database}\\
\hline
0.01 & 0.9441 (0.9929) & 0.9950 (1.0000) & 0.7533 (0.9824) & 0.7147 (0.9697) & \textbf{0.6966} (\textbf{0.0532}) & 0.8128 (\textbf{0.0964})\\
0.02 & 0.9818 (0.9637) & 1.0000 (1.0000) & 0.7937 (0.9996) & 0.7226 (0.9982) & \textbf{0.6818} (\textbf{0.0452}) & 0.8085 (0.9947)\\
0.05 & 1.0000 (1.0000) & N/A (N/A) & 0.8029 (0.8149) & 0.7450 (0.7202) & 0.8093 (0.7995) & 0.8918 (0.9989)\\
0.1 & N/A (N/A) & N/A (N/A) & 0.9839 (0.9831) & 0.9913 (0.9994) & 0.9729 (0.9832) & 0.9818 (0.9996)\\
\hline
\hline
\multicolumn{7}{|c|}{Skewed TPC-H 10GB Database}\\
\hline
0.01 & 0.9805 (1.0000) & 0.8878 (0.9984) & 0.8917 (0.7953) & 0.8718 (\textbf{0.0572}) & 0.8814 (0.9996) & 0.9065 (\textbf{0.0617})\\
0.02 & 0.9759 (0.9280) & 0.9647 (0.9237) & 0.9617 (0.9994) & 0.8844 (0.7031) & 0.9236 (0.9999) & 0.9415 (0.9998)\\
0.05 & 0.8810 (0.9741) & 1.0000 (0.9991) & 0.7759 (0.9970) & 0.8701 (1.0000) & 0.8846 (0.9988) & 0.9834 (0.9999)\\
0.1 & 1.0000 (1.0000) & 1.0000 (1.0000) & 0.9873 (0.9999) & 0.9485 (1.0000) & 0.9919 (0.9999) & 0.9786 (0.9999)\\
\hline
\end{tabular}
\caption{$r_s$ ($r_p$) of selectivity estimates with relative errors above 0.2 (values below 0.7 are highlighted)}
\label{tab:cc:sel-big-err}
\shrink
\end{table*}

\end{document}